\documentclass{article}
\usepackage{fullpage}
\usepackage[T1]{fontenc}
\usepackage[utf8]{inputenc}
\usepackage[colorlinks=true,allcolors=blue]{hyperref}
\usepackage{amsmath, amssymb, amsthm, amsfonts, graphicx,color,subcaption, enumerate, bm, cleveref, enumitem, array}

\usepackage{multicol, multirow}
\usepackage{thmtools, thm-restate}
\usepackage{ragged2e}
\usepackage{lineno}
\sloppy

\usepackage{subfiles}

\newtheorem{theorem}{Theorem}[section]
\newtheorem{lemma}[theorem]{Lemma}
\newtheorem{corollary}[theorem]{Corollary}
\newtheorem{problem}[theorem]{Problem}
\newtheorem{claim}[theorem]{Claim}
\newtheorem{observation}[theorem]{Observation}
\newtheorem{definition}[theorem]{Definition}

\newcommand{\OO}{\widetilde{O}}
\newcommand{\Ot}{\OO}
\newcommand{\eps}{\varepsilon}
\newcommand{\R}{\mathbb{R}}
\newcommand{\cF}{\mathcal{F}}
\newcommand{\E}{\mathbb{E}}
\renewcommand{\emptyset}{\varnothing}

\newcommand{\dirb}{\vec{b}}
\newcommand{\dirB}{\overrightarrow{B}}
\newcommand{\dirBs}{\overrightarrow{\mathcal{B}}}
\newcommand{\dirX}{\overrightarrow{X}}
\newcommand{\dirLP}{\overrightarrow{\mathcal{LP}}}
\newcommand{\dirMB}{\overrightarrow{MB}}
\newcommand{\dirMBs}{\overrightarrow{\mathcal{MB}}}

\newcommand{\Ghat}{\widehat{G}}

\DeclareMathOperator{\Pdim}{\mathrm{Pdim}}
\DeclareMathOperator{\VCdim}{\mathrm{VCdim}}
\DeclareMathOperator{\Bunch}{\mathsf{Bunch}}

\newcommand{\bnd}{\partial}
\newcommand{\rdiv}{\mathcal{R}}
\newcommand{\dist}{d}

\def\poly{\operatorname{poly}}
\def\polylog{\operatorname{polylog}}

\newcommand{\email}[1]{\href{mailto:#1}{#1}}
\title{Subquadratic algorithms in minor-free digraphs: (weighted) distance oracles, decremental reachability, and more}
\author{Adam Karczmarz\thanks{University of Warsaw and IDEAS NCBR, Poland. \email{a.karczmarz@mimuw.edu.pl}. Partially supported by the ERC
CoG grant TUgbOAT no 772346 and the National Science Centre (NCN) grant no. 2022/47/D/ST6/02184.} \and Da Wei Zheng\thanks{University of Illinois Urbana-Champaign. \email{dwzheng2@illinois.edu}.}}
\date{}

\begin{document}

\maketitle

\begin{abstract}
Le and Wulff-Nilsen [SODA '24]  initiated a systematic study of VC set systems to unweighted $K_h$-minor-free directed graphs. We extend their results in the following ways:
\begin{itemize}
    \item We present the first application of VC set systems for \emph{real-weighted} minor-free digraphs to build the first \emph{exact} subquadratic-space distance oracle with $O(\log n)$ query time.
    All prior work using VC set systems only applied in \emph{unweighted} graphs and digraphs.
    \item We describe a unified system for analyzing the VC dimension of balls and the LP set system (based on Li--Parter [STOC '19]) of Le--Wulff-Nilsen [SODA '24] using \emph{pseudodimension}.
    This is a major conceptual contribution that allows for both improving our understanding of set systems in digraphs as well as improving the bound of the LP set system in directed graphs to $h-1$.
    \item We present the first application of these set systems in a \emph{dynamic} setting. Specifically, we construct decremental reachability oracles with subquadratic total update time and constant query time.
    Prior to this work, it was not known if this was possible to construct oracles with subquadratic total update time and polylogarithmic query time, even in planar digraphs.
    \item We describe subquadratic time algorithms for unweighted digraphs including (1) constructions of exact distance oracles, (2) computation of vertex eccentricities and Wiener index.
    The main innovation in obtaining these results is the use of dynamic string data structures.
\end{itemize}
\end{abstract}
\thispagestyle{empty}
\clearpage
\pagenumbering{arabic}

\section{Introduction}
Computing all-pairs shortest paths (APSP) in a weighted graph $G$, and related graph parameters concerning distances, such as the girth, diameter, radius, and Wiener index, are among the most fundamental and well-studied graph problems.
It is conjectured that computing APSP requires $n^{3-o(1)}$ time~\cite{WilliamsW18}, or more specifically $mn^{1-o(1)}$ time~\cite{LincolnWW18}, where $n$ and $m$ denote the numbers of vertices and edges of $G$, respectively.
These conditional lower bounds imply the best-known algorithms for APSP~\cite{Pettie04, Williams18} are near-optimal.

Surprisingly, computing most of the aformentioned single-number graph parameters is as hard as solving APSP~\cite{LincolnWW18, WilliamsW18}.
Computing the diameter, whose relationship to APSP remains unclear, is an exception here.
Still, computing it requires $n^{2-o(1)}$ time unless the Strong Exponential Hypothesis fails~\cite{RodittyW13}. This lower bound matches the APSP upper bound for sparse graphs up to subpolynomial factors. 

Improved, truly subcubic algorithms for APSP and related problems can be obtained for \emph{unweighted} or small-weighted dense graphs via algebraic techniques, e.g.,~\cite{CyganGS15, RodittyW11, Seidel95, Zwick02}.
However, for sparse graphs, we lack non-trivial algorithms even in the unweighted case, and even if the graph is undirected. While for APSP such achievements are impossible since the output has quadratic size, for graph parameters this is not an issue.
Nevertheless, truly subquadratic algorithms for the aforementioned graph parameters are ruled out in the unweighted undirected case as well by believable conjectures~\cite{AbboudWW16, LincolnWW18, RodittyW13}.

Since APSP is uninteresting in sparse graphs, one often considers a more flexible variant, the \emph{distance oracle} problem. A distance oracle is a data structure answering distance queries for pairs of vertices in~$G$.
When designing distance oracles one would like to optimize the space and the query time, although keeping the preprocessing time small is also desirable for applications.
That being said, non-trivial (i.e., with sublinear query time) exact distance oracles with subquadratic preprocessing are unlikely to exist for sparse graphs (by, e.g.,~\cite{PatrascuR14,DalirrooyfardJW22}) and whether non-trivial subquadratic-space oracles are possible for sparse graphs remains a very interesting open problem.

\paragraph{Graph classes.} The impossibility results in sparse graphs
motivate the search for algorithms exploiting structural properties of special graph classes, especially if one is interested in \emph{exact} answers.
For instance, distance oracles and computing distance-related graph parameters have been studied extensively in planar (directed) graphs, e.g.,~\cite{FakcharoenpholR06, MozesNNW18, MozesS12}. The breakthrough application of Voronoi diagrams~\cite{Cabello19} led to subquadratic algorithms for diameter, radius, Wiener index~\cite{Cabello19, GawrychowskiKMS21}, and exact distance oracles with $n^{1+o(1)}$ space and $O(\polylog{n})$ query time, and $\Ot(n^{3/2})$ preprocessing time~\cite{CharalampopoulosGLMPWW23}. All these results assume the most general setting of real-weighted directed planar graphs.

The most powerful techniques that exploit the planarity of the graphs for distance problems, like the Monge property (used e.g. in~\cite{FakcharoenpholR06}), or Voronoi diagrams, are sometimes applicable to higher genus graphs but seem completely hopeless beyond that, e.g., for apex graphs or, even more generally, $K_h$-minor-free graphs (which are sparse for constant $h$). 

In this paper, we study distance oracles and related graph parameters computation in $K_h$-minor-free \emph{directed} graphs, for $h=O(1)$.
 
\paragraph{$K_h$-minor-free digraphs and VC set systems.} One important structural property of planar graphs that seamlessly transfers to minor-free graphs is the existence
of efficiently computable $O(\sqrt{n})$-sized balanced separators.
This is enough for obtaining subquadratic solutions for some of the aforementioned problems, such as computing the girth or constructing a subquadratic-space distance oracle with sublinear query time~\cite{djidjev1996efficient}.
However, balanced separators alone seem not powerful enough to break through the quadratic barrier for other graph parameters like diameter, which, in the case of planar graphs, was first enabled by Voronoi diagrams \cite{Cabello19}.

Not long after the Voronoi diagrams breakthrough for planar graphs, a new promising algorithmic approach to distance problems emerged: \emph{VC set systems}.
First applied by Li and Parter~\cite{LiP19},
analyzing the number of possible \emph{distance patterns} via bounding their VC dimension provided an alternative way to compute the diameter of a planar graph.
While the approach worked only in the \emph{unweighted undirected} case and did not yield as efficient algorithms as the Voronoi diagrams machinery, it was robust and simple enough to be applicable e.g. in the distributed setting~\cite{LiP19}.
This robustness yields applicability of the approach to more general graph classes. Using different set systems~\cite{ChepoiEV07}, \cite{DucoffeHV22} showed that diameter and radius (or, more generally, the $n$ vertex eccentricities%
\footnote{The eccentricity of a vertex is the maximum distance from that vertex to another.
Diameter is the maximum vertex eccentricity, whereas radius is the minimum one.})
can be computed in subquadratic time in undirected unweighted graphs with both sublinear balanced separators and constant distance VC dimension, a class that includes $K_h$-minor-free graphs.
Recently, Le and Wulff-Nilsen~\cite{LeW24} explored the approach further specifically for $K_h$-minor-free graphs, and gave improved algorithms (compared to~\cite{DucoffeHV22}) for computing the distance-related graph parameters. In particular, they gave the first subquadratic algorithm for the Wiener index in minor-free graphs.

More importantly,~\cite{LeW24} applied the VC set systems approach to \emph{directed} minor-free graphs for the first time.
They did so by analyzing and bounding the VC dimension of set systems used in~\cite{DucoffeHV22}~and~\cite{LiP19} in the directed case, also giving a more general variant of the latter. 
Building on that, in unweighted $K_h$-minor-free directed graphs they obtained first subquadratic algorithms for computing vertex eccentricities. They also described a distance oracle with subquadratic space and $O(\log{n})$ query time. 

\subsection{An algorithmic template}\label{sec:template}

Before we discuss our contributions, it is useful to first explain a high-level overview of how VC set systems have been used to design subquadratic algorithms for minor-free graphs.
(In this section, to serve both as a warmup and to make our definitions concrete, we give a sketch of how to construct an $O(1)$-query time reachability oracle in a $K_h$-minor-free digraph $G$ as a running example.\footnote{To the best of our knowledge, this easy construction has not been described prior to our work. See also~\Cref{sec:approx} for a description of a more general $(1+\eps)$-approximate distance oracle for unweighted digraphs that is very similar.} Note that reachability can be viewed as a finite approximation to distances.)
We begin by defining the two main tools involved: VC set systems and $r$-divisions.

\paragraph{Set systems.} A set system is formally a collection $\mathcal{F}$ of subsets of a ground set $\mathcal{U}$. A well-known Sauer-Shelah lemma states that if the VC dimension of $\mathcal{F}$ is bounded by $d$, then $|\mathcal{F}|=O\left(|\mathcal{U}|^d\right)$. 
Importantly, the VC dimension does not increase when \emph{restricting} the ground set to its subset ${\mathcal{U}'\subset \mathcal{U}}$. As a result, the upper bound $d$ on the VC dimension of $\mathcal{F}$ also implies $|\mathcal{F}'|=O\left(|\mathcal{U}'|^d\right)$, where \linebreak
$\mathcal{F}':=\{S\cap \mathcal{U}':S\in \mathcal{F}\}$ is called the restriction of $\mathcal{F}$ to $\mathcal{U}'$.

A set system $\mathcal{S}$ used in distance-related applications typically uses $V$, or $V$ annotated with some auxiliary data, as the ground set. For example, the set system $\dirBs_G\subseteq 2^V$ used in~\cite{DucoffeHV22, LeW24} consists of \emph{directed balls} in $G$, ranging over all possible centers in $V$ and radii in $\mathbb{R}_{\geq 0}$. 
(For constructing a reachability oracle, we will use the set of balls centered at every vertex of $V$ with infinite radius.)

\paragraph{$r$-divisions.} The second crucial ingredient is the so-called $r$-division~\cite{Frederickson87} of $G$. For any $r\in [1,n]$, an $r$\nobreakdash-division $\rdiv$ of $G$ is a collection of $O(n/r)$ edge-induced subgraphs of~$G$, called \emph{pieces}, satisfying the following. First, the union of the pieces in $\rdiv$ equals~$G$. Second, each piece $P\in\rdiv$ has $O(r)$ edges.
Moreover, the \emph{boundary} $\bnd{P}$ of a piece $P$, defined as the subset of vertices of $P$ shared with some other piece in $\rdiv$, has size $O(\sqrt{r})$.
The boundary $\bnd{\rdiv}$ of $\rdiv$ is defined as the union of the individual pieces' boundaries.
An $r$-division of a $K_h$-minor-free graph (for $h=O(1)$) exists for any~$r$ and can be computed in $O(n^{1+\eps})$ time, by plugging in the balanced separator algorithm of~\cite{KawarabayashiR10} into the partitioning algorithm of~\cite{Frederickson87}.

\paragraph{Combining the two.} Let $\rdiv$ be an $r$-division of $G$, and let $\mathcal{S}$ be a set system of VC dimension $O(1)$. 
Roughly speaking, we define the set of \emph{patterns} of a piece $P\in\rdiv$ to be the restriction of $\mathcal{S}$ to either $V(P)$ or $\bnd{P}$, whichever better suits the application.
The first basic objective when choosing the system is that the number of patterns per piece is $\poly(r)$, and restricting a set system with constant VC dimension to the piece's (boundary) vertices ensures this.
(For our running example of a reachability oracle, we can choose the set of all balls $\dirBs_G$ which has VC dimension $h-1$. 
Each pattern in this set system is the set of vertices in $\bnd{P}$ reachable from some vertex $v\in G$.
By the Sauer-Shelah lemma, the number of patterns for this set system is $O\left(|\bnd{P}|^{h-1}\right) = O\left(r^{(h-1)/2}\right)$.)

Additionally, for any piece $P\in\rdiv$ and vertex $s\notin V(P)$, one should be able to map $(s,P)$ to one or more (but surely $o(r)$) patterns $\delta$ of the piece $P$, so that some desired information about the distances from $s$ to $V(P)$ can be decoded in $o(r)$ time based exclusively on:
\begin{enumerate}[label=(\arabic*)]
    \item ``global'' properties from $s$ to $\bnd{P}$ in $G$ like reachability or distances,
    \item ``local'' data associated with the pattern(s) $\delta$ that is computable in polynomial time for each such pattern based on $\delta$ and $P$ only and once. (In our running example, we can let $\delta$ be the set of reachable nodes from $s$ on the boundary $\bnd{P}$. This uniquely determines the set of reachable nodes from $s$ within~$P$.)
\end{enumerate}
The information of the former kind can be computed and stored in $\Ot(n^2/\sqrt{r})$ time and space through all pairs $(s,P)$.
Computing and storing the local data takes $O(n\poly(r))$ time and space. 
Therefore, the total space required  is $O(n^2/\sqrt{r} + n\poly(r))$.
For a sufficiently small-polynomial value of $r$, 
handling both the local and the global data takes $n^{2-\Theta(1)}$ time.
(For reachability, the pattern's local data is computed in $O(r)$ time, so the oracle's construction takes $O(n^2/\sqrt{r}+nr^{(h-1)/2})$ time. Setting $r=n^{2/h}$ yields $O(n^{2-1/h})$ time.)

Given the local and global data above, another challenge lies in the subquadratic computation of patterns that the $O(n^2/r)$ pairs $(s,P)$ map to. 
(For computing reachable set of vertices this is trivial to do in $O(n^2/\sqrt{r})$ time.)
Finally, extracting the information from local and global data for all pairs $(s,P)$ should also take subquadratic time. (Given a set of vertices $\delta$ on the boundary of a piece $\bnd P$ that are reachable from a vertex~$v$, deciding whether $w\in V(P)$ is reachable from $v$ in $G$ amounts to testing the membership $w\in \delta$. This takes $O(1)$ time, since $\delta$ is explicitly stored.)

\subsection{Our contribution}
In this paper, we further investigate the capabilities of VC set systems (and generalizations of these set systems) for solving distance-related problems in \emph{$K_h$-minor-free digraphs}. We prove results in weighted, unweighted and dynamic settings.

\paragraph*{A weighted distance oracle.}
First of all, for the first time, we show the applicability of the technique to an exact distance-related problem with real edge weights (but no negative cycles). Specifically, we show:
\begin{theorem}\label{thm:weighted_do}
    Let $G$ be a real-weighted $K_h$-minor digraph. There exists an exact distance oracle for $G$ using $O\left(n^{2-1/(4h-1)}\right)$ space and supporting arbitrary-pair distance queries in $O(\log{n})$ time.
\end{theorem}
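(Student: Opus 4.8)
The plan is to follow the template of \Cref{sec:template}, with the patterns of a piece being additively-weighted Voronoi partitions of its vertex set. Fix a parameter $r$ and compute an $r$-division $\rdiv$ of $G$ with $O(n/r)$ pieces. Precompute: (i)~$\dist_G(b,b')$ for all $b,b'\in\bnd\rdiv$, using $O(n^2/r)$ space; (ii)~$\dist_G(s,b)$ for every $s\in V$ and $b\in\bnd\rdiv$, stored as an $n\times O(n/\sqrt r)$ array, using $O(n^2/\sqrt r)$ space; (iii)~inside each piece $P$, the distances $\dist_P(b,v)$ for $b\in\bnd P$, $v\in V(P)$, together with $\dist_G(u,v)$ for every pair $u,v$ lying in a common piece --- since $\sum_{P}|V(P)|=O(n)$ the latter is only $\sum_P O(|V(P)|^2)=O(nr)$ values. (Shortest paths are well defined as $G$ has no negative cycles.) For a query $(u,v)$: if $u,v$ share a piece, answer from (iii); otherwise pick any piece $P$ with $v\in V(P)$, so that $u\notin V(P)$. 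Since $\bnd P$ separates $V(P)\setminus\bnd P$ from the rest of $G$, any shortest $u\to v$ walk last enters $V(P)$ at a boundary vertex and afterwards stays inside $P$; hence
\[
  \dist_G(u,v)=\min_{b\in\bnd P}\bigl(\dist_G(u,b)+\dist_P(b,v)\bigr).
\]
So it suffices to identify the minimizer $b^\ast(u,v)\in\bnd P$ quickly, after which the answer $\dist_G(u,b^\ast)+\dist_P(b^\ast,v)$ is read off from (ii) and (iii) in $O(1)$ time.

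\textbf{Patterns.} For a source $s$ and a piece $P$ with $s\notin V(P)$, the map $v\mapsto b^\ast(s,v)$ on $V(P)$ is the additively-weighted nearest-site assignment with sites $\bnd P$ and additive weights $\omega_s(b)=\dist_G(s,b)$ (ties broken by a fixed rule); call it the pattern $\pat(s,P)$. It is determined by the signs, over all $(b,b',v)\in\bnd P\times\bnd P\times V(P)$, of $\bigl(\dist_G(s,b)-\dist_G(s,b')\bigr)-\bigl(\dist_P(b',v)-\dist_P(b,v)\bigr)$, equivalently by the trace on this ground set of $\{(b,b',v):\dist_G(s,b)-\dist_G(s,b')\le\dist_P(b',v)-\dist_P(b,v)\}$. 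For each piece we store, once per \emph{distinct} pattern realized by a source, the pattern as a flat array indexed by $V(P)$ (so locating $b^\ast(s,v)$ for a query target is a single array access), plus a table mapping each $(s,P)$ to the index of $\pat(s,P)$. Writing $N_r$ for the maximum number of distinct patterns of one piece, the index table uses $O(n^2/r)$ space and the pattern arrays use $O\bigl(\tfrac{n}{r}\cdot N_r\cdot r\bigr)=O(n\,N_r)$ space. (For the theorem only space and query time matter, so it is enough to bound $N_r$; we do not address constructing the patterns.)

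\textbf{Bounding $N_r$, and the budget.} This is the crux. By the trace characterization, $N_r$ is at most the number of distinct restrictions, to the $\poly(r)$-size ground set $\bnd P\times\bnd P\times V(P)$ (or a suitable refinement), of a set system that assigns to each source $s$ a set of the form $\{(b,b',v):\dist_G(s,b)-\dist_G(s,b')\le\dist_P(b',v)-\dist_P(b,v)\}$ --- a directed, additively-weighted relative of the Li--Parter set system $\dirLP$. Analyzing this system through \emph{pseudodimension} and bounding it by $O(h)$ (the paper's unified framework establishes the bound $h-1$ for $\dirLP$), the Sauer--Shelah lemma gives $N_r=O(r^{c})$ for a constant $c=c(h)$. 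The total space then becomes $O\!\left(n^2/\sqrt r+nr+n\,r^{c}\right)$, which for $r=n^{2/(2c+1)}$ is minimized at $n^{2-1/(2c+1)}$ (the $nr$ term being smaller). Carrying out the pseudodimension analysis with the correct ground set gives $c=2h-1$, hence space $O\!\left(n^{2-1/(4h-1)}\right)$; the query time is $O(\log n)$, dominated by dictionary lookups.

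\textbf{Main obstacle.} The single non-routine step is the bound $N_r=O(r^{2h-1})$. One must (a)~isolate a set system whose trace on a piece encodes the additively-weighted nearest-site assignment while keeping the real-valued thresholds $\dist_P(b',v)-\dist_P(b,v)$ under control, so that finitely many sets suffice; and (b)~bound its VC dimension / pseudodimension using $K_h$-minor-freeness tightly enough that the Sauer--Shelah exponent in $r$ is exactly $2h-1$ --- it is this exponent that yields the denominator $4h-1$. Everything else --- the separator-based $r$-division, the three distance tables, and the $O(1)$-per-step query decoding --- is routine once this combinatorial bound is available.
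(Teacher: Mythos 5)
Your overall architecture (three distance tables, reduction of a cross-piece query to $\min_{b\in\bnd P}(\dist_G(u,b)+\dist_P(b,v))$, and a per-piece ``pattern'' dictionary) matches the paper's, but you then diverge: you take the pattern of $(s,P)$ to be the \emph{entire} additively-weighted Voronoi assignment $v\mapsto b^\ast(s,v)$ on $V(P)$, stored as a flat array so that a query is a single lookup. The paper never bounds the number of such full assignments. Instead, for each base boundary vertex $v_j$ it stores one multiball vector $Y_{s,j}=\dirMB(s,\dist(s,v_j),\Delta_j)|_{\bnd P}$, which only encodes, for each $(v_i,t)$, whether the route through $v_i$ beats the route through $v_j$; the minimizer is then located at query time by an adaptive $O(\log n)$-round min-finding procedure (Fredman's trick plus a quickselect-style search, derandomized with fixed permutations). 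This is precisely what lets the paper get away with counting only \emph{individual} single-base-vertex vectors rather than the whole Voronoi partition.

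The gap in your proposal is the bound $N_r=O(r^{2h-1})$ on the number of distinct Voronoi assignments per piece, which you correctly identify as the crux but incorrectly assert follows from the paper's pseudodimension framework. The set system that determines the full assignment consists of sets of the form $\{(b,b',v):\dist_G(s,b)-\dist_G(s,b')\le \dist_P(b',v)-\dist_P(b,v)\}$, i.e.\ it involves differences $\dist_G(s,b)-\dist_G(s,b')$ over \emph{all pairs} of boundary vertices. The systems the paper controls ($\dirLP_{G,M}$ and $\dirMBs_{G,\Delta}$) anchor every comparison at a \emph{single} base radius: $\dirMB(s,\rho,\Delta)$ records only which bucket of $\Delta$ each value $\dist(s,b)-\rho$ falls into, which does not determine the pairwise differences $\dist(s,b)-\dist(s,b')$. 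The all-pairs system is the ``product'' of $\ell=O(\sqrt r)$ multiball systems with coupled but distinct base radii, and the only bound this yields on the number of distinct tuples $(Y_{s,1},\dots,Y_{s,\ell})$ is $\min\bigl(n,\,O(r^{2(h-1)})^{\ell}\bigr)$, which is superpolynomial in $r$; the trivial bound $N_r\le n$ gives pattern-array space $O((n/r)\cdot n\cdot r)=O(n^2)$, killing subquadraticity. No VC/pseudodimension bound for the all-pairs difference system (equivalently, for the family of additively-weighted Voronoi partitions) is proved in the paper or, to my knowledge, known; this is exactly the obstacle the paper's min-finding query procedure is designed to sidestep. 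As written, your steps (a) and (b) are therefore not a proof sketch of a known lemma but an open problem, and the theorem does not follow from your argument.
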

To the best of our knowledge, all the previous applications of VC set systems~\cite{DucoffeHV22, LeW24, LiP19} were inherently tailored to unweighted graphs%
\footnote{\cite{DucoffeHV22} extended their results on computing diameter to integer weighted graphs, but had dependence in $\log M$ where $M$ was the size of the largest weight. Their algorithm could only give $(1+\eps)$-approximations for real-weighted graphs.}. 
In particular, the comparable distance oracles for $K_h$-minor-free digraphs~\cite{LeW24} crucially relied on the fact that $G$ is unweighted.

One can obtain a distance oracle for weighted $K_h$-minor-free graphs with $O(n^2/\sqrt{r})$ space and $O(\sqrt{r})$ query time for any $r\in [1,n]$ using $r$-divisions alone~\cite{djidjev1996efficient}, but this does not achieve polylogarithmic query times while using subquadratic space, or even a subquadratic space-query time product. Such a tradeoff (in fact an almost optimal one) is only known for planar graphs ~\cite{CharalampopoulosGLMPWW23}.

\paragraph*{A unified system with an improved pseudodimension bound.}
\cite{LeW24} analyzed two distance VC set systems for directed minor-free graphs: the aforementioned $\dirBs_G$ (for ``balls'', as used in~\cite{DucoffeHV22}) and $\dirLP_{G,M}$ (for Li-Parter, inspired by~\cite{LiP19}); see~\Cref{sec:setsystems} for details. They proved that their VC dimensions
are bounded by $h-1$ and~$h^2$, respectively; the arguments used for the two bounds differed.

\cite{LeW24} used patterns based on the $\dirBs_G$ system to obtain a distance oracle for unweighted graphs with space $O(n^{2-1/(2h-1)})$ and $O(\log{n})$ query time\footnote{\cite[Section~4.3.1]{LeW24} miscalculated the space consumption of their directed oracle to be $O(nr^{h-3/2}+n^2/\sqrt{r})$, which led the bound $O(n^{2-1/(2h-2)})$. Their oracle actually uses $O(nr^{h-1}+n^2/\sqrt{r})$ space, which leads to the $O(n^{2-1/(2h-1)})$ bound.},
leaving whether such an oracle can be constructed in subquadratic time unsettled.
They also showed how the patterns based on the $\dirLP_{G,M}$ system can be used for computing all the eccentricities of a minor-free digraph in $\Ot\left(n^{2-1/(3h^2+6)}\right)$ time.

We propose a system $\dirMBs_{G,\Delta}$ of \emph{multiball vectors} (discussed in detail in~\Cref{sec:setsystems}) that generalizes both $\dirBs_G$ and $\dirLP_{G,M}$
These objects can be thought of overlaying multiple balls centered at the same vertex with fixed differences in radii.
Our main contribution with this system is predominantly conceptual. 
These multiball vectors are more natural than the Li-Parter set systems used in \cite{LeW24}, are used in the same way for bounding the number of patterns, and are straightforward to provide dimension bounds for.
In particular, we prove that the \emph{Pollard pseudodimension}~\cite{Pollard90} (which constitutes a generalization of the VC dimension) of \emph{multiball vectors} is at most $h-1$. 
This shows the upper bound on the VC dimension of $\dirLP_{G,M}$ is $h-1$ as well, improving the $h^2$ bound shown by~\cite{LeW24}, and removes the discrepancy between the bound they attain in undirected vs directed graphs.
It also immediately implies a tighter running time bound of $\Ot\left(n^{2-1/(3h+3)}\right)$ of the  algorithm for computing eccentricities given in~\cite{LeW24}.

\paragraph*{Faster algorithms for unweighted digraphs.}
While the VC dimension bounds imply limited numbers of distinct patterns per piece, the patterns are still polynomial in size, e.g., of size $\Theta(\sqrt{r})$.
Processing the patterns efficiently is non-trivial and critical if one hopes for subquadratic running time in the end.

We delve into how the distance patterns can be efficiently computed and manipulated.
We observe that representing their piecewise collections using the dynamic strings data structures~\cite{AlstrupBR00, GawrychowskiKKL18, MehlhornSU97} is very helpful to that end.
This idea not only leads to an improved eccentricities algorithm for $K_h$-minor-free digraphs but also enables obtaining first subquadratic bounds for computing the Wiener index and constructing a distance oracle with $O(\polylog(n))$ query. Formally, we show the following:

\begin{restatable}{theorem}{thmunweightedoracle}\label{thm:unweighted-oracle}
Let $G$ be an unweighted digraph that is $K_h$-minor-free. Then, in $\Ot\left(n^{2-1/(3h-2)}\right)$ time one can construct an $O\left(n^{2-1/(3h-2)}\right)$-space exact distance oracle for $G$ with $O(\log{n})$ query time.
\end{restatable}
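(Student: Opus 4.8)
**Plan for proving Theorem \ref{thm:unweighted-oracle} (subquadratic construction of an $O(\log n)$-query exact distance oracle in unweighted $K_h$-minor-free digraphs).**

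The plan is to follow the algorithmic template of \Cref{sec:template}, instantiated with the multiball-vector system $\dirMBs_{G,\Delta}$ (equivalently the $\dirLP$ system), whose pseudodimension is at most $h-1$, and to do all pattern bookkeeping with dynamic string data structures so that the total time matches the total space. Fix an $r$-division $\rdiv$ of $G$; this costs $O(n^{1+\eps})$ time. For each piece $P$ and each boundary vertex $b\in\bnd P$ we want to record, for every vertex $v\in V(P)$, the quantity $\dist_P(b,v)$ (or rather a normalized version of it), and dually $\dist_P(v,b)$. Since $G$ is unweighted, once we know the distances $\dist_G(s,b)$ from an outside source $s$ to all boundary vertices $b\in\bnd P$, the distance $\dist_G(s,v)$ to an interior vertex $v$ equals $\min_{b\in\bnd P}\bigl(\dist_G(s,b)+\dist_P(b,v)\bigr)$ up to the usual argument that a shortest path leaving $s$ and reaching $V(P)$ must pass through $\bnd P$ (and the all-inside case is handled by also storing $\dist_P$ for the piece containing $s$). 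The vector of values $\bigl(\dist_P(b,v)-\dist_P(b_0,v)\bigr)_{b\in\bnd P}$, as $v$ ranges over $V(P)$, is exactly (a shift of) a multiball vector restricted to $\bnd P$; by the pseudodimension bound $h-1$ and the pseudodimension analogue of Sauer--Shelah, the number of distinct such vectors — i.e., distance patterns of $P$ — is $O(r^{(h-1)/2}\cdot \poly(r))$, and in the unweighted case the "$\poly(r)$" slack coming from the bounded number of distinct coordinate values is $O(\sqrt r)$, giving $O(r^{h-1/2})$ patterns, hence $O(n r^{h-1/2})$ total across all pieces — wait, that is $O(n r^{h-1})$ after accounting for $|\bnd P|=O(\sqrt r)$ coordinates per pattern; store each pattern once, together with its "local data": for each pattern $\delta$, an array over $V(P)$ giving, for a hypothetical source achieving boundary-distance profile $\delta$, the induced distance to each interior vertex. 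This local data is computable by $O(\sqrt r)$ single-source shortest-path runs inside $P$ per pattern, i.e. $O(r^{3/2})$ time per pattern, so $O(n r^{h-1}\cdot r^{3/2}/r)=O(nr^{h})$ — this is where the exponent in the statement comes from; balancing $n r^{\Theta(h)}$ against $n^2/\sqrt r$ yields $r=n^{\Theta(1/h)}$ and a final bound $\Ot(n^{2-1/(3h-2)})$, matching the claim when the constants in the polynomial are tracked carefully.

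The two nontrivial algorithmic steps, both of which I would handle with dynamic strings, are: (i) for each of the $O(n^2/r)$ pairs $(s,P)$, computing the pattern $\delta(s,P)$ of $P$ that the profile $\bigl(\dist_G(s,b)\bigr)_{b\in\bnd P}$ maps to, and (ii) linking that profile to the precomputed local data so that a query can be answered in $O(\log n)$. For (i): run multi-source BFS / Dijkstra-from-BFS in $G$ to get all $\dist_G(s,b)$; this is $\Ot(n^2/\sqrt r)$ total over all $s$ and all boundary vertices. The subtlety is that we must recognize which already-seen pattern the length-$O(\sqrt r)$ profile equals. Here I would, for each piece $P$, maintain the collection of its distance patterns as strings in a dynamic string structure supporting $O(\polylog)$ equality testing (via the Karp--Rabin / persistent-balanced-tree machinery of \cite{AlstrupBR00, GawrychowskiKKL18, MehlhornSU97}); then identifying the pattern of a new profile is a handful of string concatenations/comparisons, $\Ot(\sqrt r)$ time, $\Ot(n^2/\sqrt r)$ total. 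For (ii): having identified $\delta(s,P)$, we need the distance $\dist_G(s,v)=\dist_G(s,b^\*)+\dist_P(b^\*,v)$ where $b^\*$ is the boundary vertex realizing the minimum; the standard trick is that within a pattern the "decoding map" $v\mapsto \dist_P(b^\*,v)$ together with the additive shift can be stored so that a query $(s,v)$ just looks up the stored interior distance for pattern $\delta(s,P)$ at $v$ and adds the stored shift — $O(1)$ after an $O(\log n)$ search to locate $v$'s piece(s) and the relevant pattern id. Total query time $O(\log n)$.

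The main obstacle I anticipate is \textbf{step (ii)'s space/time accounting}: storing, for every pattern of every piece, an explicit length-$\Theta(r)$ array of interior distances costs $O(r)$ per pattern, i.e. $\Theta(n r^{h-1}\cdot r)=\Theta(n r^h)$ space, which would dominate and break the space bound unless $r$ is taken so small that the $n^2/\sqrt r$ term becomes the bottleneck — this is exactly why the exponent is $2-1/(3h-2)$ and not something closer to $2-1/(2h-1)$ as in the space-only oracle of \cite{LeW24}, and getting the precise polynomial requires care. The remedy, and the place where dynamic strings genuinely help, is to \emph{not} store these arrays independently: many patterns of the same piece agree on long contiguous (in BFS-layer order) stretches of interior vertices, so one represents the family of interior-distance arrays of a piece as a persistent dynamic string, sharing common substrings, which compresses the per-piece local data from $O(\#\text{patterns}\cdot r)$ down to $\Ot(\#\text{patterns}\cdot\sqrt r + r)$ and restores the claimed bound; one must verify that the persistent structure still supports $O(\log n)$ positional access for queries. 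A secondary obstacle is making the pseudodimension-based pattern count rigorous for the \emph{difference} vectors (one must fix a reference boundary vertex $b_0$ and argue shattering of the shifted system, and separately bound the number of distinct coordinate values by $O(\sqrt r)$ using that $P$ has $O(r)$ vertices and is connected, so diameters within $P$ are $O(r)$ but the relevant \emph{differences} across $\bnd P$ lie in a range of size $O(\sqrt r)$); this is the content imported from \Cref{sec:setsystems} and I would cite it rather than reprove it.
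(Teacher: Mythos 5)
Your high-level plan (use an $r$-division, identify patterns on $\bnd P$ via a multiball/pseudodimension argument, and bookkeep with dynamic strings) points in the right direction, but several load-bearing claims are wrong and the decisive technical idea is missing.

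First, your claim that ``the relevant \emph{differences} across $\bnd P$ lie in a range of size $O(\sqrt r)$'' is false, and there is no apparent reason for it. Even when $P$ has $O(r)$ vertices, the differences $\dist_G(s,b)-\dist_G(s,b_0)$ between boundary distances from an \emph{external} source $s$ can be arbitrarily large: if the piece is not strongly connected (or simply not all boundary vertices are mutually close inside $P$), two boundary vertices can have wildly different distances from $s$ in $G$. Consequently, one cannot take a single proxy pattern $\bigl(\dist_G(s,b)-\dist_G(s,b_0)\bigr)_{b\in\bnd P}$ per pair $(s,P)$ with coordinate values in a bounded range; the generalized Sauer--Shelah bound (\Cref{thm:gen_sauer}, \Cref{cor:restrict}) requires the number of distinct coordinate values $\ell$ to be polynomial in $r$, and your $O(\sqrt r)$ estimate does not hold. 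The paper resolves this with a nontrivial device: pivot indices $\Pi_{P,u}$ that partition $\bnd P$ into windows where boundary distances differ by $<r$, and a family of up to $\Theta(\sqrt r)$ clipped proxy patterns $\delta_{P,u,1},\dots,\delta_{P,u,\ell}\in\{-r,\dots,r\}^{\bnd P}$ per pair $(u,P)$ (together with \Cref{l:interval-ball-unique} and \Cref{lem:ball-pattern}, which justify that the balls are still uniquely determined by these clipped patterns). Your proposal does not contain this idea, and without it you cannot bound the number of patterns by $\poly(r)$, nor afford to write them all down.

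Second, your arithmetic is off. The correct per-piece pattern count is $O\bigl((|\bnd P|\cdot|\Delta|)^{h-1}\bigr)=O(r^{3(h-1)/2})$ (ground set size $O(\sqrt r)$ times alphabet size $O(r)$, each raised to the pseudodimension $h-1$), not $O(r^{h-1/2})$; and $r^{(h-1)/2}\cdot\sqrt r=r^{h/2}$, not $r^{h-1/2}$. The balance that gives $2-1/(3h-2)$ is $n^2/\sqrt r = n\cdot r^{3(h-1)/2}$, i.e. $r=n^{2/(3h-2)}$, which you will not reach from your estimates.

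Third, the obstacle you identify for step (ii) is a red herring, and your proposed remedy (a persistent shared-substring representation of interior-distance arrays) is not how the paper proceeds and is itself unverified. The paper simply stores the ball-piece intersections explicitly, costing $O(nr^{h-1})$ space, and chooses $r$ to balance \emph{time} rather than space (the oracle of \Cref{thm:unweighted-oracle} uses slightly more space than the space-optimized oracle of~\cite{LeW24}). Dynamic strings are used for two other things entirely: (a) after the pivot construction, consecutive proxy patterns $\delta_{P,u,i}\to\delta_{P,u,i+1}$ differ by amortized $O(1)$ coordinate changes, so all their fingerprints can be produced in $\Ot(\sqrt r)$ time per $(u,P)$; and (b) the $O(\sqrt r)$ required balls per pattern are nested (\eqref{eq:balls-inclusion}), so their fingerprints can all be computed with $O(r)$ total toggles in a second dynamic-strings structure $\tcollb_P$ rather than by $O(\sqrt r)$ independent $\Ot(r)$-time set constructions. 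Without these two uses, the construction time does not drop to $\Ot(n^{2-1/(3h-2)})$. As written, your proposal has a genuine gap in the general (non-strongly-connected-piece) case and does not reach the claimed bound.
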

The oracle behind~\Cref{thm:unweighted-oracle} is the same as that given by~\cite{LeW24} but with the parameter $r$ of the \linebreak $r$-division set differently. 
The oracle of~\cite{LeW24} had $r$ chosen so that the space was optimized, but it was unclear whether such an oracle (in fact, for any choice of $r$) could be constructed in optimal or even subquadratic time.
We show that by sacrificing $o\left(n^{1/(5h)}\right)$ space and processing distance patterns efficiently, we can obtain a subquadratic oracle with near-optimal construction time. The algorithm constructing the oracle of~\Cref{thm:unweighted-oracle} can be extended fairly easily to obtain the following.
\begin{restatable}{theorem}{thmeccentricitieswiener}\label{thm:eccentricities-wiener}
Let $G$ be a $K_h$-minor-free unweighted digraph. Then, in $\Ot\left(n^{2-1/(3h-2)}\right)$ time one can compute (1) all the vertex eccentricities of $G$, and (2) the Wiener index of $G$.
\end{restatable}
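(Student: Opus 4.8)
The plan is to run the construction behind \Cref{thm:unweighted-oracle} essentially verbatim and then read off the two single-number quantities by aggregating over the per-piece pattern collections rather than over all $\Theta(n^2)$ distances. Recall that this construction fixes an $r$-division $\rdiv$ of $G$ (with $r$ the small polynomial chosen in \Cref{thm:unweighted-oracle}) and, in $\Ot(n^{2-1/(3h-2)})$ time and space, produces for every piece $P\in\rdiv$: (i) the global distances $\dist_G(s,b)$ for all $s\in V$ and $b\in\bnd P$; (ii) the set $\Delta_P$ of distinct distance patterns occurring in $P$ --- the pattern of a vertex $v\in V(P)$ being $\delta_v=(\dist_G(b,v))_{b\in\bnd P}$ --- stored compactly with the dynamic-string structure, together with a pointer from each $v\in V(P)$ to $\delta_v$; and (iii) a decoding table $\Phi_P$ with $\Phi_P(s,\delta)=\min_{b\in\bnd P}(\dist_G(s,b)+\delta(b))$ for all $s\in V$ and $\delta\in\Delta_P$. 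I would then assign each vertex of $G$ to a single piece containing it, so that $V=\bigsqcup_{P}V^*(P)$. Because any shortest $s$--$v$ path with $v\in V(P)$ and $s\notin V(P)$ must cross $\bnd P$, its first boundary vertex certifies $\dist_G(s,v)=\Phi_P(s,\delta_v)$; this is exactly the oracle's query rule, and the tables have total size $\sum_P n\,|\Delta_P|=\Ot(n^{2-1/(3h-2)})$.

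For the eccentricities, write $\mathrm{ecc}(s)=\max_v\dist_G(s,v)=\max_P\max_{v\in V^*(P)}\dist_G(s,v)$. For the pieces $P$ with $s\notin V(P)$, every $v\in V^*(P)$ satisfies $\dist_G(s,v)=\Phi_P(s,\delta_v)$, a value recorded in the table; conversely, for $s\notin V(P)$ each entry $\Phi_P(s,\delta)$ equals $\dist_G(s,v)$ for the vertices $v\in V(P)$ realizing $\delta$ (all of which lie at the same distance from $s$). Hence a single pass over all table entries, maintaining a running maximum indexed by $s$ and skipping the entries $(P,s)$ with $s\in V(P)$ --- of which there are only $\sum_P|V(P)|=O(n)$, since $\sum_P|\bnd P|=O(n/\sqrt r)$ --- yields $\max_{P:\,s\notin V(P)}\max_v\dist_G(s,v)$ for all $s$. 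For the Wiener index, write $W(G)=\sum_s\sum_P\sum_{v\in V^*(P)}\dist_G(s,v)$; after computing, for each $P$, the multiplicities $c_P(\delta)=|\{v\in V^*(P):\delta_v=\delta\}|$ by bucketing $V^*(P)$ along the pattern pointers, the $s\notin V(P)$ part equals $\sum_P\sum_{s\notin V(P)}\sum_{\delta\in\Delta_P}c_P(\delta)\,\Phi_P(s,\delta)$, again a single (now weighted) pass over the tables. Both passes cost $\Ot(n^{2-1/(3h-2)})$, matching the total table size.

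It remains to supply the contribution of the \emph{diagonal} pairs $(s,P)$ with $s\in V(P)$ --- equivalently, $\dist_G(s,v)$ for every $P$, every $s\in V(P)$, and every $v\in V^*(P)$ --- since there $\Phi_P(s,\delta_v)$ may strictly overestimate $\dist_G(s,v)$. There are $O(n)$ such pairs $(s,P)$ and $O(nr)$ such triples; for each I would use $\dist_G(s,v)=\min\!\bigl(\dist_P(s,v),\ \min_{b\in\bnd P}(\dist_G(s,b)+\dist_G(b,v))\bigr)$, with $\dist_P(s,v)$ obtained from a BFS inside $P$ started at $s$ and the second term reusing the global data of (i). This costs $O(|E(P)|)$ per BFS and $O(|\bnd P|)$ per triple, i.e.\ $O(nr^{3/2})=n^{2-\Theta(1)}$ overall, absorbed by the construction time of \Cref{thm:unweighted-oracle}. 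Feeding these distances into the running maxima (for eccentricities) and into the running sum (for the Wiener index) completes both computations within $\Ot(n^{2-1/(3h-2)})$ time.

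The step I expect to require the most care is not the aggregation itself, which is routine bookkeeping, but (a) re-deriving from the proof of \Cref{thm:unweighted-oracle} that the construction really does expose, within the stated time and space, both the decoding tables $\Phi_P$ (indexed by pattern) and the per-vertex pattern pointers needed to form the multiplicities $c_P$; and (b) arguing precisely that in each of the two aggregations every distance $\dist_G(s,v)$ is accounted for exactly once and with its true value --- which hinges on the clean split into the cases $s\notin V(P)$ (where $\Phi_P(s,\delta_v)$ is exact) and $s\in V(P)$ (where it need not be, and the diagonal computation takes over), together with the unique assignment $V=\bigsqcup_P V^*(P)$.
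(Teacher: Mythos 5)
The aggregation you describe is routine, but the object you aggregate over cannot be built within the stated budget, and this is precisely the point you flag as ``requiring the most care.'' The construction behind \Cref{thm:unweighted-oracle} does not produce a table $\Phi_P(s,\delta)$ indexed by all sources $s\in V$ and all distinct \emph{target-side} profiles $\delta_v=(\dist_G(b,v))_{b\in\bnd{P}}$, and no subquadratic algorithm can: the VC/pseudodimension machinery bounds the number of distinct \emph{source-side} patterns (restrictions of balls $\dirB(s,\cdot)$, or of normalized source-to-boundary distance vectors, to $V(P)$ or $\bnd{P}$) by $\poly(r)$ per piece, but it says nothing about the number of distinct raw vectors $(\dist_G(b,v))_{b\in\bnd{P}}$ over targets $v\in V(P)$. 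That number can be $\Theta(r)$ --- e.g.\ a piece consisting of a path $b_0\to v_1\to\dots\to v_{r-1}$ hanging off a single boundary vertex gives every $v_i$ a distinct profile --- so $\Phi_P$ has $\Theta(nr)$ entries for such a piece and $\Theta(n^2)$ entries over the $\Theta(n/r)$ pieces; the claim $\sum_P n\,|\Delta_P|=\Ot\left(n^{2-1/(3h-2)}\right)$ is therefore false, and both passes collapse. Normalizing the profiles by subtracting $\min_b\dist_G(b,v)$ does not rescue this: a table keyed by all $n$ sources times $K$ per-piece keys costs $\Omega(n^2K/r)$, so one needs $K=O(r^{1-c})$, whereas even the truncated normalized pattern count is $O(r^{3(h-1)/2})$. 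The per-piece compression must sit on the source side, where there are $n^2/r$ pairs $(s,P)$ but only $\poly(r)$ patterns per piece; your table inverts this and pays per source.

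The paper keeps the source-side compression and aggregates differently. For eccentricities, by the distance-query formula~\eqref{eq:distance-query} with $b^*$ the boundary vertex of $P$ \emph{farthest} from $u$, the quantity $\max_{v\in V(P)\setminus\bnd{P}}\dist_G(u,v)$ (when it exceeds $\dist_G(u,b^*)$) equals $\dist_G(u,b^*)+\max_{v}\dist_P\left(\dirB(u,\dist_G(u,b^*))\cap V(P),v\right)$; the inner maximum depends only on the ball--piece intersection, of which there are $O(r^{h-1})$ per piece, so it is computed once per distinct ball and read in $O(1)$ per pair $(P,u)$. The Wiener index is genuinely harder: writing $B_i=\dirB(u,\dist_G(u,b_i))\cap V(P)$, the sum $\sum_v\dist_G(u,v)$ is split over the shells $B_{i+1}\setminus B_i$; the part $\sum_i\dist_G(u,b_i)\cdot|B_{i+1}\setminus B_i\setminus\bnd{P}|$ is evaluated in $O(\sqrt r)$ per pair from stored ball sizes, and the residual $\sum_i\sum_{v\in B_{i+1}\setminus B_i\setminus\bnd{P}}\dist_P(B_i,v)$ is grouped into quantities $z_{P,u,i}$ that depend only on the proxy pattern $\delta_{P,u,i}$ (after widening the truncation threshold from $r$ to $2r$ so that the last shell is also pattern-determined), hence are computed once per distinct pattern and reused across all sources mapping to it. Any correct proof needs a mechanism of this kind that charges the $\Theta(r)$-size per-piece work to the $\poly(r)$ source-side patterns rather than to the $n$ sources. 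Your handling of the intra-piece (diagonal) pairs, by contrast, is fine and matches the paper's precomputation.
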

Significantly, we show the first subquadratic algorithm for computing the Wiener index in an unweighted minor-free digraph, thus solving the problem left open by~\cite{LeW24}. As~\cite{LeW24} argue, their strategy to compute eccentricities could not be easily applied for the Wiener index problem.

\paragraph*{Decremental reachability oracle and bottleneck paths.}
Surprisingly, we also manage to apply the VC set systems approach to a dynamic data structure problem on directed minor-free graphs. We show:
\begin{restatable}{theorem}{decreachability}\label{thm:dec-reachability}
Let $G$ be a $K_h$-minor-free directed graph.
There exists a deterministic data structure maintaining $G$ subject to edge deletions and supporting arbitrary-pair reachability queries in $O(1)$ time.

The total update time of the data structure is $\Ot(n^{2-1/h})$.
\end{restatable}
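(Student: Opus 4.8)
The plan is to make the static constant-query reachability oracle of \Cref{sec:template} decremental. Fix once an $r$-division $\rdiv$ of $G$ with $r=n^{2/h}$ (the balancing choice, since $n^2/\sqrt r=nr^{(h-1)/2}=n^{2-1/h}$), keeping the edge partition into pieces fixed and only deleting edges from pieces, and use the ball system $\dirBs_G$, of VC dimension $h-1$. As $G$ is $K_h$-minor-free it has $O(n)$ edges, so the whole deletion sequence has length $O(n)$ and $\Ot(n^{1-1/h})$ amortized time per deletion suffices. Reachability \emph{inside} a piece $P$ (itself a $K_h$-minor-free digraph on $O(r)$ vertices) is maintained recursively by the same structure; the recursion contributes only lower-order time, since at level $k$ there are $n^{1-(2/h)^k}$ pieces of size $n^{(2/h)^k}$, together costing $n^{1+(1-1/h)(2/h)^k}$, which is geometrically decreasing in $k$ (as $2/h<1$ for $h\ge 3$; $h\le 2$ is trivial) and hence dominated by the $k=0$ term $n^{2-1/h}$. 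Since edge deletions preserve $K_h$-minor-freeness, the Sauer--Shelah bound keeps giving $O(|\bnd P|^{h-1})=O(r^{(h-1)/2})$ realized boundary patterns per piece at every moment.

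\textbf{What is stored.} Following the template, information is split into a global and a local part. Globally we maintain reachability between every vertex and $\bnd\rdiv$; this reduces, via the recursive per-piece oracles, to maintaining reachability in the \emph{piece-summary digraph} $H$ on vertex set $\bnd\rdiv$ with an arc $b\to b'$ for each piece $P$ with $b\leadsto b'$ inside $P$. As within-piece reachability is monotone, $H$ only loses arcs; $H$ has $O(n/\sqrt r)$ vertices and $O(n)$ arcs, so known decremental transitive-closure machinery maintains reachability in $H$ in $\Ot(|V_H|\cdot|E_H|)=\Ot(n^{2-1/h})$ total time, and combining it with one recursive query per interior source recovers all vertex-to-boundary reachability (and, for each vertex $u$, the set $\Delta_u$ of boundary vertices reachable from $u$, stored as a bitvector of $O(n/\sqrt r)$ bits, for a total of $O(n^{2-1/h})$). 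Locally, for each piece $P$ we keep the $O(r^{(h-1)/2})$ currently realized boundary patterns $\delta$, each stored with its decoded reachable set $R_P(\delta)=\{w\in V(P):\exists b\in\delta,\ b\leadsto w\text{ inside }P\}$ materialized as a hash table, and for each pair $(s,P)$ a pointer to the record of $\delta_{s,P}=\Delta_s\cap\bnd P$. The decoded sets use $O((n/r)\cdot r^{(h-1)/2}\cdot r)=O(n^{2-1/h})$ space and the pointers $O(n^{2-2/h})$.

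\textbf{Answering a query.} Given $u,v$: if $v\in\bnd\rdiv$, return whether $v\in\Delta_u$; otherwise let $P$ be the unique piece with $v$ interior to it, and return ``yes'' iff either $u\in V(P)$ and the recursive oracle of $P$ reports $u\leadsto v$ inside $P$, or $v\in R_P(\delta_{u,P})$, reached in $O(1)$ by following the $(u,P)$ pointer. Correctness is the template's decomposition: a walk from $u$ to interior $v$ either stays inside $P=P(v)$ and avoids $\bnd P$ (caught by the recursive oracle), or touches $\bnd P$; in the latter case its last boundary vertex $b$ satisfies $u\leadsto b$ in $G$ and $b\leadsto v$ inside $P$, i.e.\ $v\in R_P(\delta_{u,P})$; the converse is immediate; and $\delta_{u,P}$ is one of the $O(r^{(h-1)/2})$ realized patterns because $\dirBs_G$ restricted to $\bnd P$ has at most that many sets.

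\textbf{Maintenance, and the main obstacle.} By monotonicity each ordered pair of boundary vertices loses reachability at most once, and each $\delta_{s,P}$ only shrinks and so changes $O(|\bnd P|)=O(\sqrt r)$ times; so the number of ``change events'' is $\Ot(n^{2-1/h})$, which matches the budget provided each is processed in $\Ot(1)$ amortized time. Two VC facts are the main levers on the local side: the reach-from-boundary ``co-patterns'' $\{b\in\bnd P:b\leadsto w\text{ inside }P\}$ over $w\in V(P)$ also number only $O(r^{(h-1)/2})$ (a restriction of the in-ball system of $P$), so a decoded set $R_P(\delta)$ can be (re)built in $\Ot(r)$ time by grouping $V(P)$ into co-pattern classes; and a pattern record, once built, never needs rebuilding, as it is indexed by a subset of $\bnd P$ realized at some time. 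I expect the crux to be precisely the bookkeeping that turns one global reachability change into $\Ot(1)$ amortized work --- routing it to the possibly many pairs $(s,P)$ and patterns $\delta$ it affects, and, hand in hand with this, bounding the \emph{total} number of distinct boundary patterns (hence of decoded-set constructions) over the whole deletion sequence; a naive propagation already costs $\Theta(n^2)$, a polynomial factor $n^{1/h}$ over budget, so both the VC bounds and the monotonicity of reachability must be pushed --- e.g.\ via a VC-type bound uniform over the chain of edge-subgraphs, or a careful charging against the monotone reachability decreases. The fact that the query must be $O(1)$ --- ruling out an $O(\sqrt r)$-time intersection of $\delta_{u,P}$ with a reach-from-boundary set at query time --- is exactly what forces the decoded set of the precise pattern $\delta_{u,P}$ to be kept materialized and current, and is what makes the maintenance delicate.
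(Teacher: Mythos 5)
Your architecture matches the paper's (same $r$-division, same split into vertex-to-boundary ``global'' reachability and per-piece pattern-indexed ``local'' reachability, same query procedure, same target bound $\Ot(n^2/\sqrt r+nr^{(h-1)/2})$ with $r=n^{2/h}$), but the step you explicitly defer --- bounding the \emph{total} number of distinct boundary patterns over the whole deletion sequence and processing each pattern change in $\Ot(1)$ amortized time --- is precisely the content of the theorem, so the proposal has a genuine gap rather than a complete proof. The static Sauer--Shelah count $O(r^{(h-1)/2})$ holds at each fixed time, but a priori the union over $O(n)$ snapshots could be much larger, and every decoded set $R_P(\delta)$ built for a new pattern costs $\Ot(r)$, so without a uniform-over-time bound the local side is not under control. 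The paper's resolution is a bottleneck-distance argument: weight each edge by its (reverse) deletion timestamp to get a static weighted graph $G'$; then the set of boundary vertices of $P$ reachable from $u$ in the $j$-th snapshot is exactly the bottleneck ball $\dirB^\beta_{G'}(u,j)$ restricted to $\bnd P$, and bottleneck balls embed into ordinary balls of a reweighted minor-free graph (send the $j$-th smallest weight to $n^j$), hence have VC dimension at most $h-1$. This yields $O(r^{(h-1)/2})$ distinct patterns per piece \emph{over all time}, which is the key lemma you would still need to supply; your suggested alternative of ``charging against monotone reachability decreases'' does not obviously work, since a single piece can in principle see $\Theta(\sqrt r)$ pattern changes per source and $\Theta(n)$ sources.

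Three further points need repair. First, recognizing in $\Ot(1)$ time whether the shrunken pattern $R_{P,u}$ equals an already-seen one cannot be done by hashing a $\Theta(\sqrt r)$-bit subset per change (that alone is $\Theta(n^2)$ over the $O(n^2/\sqrt r)$ changes); the paper uses a dynamic-strings structure so that each single-coordinate deletion updates the pattern's identifier in $\Ot(1)$ time. Second, your claim that ``a pattern record, once built, never needs rebuilding'' is false as stated: $R_P(\delta)$ shrinks as edges inside $P$ are deleted even while $\delta$ is unchanged, so each pattern needs its own decremental single-source reachability structure (super-source attached to $\delta$), at $\Ot(r)$ total time per pattern --- this is affordable only once the over-time pattern count is bounded. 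Third, the theorem demands a deterministic structure, and no known deterministic decremental SSR for general digraphs runs in near-linear total time; the paper sidesteps this by maintaining a decremental $O(\log n)$-diameter reachability emulator for minor-free graphs and running Even--Shiloach trees on it, whereas your global component (a decremental transitive closure on a piece-summary digraph) also leaves open how arc deletions of $H$ are detected and how the per-source boundary sets $\Delta_u$ and their change events are reported within budget.
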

Dynamic reachability and its partially dynamic variants (incremental and decremental, allowing either only edge insertions or only edge deletions, respectively) are very well-understood in general graphs from both the upper- and lower-bounds perspective, e.g.~\cite{AbboudW14, BernsteinPW19, BrandNS19, HenzingerKNS15, Italiano86, Roditty08, RodittyZ08, RodittyZ16, Sankowski04}. It has also been studied in planar~\cite{Charalampopoulos22, DiksS07, Karczmarz18, Subramanian93} and minor-free graphs~\cite{Karczmarz18}.
Strikingly, so far, all the non-trivial dynamic data structures for planar and minor-free digraphs that support \emph{arbitrary-pair} reachability queries~\cite{DiksS07, Karczmarz18, Subramanian93} (let alone more general queries, e.g., about distances) have offered only polynomial query times. 
In particular, the techniques used for designing near-optimal static oracles for reachability (dipath separators in~\cite{Thorup04}) and distances (Voronoi diagrams~\cite{CharalampopoulosGLMPWW23}) proved very challenging to be applied in the dynamic setting.
Our data structure is therefore the first to achieve amortized sublinear update time per edge update and polylogarithmic query time at the same time, \emph{even for partially dynamic planar digraphs}, where, recall, much more structure and techniques are available. Observe that for planar digraphs (that are $K_5$-free), the data structure has $\Ot(n^{4/5})$ amortized update time if all edges are eventually deleted.
\Cref{thm:dec-reachability} constitutes the first indication that the VC set systems may also be useful in dynamic settings.

Interestingly, \Cref{thm:dec-reachability} is obtained by applying the bounded VC dimension arguments to the \emph{bottleneck metric} ball system arising from the (unknown till the updates are finished) sequence of graph snapshots. As a by-product, our decremental reachability algorithm can be easily converted into a \emph{bottleneck distance oracle} with $\Ot\left(n^{2-1/h}\right)$ space and preprocessing, and $O(\log{n})$ query time (see~\Cref{l:bottleneck-oracle}). The bottleneck distance oracle is in turn very useful in obtaining an $(1+\eps)$-approximate distance oracle for real-weighted minor-free graphs with the same bounds. We describe that in~\Cref{sec:approx} for completeness, especially since it is unclear whether our weighted exact oracle (\Cref{thm:weighted_do}) can be constructed in subquadratic time at all.

\subsection{Organization}
In~\Cref{sec:setsystems} we introduce our multiball vectors system and relate it to the previously studied set systems. In~\Cref{sec:weighted} we present a randomized variant of our exact distance oracle for real-weighted graphs, which is derandomized only later, in~\Cref{sec:det_weighted_do}. In~\Cref{sec:overview}, we give an overview of our developments for unweighted graphs: subquadratic algorithms for distance problems and the decremental reachability oracle. Then, in~\Cref{sec:pseudodimenson} we prove the pseudodimension bound of the system defined in~\Cref{sec:setsystems}.
\Cref{sec:unweighted} presents our subquadratic algorithms for distance-related problems in unweighted graphs in detail. \Cref{s:decremental} is devoted to the decremental reachability oracle. 

\subsection{Preliminaries}
Throughout this paper, we will be discussing weighted and unweighted $K_h$-minor-free digraphs $G = (V,E)$ with $n=|V|$ vertices. 
For weighted digraphs, we will assume that all weights are positive\footnote{We believe we can extend our results to allow for zero weight edges, but this increases the complexity of the proofs in \Cref{sec:pseudodimenson}.}. 
For two vertices $u,v \in G$, we will let $d_G(u,v)$ denote the shortest path distance between $u$ and $v$. For a piece $P$ of an $r$-division $\rdiv$, for vertices $u,v\in V(P)$ we will sometimes discuss distances in the edge induced subgraph of $P$ and denote such distances as $d_P(u, v)$. For brevity of notation, when it is clear we are discussing distances in the digraph $G$, we will omit subscripts with $d(u,v)$.

\section{Set systems of bounded shattering dimension}\label{sec:setsystems}

Let $G = (V, E)$ be a directed graph. Define a \emph{directed ball centered at $v$ of radius $r$} for a vertex $v\in V$ and $r\in \R$ as $\dirB(v, r) = \{u \in V \mid d(v, u) \le r \}$.
Let us also define $\dirB(u,\infty)$ to be the set of all vertices reachable from
$u$ in $G$. Whenever we talk about a balls in a graph $H\neq G$, we use the notation $\dirB_H(u,r)$. 

Le and Wulff-Nilsen~\cite{LeW24} show this set system has VC dimension at most $h-1$ if $G$ is $K_h$-minor-free:
\[ \dirBs_G = \{\dirB(v, r) \mid v\in V, r \in \R\}. \]
Let $S = \langle s_0, s_1, ..., s_{k-1}\rangle$ be an ordered list of $k$ vertices of $G$. Consider $\ell-1$ many distances \linebreak $\delta_1 < \delta_2 < \cdots < \delta_{\ell-1}$ and let $M =\{\delta_1, \delta_2, ..., \delta_{\ell-1}\} \subseteq \R$.
Le and Wulff-Nilsen also studied sets of the following form for $v\in V$:
\[ \dirX_v = \{ (i, j) \mid d(v, s_i) - d(v, s_0) \le \delta_j  \}. \]
Furthermore, they define $\dirLP_{G, M} = \{ \dirX_v \mid v \in V\}$ where $\dirLP_{G, M}$ is a collection of subsets of $[k-1] \times {[\ell-1]}$.
They showed that when $G$ is $K_h$-minor-free, $\dirLP_{G,M}$ has VC dimension at most $h^2$.

It is difficult to interpret the set $\dirX_v$.
Indeed, even Le and Wulff-Nilsen admit that the definition looks rather complicated~\cite{LeW24}.
We give our own interpretation by relating it to the set system of directed balls.
$\dirX_v$ encodes for each $s_i$ the approximate distance of $v$ to $s_i$ relative to $s_0$ by implicitly storing the unique integer $1\le j \le \ell$ such that
\[ d(v, s_0) + \delta_{j-1} < d(v, s_i) \le d(v, s_0) + \delta_{j}, \]
where we let $\delta_0 = - \infty$ and $\delta_{\ell} = \infty$. For convenience we will define $\Delta = M \cup \{-\infty, \infty\}$.
The above can be interpreted as identifying $s_i$ to be outside the ball $B(v, r_v + \delta_{j-1})$ and within the ball $B(v, r_v + \delta_{j})$,
where~$r_v$ is set to $d(v,s_0)$.
We call $r_v$ the \emph{base radius} since the radii of these balls are relative to $r_v$, with what we call a \emph{shift vector} of different shifts defined by $\Delta$.
Inspired by this interpretation, we now define what we call the \emph{multiball} vector that captures which of the concentric balls points lie in.

\begin{figure}
    \centering
    \includegraphics[page=2, scale=0.6]{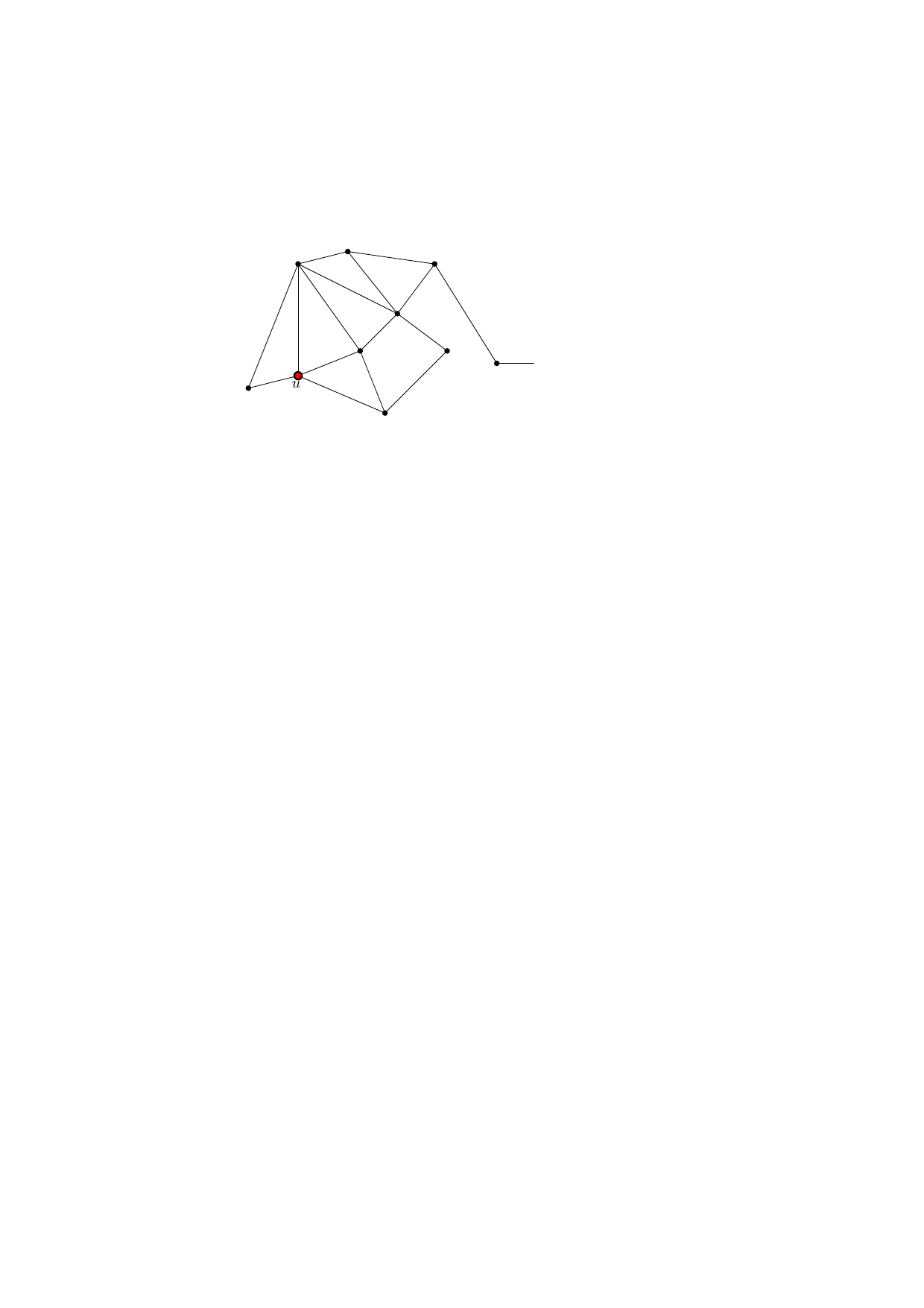}
    \caption{A undirected unweighted graph $G = (V, E)$. Pictured is $\dirMB(u, r, \Delta)$, the multiball centered at $u$ with radius $r=2$ and $\Delta = \{-\infty, -1, 0, 2, \infty \}$.}
    \label{fig:multiball}
\end{figure}

\begin{definition}[Multiballs]\label{def:multiball}
Let $v\in V$ be a vertex of a graph $G$, $r$ a real number, and $\Delta$ be a set of distances $-\infty = \delta_0 < \delta_1 < \cdots < \delta_{\ell-1} < \delta_\ell = \infty$.
Define the multiball vector as follows (see \Cref{fig:multiball}).
\[ \dirMB(v, r, \Delta) =(y_u)_{u\in V} 
    \text{ where } y_u \in [\ell] \text{ is the smallest integer such that } u \in \dirB(v, r+\delta_{y_u}). \]
Let $\dirMBs_{G, \Delta} = \{\dirMB(v, r, \Delta) \mid v\in V, r\in \mathbb{R}\}$ be the set of all multiballs vectors in $G$ with shifts $\Delta$.
\end{definition}

We believe that defining $\dirMBs_{G,\Delta}$ as a set of multivalued functions is more natural than the set system of $\dirLP_{G,M}$. Furthermore, if we view each multiball vector
as a multiball function $f:V\to [\ell]$ and $\dirMBs_{G, \Delta}$ as a collection of these multiball functions, we can leverage richer measurements of the complexity of function classes beyond VC dimension, in particular 
the \emph{pseudodimension} \cite{Pollard90} (sometimes called Pollard dimension) that we provide a definition of here.

\begin{definition}\label{def:pdim}
For a family of functions $\cF$ mapping $V$ to $\{0, \ldots, \ell -1\}$, the \emph{pseudodimension} of $\cF$, denoted by $\Pdim(\cF)$, is the largest integer $d$ such that there exists $x_1, \ldots, x_d \in V$  and thresholds $y_1, \ldots, y_d \in [\ell]$ such that for any $d$-length bit vector $\dirb = (b_1, \ldots, b_d) \in \{0, 1\}^{d}$ there exists an $f\in \cF$ such that:
\[  \forall i, \quad f(x_i) \le y_i \Leftrightarrow b_i = 1 \]
\end{definition}

We remark that $\dirMBs_{G, \Delta}$ is simultaneously a generalization of $\dirBs_G$ and of $\dirLP_{G, \Delta}$, and proving bounds on the pseudodimension of $\dirMBs_{G,\Delta}$ immediately gives bounds on the VC dimension of $\dirBs_G$ and $\dirLP_{G,\Delta}$. This is formalized in the following two observations.
\begin{observation}
$\VCdim(\dirBs_G) = \Pdim(\dirMBs_{G,\{-\infty, 0, \infty\}})$ 
\end{observation}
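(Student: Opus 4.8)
The plan is to unfold both definitions and show the two shattering conditions coincide bit-for-bit. First I would observe that with $\Delta = \{-\infty, 0, \infty\}$ we have $\ell = 2$, so $\delta_0 = -\infty$, $\delta_1 = 0$, $\delta_2 = \infty$, and the multiball function $f = \dirMB(v, r, \Delta)$ takes values in $[\ell] = \{1, 2\}$ (or $\{0,1\}$ under the indexing of \Cref{def:pdim}; I will be careful about the off-by-one and just fix a convention). Concretely, $f(u) = 1$ iff $u \in \dirB(v, r + \delta_1) = \dirB(v, r)$, and $f(u) = 2$ otherwise (using $\dirB(v,\infty)$ for the reachable set, so every reachable vertex gets a finite value). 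Thus the sublevel set $\{u : f(u) \le 1\}$ is exactly $\dirB(v,r)$, and ranging over $v \in V$, $r \in \R$ recovers precisely the set system $\dirBs_G$ — with the single caveat about vertices unreachable from $v$, which receive value $2$ and hence are consistently excluded from $\dirB(v,r)$ for all finite $r$, matching the convention $\dirB(v,r) = \{u : d(v,u) \le r\}$.

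Next I would compare the two shattering notions directly. For $\Pdim(\dirMBs_{G,\{-\infty,0,\infty\}})$, \Cref{def:pdim} asks for points $x_1,\dots,x_d$ and thresholds $y_1,\dots,y_d \in [\ell] = \{1,2\}$ such that every bit vector $\dirb$ is realized by some $f = \dirMB(v,r,\Delta)$ with $f(x_i) \le y_i \Leftrightarrow b_i = 1$. The only nontrivial threshold value is $y_i = 1$: if $y_i = 2$ then $f(x_i) \le 2$ always holds, so $b_i$ is forced to $1$ and $x_i$ contributes nothing to shattering. Hence without loss of generality all thresholds equal $1$, and the condition becomes: for every $\dirb \in \{0,1\}^d$ there is a ball $\dirB(v,r)$ with $x_i \in \dirB(v,r) \Leftrightarrow b_i = 1$ — which is exactly the statement that $\{x_1,\dots,x_d\}$ is shattered by $\dirBs_G$ in the usual VC sense. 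This establishes $\Pdim(\dirMBs_{G,\{-\infty,0,\infty\}}) \le \VCdim(\dirBs_G)$.

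For the reverse inequality, given a set $\{x_1,\dots,x_d\}$ shattered by $\dirBs_G$, I take all thresholds $y_i = 1$ and note that for each target bit vector $\dirb$, the ball $\dirB(v,r)$ realizing $\dirb$ corresponds to $f = \dirMB(v,r,\{-\infty,0,\infty\})$ with $f(x_i) \le 1 \Leftrightarrow x_i \in \dirB(v,r) \Leftrightarrow b_i = 1$, so the same $d$ points with these thresholds witness $\Pdim \ge d$. Combining the two directions gives the claimed equality. I expect the main (and really the only) obstacle to be bookkeeping: getting the index set $[\ell]$ versus $\{0,\dots,\ell-1\}$ consistent between \Cref{def:multiball} and \Cref{def:pdim}, and correctly handling the convention for vertices unreachable from the ball center (value $\ell$), so that the ``smallest integer $y_u$ with $u \in \dirB(v, r+\delta_{y_u})$'' is always well-defined — which it is, since $\delta_\ell = \infty$ forces $y_u \le \ell$. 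None of this requires computation; it is purely a matter of matching definitions.
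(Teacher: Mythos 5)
Your proposal is correct and follows essentially the same route as the paper's proof: identify the multiball vector for $\Delta=\{-\infty,0,\infty\}$ with the indicator of the ball $\dirB(v,r)$, and observe that the pseudodimension shattering condition for $\ell=2$ coincides with VC shattering. You simply spell out the details (the triviality of the threshold $y_i=\ell$ and the handling of unreachable vertices) that the paper's one-line argument leaves implicit.
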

\begin{proof}
Observe the multiball $\dirMB(v, r,\{-\infty, 0, \infty\})\in \dirMBs_{G,\{-\infty, 0, \infty \}}$ gives a vector in $\{1, 2\}^V$. This vector exactly encodes the ball $\dirB(v, r) \in \dirBs_G$, so there is a bijection between $\dirMBs_{G, \{-\infty, 0, \infty\}}$ and $B_G$. The definition of pseudodimension for $\ell = 2$ exactly matches the definition of VC dimension.
\end{proof}

Consider $\dirX_v \in \dirLP_{G,M}$.
There is a correspondence between $\dirX_v$ and the vector $\dirMB_{\mid S}(v, d(v, s_0), \Delta)$, the restricted multiball vector to the set of vertices $S\subseteq V$ with a base vertex $s_0 \in S$. 
This shows that the family of multiballs $\dirMBs_{G, \Delta}$ are much more expressive than $\dirLP_{G,M}$, they contain information about all vertices of $G$. The balls in the family also allow for arbitrary base radii, not just $d(v, s_0)$.
We can formalize the expressiveness of multiball vectors with the following observation.
\begin{observation}
$\VCdim(\dirLP_{G, M}) \le \Pdim(\dirMBs_{G, \Delta})$ with $\Delta = M\cup \{-\infty, \infty\}$.
\end{observation}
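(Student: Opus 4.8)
The plan is to exhibit, from any shattered configuration for $\dirLP_{G,M}$, a shattered configuration of the same size for $\dirMBs_{G,\Delta}$ in the sense of \Cref{def:pdim}, using the correspondence already sketched in the text between $\dirX_v$ and the restricted multiball vector $\dirMB_{\mid S}(v,d(v,s_0),\Delta)$.

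First I would fix the ground data for $\dirLP_{G,M}$: an ordered list $S = \langle s_0,\dots,s_{k-1}\rangle$ and a distance set $M=\{\delta_1<\dots<\delta_{\ell-1}\}$, so that $\dirLP_{G,M}\subseteq 2^{[k-1]\times[\ell-1]}$. Suppose this set system has VC dimension $d$, so there is a shattered set $T=\{(i_1,j_1),\dots,(i_d,j_d)\}\subseteq[k-1]\times[\ell-1]$. The key observation is that membership $(i,j)\in\dirX_v$, i.e. $d(v,s_i)-d(v,s_0)\le\delta_j$, is exactly the statement that the multiball coordinate $y_{s_i}$ of $\dirMB(v,d(v,s_0),\Delta)$ is at most $j$ (since $\delta_j$ is the $j$-th threshold and $y_{s_i}$ is the smallest index with $s_i\in\dirB(v,d(v,s_0)+\delta_{y_{s_i}})$). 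So I would set $x_t := s_{i_t}\in V$ and $y_t := j_t\in[\ell]$ for $t=1,\dots,d$. For any bit vector $b\in\{0,1\}^d$, shattering of $T$ gives a vertex $v$ with $\dirX_v\cap T=\{(i_t,j_t):b_t=1\}$; then the multiball function $f=\dirMB(v,d(v,s_0),\Delta)\in\dirMBs_{G,\Delta}$ (which is a legitimate element since the base radius $d(v,s_0)$ is a real number) satisfies $f(x_t)\le y_t\iff (i_t,j_t)\in\dirX_v\iff b_t=1$, which is precisely the pseudo-shattering condition. Hence $\Pdim(\dirMBs_{G,\Delta})\ge d=\VCdim(\dirLP_{G,M})$.

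The one genuine subtlety — and the step I'd expect to need the most care — is a degeneracy issue: if two distinct pairs in $T$ share the same first coordinate, say $i_s=i_t$ with $j_s<j_t$, then $x_s=x_t$ as vertices, and $f(x_s)\le y_s$ forces $f(x_t)\le y_t$, so the bit patterns with $b_s=0,b_t=1$ could never be realized — contradicting that $T$ was shattered for $\dirLP_{G,M}$ in the first place. So I would argue that a shattered set $T$ for $\dirLP_{G,M}$ necessarily has pairwise distinct first coordinates (otherwise the same monotonicity kills shattering already at the level of $\dirX_v$), which means the points $x_1,\dots,x_d$ are automatically distinct and the thresholds are consistent. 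With that observation in place the reduction is clean and the inequality follows; no quantitative estimate is needed since the bound on $\Pdim(\dirMBs_{G,\Delta})$ itself ($\le h-1$) is proved separately in \Cref{sec:pseudodimenson}.
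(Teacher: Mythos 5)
Your proposal is correct and follows essentially the same route as the paper's proof: take a shattered set $\{(i_t,j_t)\}$ for $\dirLP_{G,M}$, use the vertices $s_{i_t}$ as the points and the $j_t$ as thresholds, and observe that $(i,j)\in\dirX_v$ is exactly the condition $\dirMB(v,d(v,s_0),\Delta)(s_i)\le j$. The only slip is in your degeneracy check: when $i_s=i_t$ and $j_s<j_t$, the unrealizable pattern is $b_s=1,\,b_t=0$ (not $b_s=0,\,b_t=1$), but this does not affect the conclusion that such a set cannot be shattered, so the argument stands.
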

\begin{proof}
Let $\{(i_1, y_1),\ldots,(i_h, y_h)\}$ be a shattered set of $\dirLP_{G,M}$ with size $k$.
Consider \Cref{def:pdim} with $\{x_1, \ldots, x_k\} = \{s_{i_1}, \ldots, s_{i_k}\}$ and $\{y_1, \dots, y_k\}$. 
For any $k$-length bit vector $\vec{b} = \{b_1, \ldots, b_k\}$, we can find a set $X_{v} \in \dirLP_{G,M}$ where $b_j = 1$ iff $(i_j, y_j) \in X_v$.
This shows that $\Pdim(\dirMBs_{G, \Delta})$ is at least $k$.
\end{proof}

Our main result in this section is the following theorem where we establish a bound of $h-1$ on the pseudodimension of multiball vectors. The proof we will defer until \Cref{sec:pseudodimenson}, as it is almost identical to the minor-building proofs used to bound VC dimension used by \cite{LeW24}.
Our main contribution here is predominantly conceptual; we believe these multiball vectors are more natural set systems to consider.

\begin{restatable}{theorem}{mbpseudodim}

\label{thm:mb_pseudodim} 
For a $K_h$-minor-free graph $G$, for any set $\Delta$ of $\ell$ distances, the set of multiball vectors $\dirMBs_{G, \Delta}$ has pseudodimension at most $h-1$.
\end{restatable}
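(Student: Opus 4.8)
The plan is to argue by contradiction: suppose $\dirMBs_{G,\Delta}$ has pseudodimension at least $h$, so there exist vertices $x_1,\dots,x_h\in V$ and thresholds $y_1,\dots,y_h\in[\ell]$ witnessing shattering in the sense of \Cref{def:pdim}. For each bit vector $\dirb\in\{0,1\}^h$ fix a witnessing multiball $\dirMB(v_{\dirb},r_{\dirb},\Delta)$; writing $(y_u)_{u\in V}$ for its coordinates, the shattering condition says exactly that $x_i\in\dirB(v_{\dirb},r_{\dirb}+\delta_{y_i})$ if and only if $b_i=1$, i.e.\ $d(v_{\dirb},x_i)\le r_{\dirb}+\delta_{y_i}$ precisely when $b_i=1$, and $d(v_{\dirb},x_i)>r_{\dirb}+\delta_{y_i}$ when $b_i=0$. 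The goal is to extract from these $2^h$ center vertices (together with the $x_i$'s and suitable shortest-path trees) a $K_h$ minor in $G$, contradicting $K_h$-minor-freeness. This mirrors the minor-building arguments of Le--Wulff-Nilsen for $\VCdim(\dirBs_G)\le h-1$, the difference being only that the ``radius'' governing whether $x_i$ is hit is now shifted by $\delta_{y_i}$ — a per-index shift that is fixed across all $\dirb$, so it behaves exactly like a reweighting/relabeling of the target radii and does not obstruct the construction.

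The key steps, in order: (1) For each $\dirb$, let $T_{\dirb}$ be a shortest-path (out-)tree from $v_{\dirb}$. For index $i$, consider the prefix of the $v_{\dirb}$-to-$x_i$ shortest path (or of $T_{\dirb}$) consisting of vertices $u$ with $d(v_{\dirb},u)\le r_{\dirb}+\delta_{y_i}$; call the set of such ``$i$-reachable within the shifted ball'' vertices $R_{\dirb,i}$. The shattering condition is: $x_i\in R_{\dirb,i}\iff b_i=1$. (2) Focus on the $2^{h-1}$ vectors of the form $\dirb$ with $b_i=1$ versus those with $b_i=0$: for a pair differing only in coordinate $i$, the center's shifted ball for index $i$ captures $x_i$ in one case and not the other, which lets one locate a ``frontier'' vertex near $x_i$. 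Following the Le--Wulff-Nilsen template, one designates $h$ connected subtrees (branch sets) $W_1,\dots,W_h$ of $G$ — roughly, $W_i$ is built from tree paths toward $x_i$ inside appropriately chosen $R_{\dirb,i}$'s — and shows (a) each $W_i$ is connected in $G$, and (b) for every $i\ne j$ there is an edge of $G$ between $W_i$ and $W_j$, using the $2^h$ centers to realize all $\binom{h}{2}$ required adjacencies simultaneously. (3) Conclude that $G$ contains $K_h$ as a minor, the desired contradiction, so $\Pdim(\dirMBs_{G,\Delta})\le h-1$.

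The main obstacle is step (2): carefully organizing the branch sets and the cross edges so that a single family of $2^h$ center vertices yields all pairwise connections at once, and verifying connectivity of each branch set. This is exactly the delicate combinatorial core of the minor-building argument in \cite{LeW24}, and the only genuinely new bookkeeping is tracking the index-dependent shifts $\delta_{y_i}$: because each $\delta_{y_i}$ is the \emph{same} constant for every $\dirb$, one can absorb it by working with the ``virtual'' radius $r_{\dirb}+\delta_{y_i}$ in place of $r_{\dirb}$ when reasoning about index $i$, and the $\pm\infty$ endpoints $\delta_0,\delta_\ell$ are handled by noting that a threshold $y_i$ equal to $\ell$ (resp.\ forcing the empty ball) trivializes that coordinate and can be discarded, reducing to a smaller shattered set. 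A secondary technical point is that $G$ is weighted with positive weights, so shortest-path trees are well-defined and the prefix/ball structure behaves monotonically along tree paths; this is where the positivity assumption from the preliminaries is used. Since \Cref{thm:mb_pseudodim} is stated as deferred to \Cref{sec:pseudodimenson} and explicitly billed as ``almost identical'' to the \cite{LeW24} VC-dimension proof, I would present the shift-handling reduction in full and then invoke the Le--Wulff-Nilsen minor construction essentially verbatim for the unshifted core.
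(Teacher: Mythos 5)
Your high-level plan --- contradiction via a minor-building argument on shortest paths from witnessing centers to the shattered vertices, together with the observation that the per-index shifts $\delta_{y_i}$ do not obstruct the construction --- matches the paper's approach, and your intuition about the shifts is correct: in the decisive inequality the sum $\delta_{y_i}+\delta_{y_j}$ appears on both sides and cancels. But two concrete gaps remain. First, the working construction does not use all $2^h$ witnessing centers, nor pairs of bit vectors differing in one coordinate. It uses only the $\binom{h}{2}$ centers $t_{ij}$ witnessing the pair-indicator vectors $\mathbf{1}_{\{i,j\}}$, i.e.\ centers whose shifted multiball captures exactly $v_i$ and $v_j$ among the shattered set; the branch set for $v_i$ is $\bigcup_j \pi(t_{ij},v_i)$ and $t_{ij}$ itself is the connector between branch sets $i$ and $j$. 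Your ``frontier vertex obtained from a coordinate flip'' description does not produce this structure, and it is not clear how it would yield all $\binom{h}{2}$ adjacencies with disjoint branch sets.

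Second, and more importantly, you omit the minimality condition: among all pairs $(t_{ij},r_{ij})$ witnessing $\{i,j\}$, one must pick one with $r_{ij}$ minimum. This is exactly what forces $V(\pi(t_{ij},v_i))\cap V(\pi(t_{jp},v_j))\subseteq\{t_{ij}\}$ (\Cref{claim3}): an interior intersection point $x$ would itself center a multiball with strictly smaller base radius $r_{12}-d(t_{12},x)$ still satisfying the shattering constraints, contradicting minimality --- and this is also where positivity of the edge weights is actually used, not merely for well-definedness of shortest-path trees. Without this device the branch sets need not be disjoint and the minor construction fails. Relatedly, ``invoking the Le--Wulff-Nilsen construction verbatim for the unshifted core'' is not literally available: the base radius $r_{ij}$ is shared across coordinates while the shift varies per coordinate, so there is no clean substitution reducing to the unshifted case; the two disjointness claims must be re-verified with the shifts carried through, which is precisely the content of \Cref{claim2} and \Cref{claim3}.
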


Thus we obtain an improved version of the results obtained by Le--Wulff-Nilsen~\cite{LeW24}.
\begin{corollary}
\label{cor:vcdim} 
For a $K_h$-minor-free graph $G$, for any set $\Delta$ of $\ell$ distances, the set systems $\dirB_G$ and $\dirLP_{G, \Delta}$ have VC dimension at most $h-1$.
\end{corollary}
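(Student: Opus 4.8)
\Cref{cor:vcdim} is an immediate consequence of \Cref{thm:mb_pseudodim} together with the two observations stated just before it: they give $\VCdim(\dirBs_G) = \Pdim(\dirMBs_{G,\{-\infty,0,\infty\}})$ and $\VCdim(\dirLP_{G,M}) \le \Pdim(\dirMBs_{G,\Delta})$ for $\Delta = M \cup \{-\infty,\infty\}$, and \Cref{thm:mb_pseudodim} bounds the pseudodimension on the right-hand side by $h-1$ in both cases. So the only substantive work is \Cref{thm:mb_pseudodim} (whose proof is deferred to \Cref{sec:pseudodimenson}); here is the plan I would follow, which, as the authors note, should parallel the minor-building VC-dimension arguments of Le--Wulff-Nilsen, with the per-coordinate shifts causing essentially no extra trouble.

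Assume for contradiction that $\Pdim(\dirMBs_{G,\Delta}) \ge h$, witnessed by vertices $x_1,\dots,x_h\in V$ and thresholds $y_1,\dots,y_h$. Unfolding \Cref{def:multiball} and \Cref{def:pdim}, and using that $\dirB(v, r+\delta_j)$ grows monotonically with $j$, the condition $f(x_i)\le y_i$ for $f=\dirMB(v,r,\Delta)$ is equivalent to $x_i \in \dirB(v, r+\delta_{y_i})$, i.e.\ $d(v,x_i) \le r + \delta_{y_i}$. Writing $\rho_i := \delta_{y_i}$, the shattering hypothesis becomes: for every $\dirb=(b_1,\dots,b_h)\in\{0,1\}^h$ there is a center $v_{\dirb}\in V$ and a base radius $r_{\dirb}\in\R$ with
\[ d(v_{\dirb}, x_i) \le r_{\dirb} + \rho_i \iff b_i = 1 \qquad (i\in[h]). \]
This is precisely the ``shattered by balls'' configuration used in the $\VCdim(\dirBs_G)\le h-1$ proof, except that the radius available to reach $x_i$ is $r_{\dirb}+\rho_i$ instead of $r_{\dirb}$; the key point is that $\rho_i$ is attached to the terminal $x_i$ and depends neither on the center nor on any path.

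From here I would replay the $K_h$-minor construction of Le--Wulff-Nilsen (in the lineage of Bousquet--Thomass\'e and Li--Parter) to contradict $K_h$-minor-freeness. Take the ``exclude-one'' centers: for each $i\in[h]$ put $c_i := v_{\mathbf{1} - e_i}$ and $R_i := r_{\mathbf{1} - e_i}$, so the directed shortest-path out-tree rooted at $c_i$ reaches every $x_j$ with $j\ne i$ within distance $R_i+\rho_j$, while $d(c_i,x_i) > R_i+\rho_i$; one may also keep $c_0 := v_{\mathbf{1}}$, $R_0 := r_{\mathbf{1}}$, whose out-tree reaches all terminals. From these out-trees one builds $h$ connected branch sets, one per coordinate, made pairwise adjacent because each $c_i$'s tree reaches the $h-1$ terminals $x_j$, $j\ne i$. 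Vertex-disjointness is enforced by the usual uncrossing/rerouting step: whenever two of the trees meet at a vertex $z$, reroute one of them through $z$; this can only shorten the relevant root-to-terminal paths, so every inequality of the form $d(c_i,z)+d(z,x_j)\le R_i+\rho_j$ that the construction relies on survives --- precisely because $\rho_j$ depends only on the endpoint $x_j$. The strict inequalities $d(c_i,x_i)>R_i+\rho_i$ are what keep the branch set carrying $x_i$ from being absorbed into $c_i$'s tree, so the $h$ branch sets stay genuinely separate; this yields a $K_h$ minor in the underlying undirected graph, a contradiction, hence $\Pdim(\dirMBs_{G,\Delta})\le h-1$ and \Cref{cor:vcdim} follows.

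The step I expect to be the crux --- and the one deserving a careful write-up --- is exactly this uncrossing: showing that the branch sets can simultaneously be made vertex-disjoint, kept connected, kept pairwise adjacent, and kept consistent with all the ``$x_j$ near $c_i$'' and ``$x_i$ far from $c_i$'' inequalities, now in a \emph{directed}, weighted, per-coordinate-shifted setting. Two subsidiary points also need attention: (i) one works with shortest-path \emph{out}-trees and takes the minor in the underlying undirected graph, and ties among shortest paths should be broken by a fixed global rule so that ``trees'' really are trees; and (ii) the bound being $h-1$ rather than $h$ should be double-checked for off-by-one issues --- in particular that the all-ones pattern together with the $h$ exclude-one patterns already drive the construction, so that patterns with two or more zeros are never needed.
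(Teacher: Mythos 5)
Your first paragraph is exactly the paper's (implicit) proof of \Cref{cor:vcdim}: the corollary is an immediate consequence of the two observations preceding it together with \Cref{thm:mb_pseudodim}, and nothing more is needed at that level.

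The sketch you then give for \Cref{thm:mb_pseudodim} itself, however, departs from the paper's argument in a way that leaves a real gap. The paper does \emph{not} use the exclude-one patterns $\mathbf{1}-e_i$. Instead, for each pair $\{i,j\}$ it invokes the shattering hypothesis on the two-ones pattern, obtaining a witness $(t_{ij},r_{ij})$ with $d(t_{ij},v_k)\le r_{ij}+\delta_{y_k}$ exactly for $k\in\{i,j\}$, and --- crucially --- it chooses $r_{ij}$ \emph{minimal} among all witnesses realizing that pattern. With these witnesses, the hypotheses of the minor-building lemma (\Cref{lem:minor_building}) are verified by two short triangle-inequality computations: \Cref{claim2} shows paths $\pi(t_{ij},v_i)$ and $\pi(t_{pq},v_p)$ for disjoint index pairs are automatically vertex-disjoint, and \Cref{claim3} shows paths sharing an index meet only at $t_{ij}$, using the minimality of $r_{ij}$ to rule out an interior intersection (a common interior vertex $x$ would yield a witness with strictly smaller base radius). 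No uncrossing or rerouting is performed; the only delicate point (non-unique shortest paths, shared prefixes at $t_{ij}$) is handled once and for all inside the proof of \Cref{lem:minor_building} via the Isolation Lemma. Your exclude-one construction has no analogue of the minimality condition (c), and each of your trees touches $h-1$ of the terminals, so they overlap heavily by design; the "usual uncrossing" you appeal to is precisely the step that is not known to go through here, as you yourself flag. Note also that the strict inequality $d(c_i,x_i)>R_i+\rho_i$ does not prevent $c_i$'s tree from passing through $x_i$ en route to some $x_j$ with $\rho_j>\rho_i$. To repair the sketch, replace the exclude-one witnesses by the paper's pair witnesses with minimal base radius and verify the two disjointness claims directly; the per-coordinate shifts $\delta_{y_k}$ then indeed cancel harmlessly in the inequalities, as you anticipated.
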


We can use the following generalization of the Sauer-Shelah Lemma 
to multivalued functions proven by Haussler and Long~\cite{HausslerL95}, which gives polynomial growth of multivalued functions in terms of both the size of the ground set and the number of different values the functions can take.

\begin{theorem}[Generalized Sauer-Shelah Lemma. Corrolary~3~in~\cite{HausslerL95}] \label{thm:gen_sauer}
Let $\cF$ be a family of functions mapping a ground set with $n$ elements to $[\ell]$.
If the pseudodimension of $\cF$ is at most $d$, then $|\cF| \le O(n^d\ell^d)$.
\end{theorem}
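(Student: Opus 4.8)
The plan is to derive \Cref{thm:gen_sauer} from the ordinary Boolean Sauer--Shelah lemma by a ``thermometer'' encoding that turns each $\ell$-valued function into a monotone bit vector over an enlarged, but still polynomial-size, ground set. First I would fix notation: write the ground set as $X$ with $|X| = n$, and for each $f \in \cF$ define a Boolean function $g_f$ on $X \times [\ell]$ by $g_f(x,t) = 1$ exactly when $f(x) \le t$. Let $\cF' = \{g_f : f \in \cF\} \subseteq 2^{X \times [\ell]}$. This encoding is injective, because $f(x) = \min\{t \in [\ell] : g_f(x,t) = 1\}$, so $|\cF| = |\cF'|$.

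Next I would show $\VCdim(\cF') \le \Pdim(\cF)$. Suppose $\cF'$ shatters a set $T = \{(x_1,t_1),\dots,(x_d,t_d)\} \subseteq X \times [\ell]$. For a fixed first coordinate $x$, the map $t \mapsto g_f(x,t)$ is nondecreasing, so the pattern $(g_f(x,t),g_f(x,t')) = (1,0)$ with $t < t'$ is never realized by any $f$; hence $\cF'$ cannot shatter two pairs that share a first coordinate, and therefore $x_1,\dots,x_d$ are pairwise distinct. Since $T$ is shattered, for every bit vector $b \in \{0,1\}^d$ there is $f \in \cF$ with $g_f(x_i,t_i) = b_i$ for all $i$, i.e.\ $f(x_i)\le t_i \iff b_i = 1$; with the $x_i$ distinct this is precisely the pseudo-shattering condition of \Cref{def:pdim} for the points $x_1,\dots,x_d$ with thresholds $t_1,\dots,t_d$. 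Thus $\Pdim(\cF) \ge d = |T|$, and in particular $\VCdim(\cF') \le \Pdim(\cF)$.

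Finally, applying the Boolean Sauer--Shelah lemma to $\cF'$, whose ground set $X \times [\ell]$ has size $n\ell$ and whose VC dimension is at most $d$, yields $|\cF'| \le \sum_{i=0}^{d}\binom{n\ell}{i} = O\big((n\ell)^d\big) = O(n^d\ell^d)$, and since $|\cF| = |\cF'|$ this completes the proof.

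I do not anticipate a genuine obstacle in this argument; the one point that deserves care is the monotonicity observation that forces the shattered first coordinates to be distinct --- without it the reduction would only control the (potentially much larger) VC dimension of $\cF'$ rather than $\Pdim(\cF)$. A purely cosmetic matter is that one could instead take $X \times \{1,\dots,\ell-1\}$ as the ground set, which changes nothing asymptotically; and the sharper Haussler--Long bound $\sum_{i \le d}\binom{n}{i}\ell^i$ would require a more delicate argument than this reduction, but it is not needed for the $O(n^d\ell^d)$ form stated here.
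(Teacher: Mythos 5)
Your argument is correct. Note that the paper does not prove \Cref{thm:gen_sauer} at all --- it is imported as a black box (Corollary~3 of Haussler--Long), so any self-contained derivation is by definition a different route. Your thermometer reduction is a clean and valid one: the encoding $g_f(x,t)=[f(x)\le t]$ is injective, the monotonicity of $t\mapsto g_f(x,t)$ correctly forces any set shattered by $\cF'$ to have distinct first coordinates, and that shattered set then witnesses pseudo-shattering in the exact sense of \Cref{def:pdim}, giving $\VCdim(\cF')\le\Pdim(\cF)$; Boolean Sauer--Shelah on the ground set of size $n\ell$ then yields $O((n\ell)^d)$. You are also right that this recovers only the $O(n^d\ell^d)$ form and not Haussler--Long's sharper $\sum_{i\le d}\binom{n}{i}\ell^i$, but the weaker form is all the paper ever invokes (e.g.\ in \Cref{cor:restrict} and the pattern-counting arguments), so nothing downstream is affected. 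One cosmetic mismatch worth flagging: \Cref{def:pdim} writes the codomain as $\{0,\ldots,\ell-1\}$ while \Cref{thm:gen_sauer} writes $[\ell]$; your proof works verbatim for either convention after relabeling, so this is the paper's inconsistency, not yours.
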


The following lemma is a generalization
of a well-known fact concerning the VC dimension.

\begin{lemma}\label{lem:pdim-restrict}
 Let $\cF$ be a family of functions as in~\Cref{thm:gen_sauer}. Let $S$
 be any subset of the ground set of~$\cF$. Call $\cF_S=\cF\cap {[\ell]}^S$ a \emph{restriction} of $\cF$ to $S$. Then, $\Pdim(\cF_S)\leq \Pdim(\cF)$. 
\end{lemma}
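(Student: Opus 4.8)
The plan is to prove the contrapositive-style statement directly: show that any configuration witnessing pseudodimension $d$ in the restriction $\cF_S$ also witnesses pseudodimension $d$ in $\cF$ itself. First I would recall the definition of pseudodimension from~\Cref{def:pdim}: $\Pdim(\cF_S) \geq d$ means there exist points $x_1, \ldots, x_d \in S$ and thresholds $y_1, \ldots, y_d \in [\ell]$ such that every bit vector $\dirb \in \{0,1\}^d$ is realized by some $g \in \cF_S$ in the sense that $g(x_i) \leq y_i \iff b_i = 1$ for all $i$. The key observation is that each such $g \in \cF_S = \cF \cap [\ell]^S$ is, by definition of the restriction, the pointwise restriction to $S$ of some function $f \in \cF$ (indeed $\cF_S$ is literally $\{f|_S : f \in \cF\}$, or can be taken to be; one should be slightly careful that the notation $\cF\cap[\ell]^S$ means exactly this restriction — I would state this explicitly to avoid ambiguity).

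Given that, the argument is essentially a one-line lift: since $x_1, \ldots, x_d \in S \subseteq$ (ground set of $\cF$), the very same points and the very same thresholds work for $\cF$. For any bit vector $\dirb$, take the $g \in \cF_S$ realizing it; pick any $f \in \cF$ with $f|_S = g$; then since all the $x_i$ lie in $S$, we have $f(x_i) = g(x_i)$ for each $i$, hence $f(x_i) \leq y_i \iff g(x_i) \leq y_i \iff b_i = 1$. So the points $x_1, \ldots, x_d$ with thresholds $y_1, \ldots, y_d$ are shattered (in the pseudodimension sense) by $\cF$, giving $\Pdim(\cF) \geq d$. Taking $d = \Pdim(\cF_S)$ yields the claim.

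There is really no substantive obstacle here — the lemma is the routine monotonicity-under-restriction property, exactly parallel to the classical VC-dimension fact mentioned in the excerpt. The only mild care needed is notational bookkeeping: making sure the definition of $\cF_S$ as a restriction is unpacked so that "some $g \in \cF_S$" can be pulled back to "some $f \in \cF$", and noting that thresholds live in $[\ell]$ regardless of the ground set so nothing about them changes. I would write the proof in three or four sentences, emphasizing that the witnessing points were already required to lie in $S$, so restricting the ground set only shrinks (weakly) the family of available functions and hence can only shrink (weakly) the pseudodimension.

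\begin{proof}
Recall that by definition $\cF_S = \cF \cap {[\ell]}^S = \{f|_S : f \in \cF\}$ is the family of restrictions of functions of $\cF$ to the subset $S$ of the ground set. Let $d = \Pdim(\cF_S)$ and let $x_1, \ldots, x_d \in S$ together with thresholds $y_1, \ldots, y_d \in [\ell]$ witness this, so that for every $\dirb = (b_1, \ldots, b_d) \in \{0,1\}^d$ there is some $g \in \cF_S$ with $g(x_i) \le y_i \Leftrightarrow b_i = 1$ for all $i$. We claim the same points $x_1, \ldots, x_d$ and thresholds $y_1, \ldots, y_d$ witness $\Pdim(\cF) \ge d$. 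Indeed, fix any $\dirb \in \{0,1\}^d$ and let $g \in \cF_S$ realize it as above; choose $f \in \cF$ with $f|_S = g$. Since $x_1, \ldots, x_d \in S$, we have $f(x_i) = g(x_i)$ for every $i$, and therefore $f(x_i) \le y_i \Leftrightarrow g(x_i) \le y_i \Leftrightarrow b_i = 1$ for all $i$. As $\dirb$ was arbitrary, the tuple $(x_1, \ldots, x_d)$ with thresholds $(y_1, \ldots, y_d)$ is shattened in the sense of~\Cref{def:pdim} by $\cF$, so $\Pdim(\cF) \ge d = \Pdim(\cF_S)$.
\end{proof}
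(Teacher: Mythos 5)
Your proof is correct and is a more detailed unpacking of exactly the argument the paper gives in one sentence: the witnessing points for $\cF_S$ lie in $S$, hence also in the ground set of $\cF$, and any realizing $g \in \cF_S$ lifts to an $f \in \cF$ agreeing with it on those points. (Minor typo: ``shattened'' should be ``shattered.'')
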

\begin{proof}
The lemma follows simply from the fact that, in the restricted case, the elements $x_1,\ldots,x_d$ in~\Cref{def:pdim} maximizing $d$ are allowed to be chosen only from
a subset $X$ of the ground set.
\end{proof}
We will often use the following combination of \Cref{thm:gen_sauer}~and~\Cref{lem:pdim-restrict}.
\begin{corollary}\label{cor:restrict}
Let $\cF$ be a family of functions as in~\Cref{thm:gen_sauer} and let $S$ be any subset of the ground set of $\cF$. Let $d=\Pdim(\cF)$. Then
$|\cF_S|=O(|S|^d\ell^d)$.
\end{corollary}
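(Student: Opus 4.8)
The plan is to prove Corollary~\ref{cor:restrict} by simply composing the two preceding results, \Cref{thm:gen_sauer} and \Cref{lem:pdim-restrict}, which is all the corollary asks for. First I would invoke \Cref{lem:pdim-restrict}: since $S$ is a subset of the ground set of $\cF$, the restriction $\cF_S = \cF \cap {[\ell]}^S$ satisfies $\Pdim(\cF_S) \le \Pdim(\cF) = d$. The key point here is that $\cF_S$ is itself a family of functions of the same type as in \Cref{thm:gen_sauer}, namely functions from a ground set (now $S$, with $|S|$ elements) to $[\ell]$, so the hypotheses of that theorem transfer verbatim to $\cF_S$.

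Next I would apply \Cref{thm:gen_sauer} to the family $\cF_S$ rather than to $\cF$. That theorem states that a family of functions mapping an $n$-element ground set to $[\ell]$ with pseudodimension at most $d$ has size $O(n^d \ell^d)$. Taking $n = |S|$ and using the bound $\Pdim(\cF_S) \le d$ from the previous step, we conclude $|\cF_S| = O(|S|^d \ell^d)$, which is exactly the claimed bound. There is nothing delicate in the argument — it is a one-line deduction — so there is no real obstacle; the only thing to be careful about is making sure the pseudodimension bound used in \Cref{thm:gen_sauer} is the one for the restricted family $\cF_S$ (which could a priori be smaller than $d$, but the theorem's bound is monotone in $d$, so using $d$ as an upper bound is harmless). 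I would write the proof as two sentences, the first citing \Cref{lem:pdim-restrict} and the second citing \Cref{thm:gen_sauer} applied to $\cF_S$.

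For completeness, here is the proof as I would record it.

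\begin{proof}
By \Cref{lem:pdim-restrict}, the restriction $\cF_S$ is a family of functions mapping the ground set $S$, which has $|S|$ elements, to $[\ell]$, and $\Pdim(\cF_S)\le \Pdim(\cF)=d$. Applying \Cref{thm:gen_sauer} to the family $\cF_S$ with ground set of size $|S|$ and pseudodimension at most $d$ yields $|\cF_S|=O(|S|^d\ell^d)$.
\end{proof}
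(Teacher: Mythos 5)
Your proof is correct and is exactly the intended argument: the paper states \Cref{cor:restrict} as the immediate combination of \Cref{lem:pdim-restrict} (to bound $\Pdim(\cF_S)$ by $d$) and \Cref{thm:gen_sauer} applied to $\cF_S$ with ground set of size $|S|$. Your additional remark that monotonicity of the bound in $d$ makes the inequality $\Pdim(\cF_S)\le d$ harmless is a sensible, correct clarification.
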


\section{Exact distance oracles in weighted digraphs}\label{sec:weighted}

In this section, we describe the first application of VC dimension techniques to directed weighted graphs with a randomized version of \Cref{thm:weighted_do}. Specifically, we show that given a \emph{weighted} $K_h$-minor-free digraph $G=(V,E)$ on $n$ vertices, where $h=O(1)$, there exists a subquadratic-space \emph{randomized} exact distance oracle that can handle queries in $O(\log n)$ expected time. We derandomize this distance oracle in \Cref{sec:det_weighted_do}.

First, we compute an $r$-division $\rdiv$ of $G$, and for every two vertices $u,v\in V(P)$ of each piece $P\in \rdiv$, we precompute the distance $d_G(u,v)$ between them.

Suppose we wish to answer a distance oracle query between a vertex $s\in V$ and a vertex $t\in V$.
Suppose that $t$ lies in a piece $P\in \mathcal{R}$.
Since distances between vertices inside $P$ are precomputed, it remains to handle the case when $s\notin V(P)$.  
We know that the shortest path between $s$ and $t$ must pass through one of the $\ell = O(\sqrt{r})$ vertices $v_1,\dots, v_\ell \in \bnd P$.
Thus, the length of the shortest path satisfies the following equation.
\[d(s, t) = \min_{1\le i \le \ell} \left(d(s, v_i) + \dist_P(v_i, t)\right)\]

A crucial idea of our algorithm to find the index $i^*$ that attains the minimum is the so-called Fredman's trick \cite{Fredman76} that has been used extensively for faster algorithms for the APSP problem \cite{Fredman76,Chan10,Williams18,Chan20,ChanWX23}.
The so called ``trick'' is that for two indices $i$ and $j$, determining whether \linebreak $d(s, v_i) + d(v_i, t) <  d(s, v_j) + d(v_j, t)$ is equivalent to testing:
\begin{equation}\label{eq:fredman}
    d(s, v_i) -  d(s, v_j) <  d(v_j, t) - d(v_i, t). 
\end{equation} 
While this is nothing more than a simple rearrangement of the inequality, observe that the right hand side now
involves distances solely within the piece $P$ between a vertex inside $P$ and two vertices on the boundary~$\bnd{P}$ of which there are only $O(r^2)$ possibilities.
Furthermore, the left-hand side is a difference of distances from $s$ to two boundary vertices.

We combine Fredman's trick with an algorithm reminiscent of quick select to quickly zoom in on the vertex on the boundary of a piece that is on the shortest path between $s$ and $t$. Specifically, we will reduce the problem of finding the shortest path to a problem about finding the minimum element in a hidden permutation $\sigma = (\sigma_1, \sigma_2, \dots, \sigma_\ell)$ that sorts the paths through the boundaries: 
\[ d(s, v_{\sigma_1}) + d(v_{\sigma_1}, t) 
\le d(s, v_{\sigma_2}) + d(v_{\sigma_2}, t)
\le \quad \cdots  \quad
\le d(s, v_{\sigma_\ell}) + d(v_{\sigma_\ell}, t).
\]
Our data structure constructs an oracle that given an index $x$ where $x = \sigma_i$, we can obtain a pointer to an unordered set $\{\sigma_1, \sigma_2, \dots, \sigma_{i-1}\}$ stored in an array that we have random access to. Thus, to find $\sigma_1$, it suffices to solve the following problem.

\begin{problem}[Min-finding problem]\label{prob:min_finding}
Let $\sigma_1, \dots, \sigma_n$ be a hidden permutation of the numbers from $1$ to~$n$. Suppose we have an oracle that takes as a query a number $x$, and returns a pointer to an unsorted set $S_x$ containing the elements $\{\sigma_1, \dots, \sigma_{i-1}\}$ where $\sigma_i = x$ (if $\sigma_1 = x$ then $S_x$ is the empty set).
The goal is to find $\sigma_1$ in the minimum number of queries.
\end{problem}

There is a straightforward algorithm that repeatedly makes random queries to the set that contains $\sigma_1$. With probability $1/2$, a random choice will decrease the size of the set by at least $1/2$. By a standard Chernoff bound, we will find $\sigma_1$ in at most $O(\log n)$ queries with high probability.

\paragraph*{Constructing the distance oracle.}
Given an $r$-division $\rdiv$,
for every vertex $s\in V$, piece $P\in\rdiv$, and every vertex $v\in\bnd{P}$, we compute and store the distance from $s$ to $v$ in $G$. Also, as indicated earlier, for every pair of vertices $u, v \in V(P)$ we also store the distances from $u$ to $v$ in a lookup table.

Fixing a vertex $v_j$ on the boundary of a piece $P$, we define the following shift vector that is motivated by Fredman's trick: 
\[\Delta_j = \{d(v_j, t) - d(v_i, t) \mid v_i \in \bnd{P}, t \in V(P) \}\cup \{-\infty,\infty\}.\]
Now for any vertex $s\in V$, consider the vector $Y_{s,j}$ that is the restriction of $\dirMB(s,d(s, v_j), \Delta_j)$ to $\bnd{P}$, where
\[ Y_{s,j}[i] = \dirMB(s,d(s, v_j), \Delta_j)[v_i].\]
For every $s\in V$ and every $j\in [\ell]$ we can store a pointer to $Y_{s,j}$.
Now for any fixed vertex $t\in V(P)$, the vector $Y_{s,j}$ encodes enough information to determine which of the other vertices of $v_i\in \bnd P$ have a shorter path from $s$ to $t$ through $v_i$ than the path through $v_j$. 
To be precise, knowing $Y_{s,j}$ allows us to compute the following set of vertices:
\[  \{v_i \in \bnd{P} \mid d(s, v_i) - d(s, v_j) < d(v_j, t) - d(v_i, t)\} := X_t[Y_{s, j}]. \]
At vertex $t$, we can store all this information in a dictionary $X_t$ that takes as input a pointer to $Y_{s, j}$ and outputs this set of vertices through which there are shorter paths.

\paragraph*{Randomized distance query.} 
Given a query from vertex $s \in V$ to a vertex $t$ that lies in piece $P$ that has boundary vertices $v_1,\dots, v_\ell \in \bnd{P}$, the following is a randomized algorithm for finding the distance between $s\not\in V(P)$ and $t$ (recall that if $s\in V(P)$, the solution is stored in a lookup table).

\begin{enumerate}
    \item Pick any vertex $v_i \in \bnd P$ as a guess of the vertex the shortest path from $s$ to $t$ passes through.
    \item To check if $v_i$ is indeed a vertex on the shortest path from $s$ to $t$, we check whether $X_t[Y_{s, i}]$ is empty by passing the pointer to $Y_{s, i}$ that we have stored at $s$ to the dictionary $X_t$ stored at $t$.
    \item  If $X_t[Y_{s, i}]$ is empty, we conclude that $v_i$ is on the shortest path from $s$ to $t$ and we can output $d(s, v_i) + d(v_i, t)$ as the answer.
    Otherwise, we pick a random vertex from within $v_{i'}\in X_t[Y_{s, i}]$ and repeat the previous step with $v_{i'}$ instead of $v_i$.
\end{enumerate}

\paragraph*{Correctness.}
Observe that the algorithm described is exactly the algorithm for the min-finding problem (\Cref{prob:min_finding}). The hidden permutation we are looking for the minimum of is the sorted list of \linebreak $d(s,v_i) + d(v_i, t)$, and the dictionary $X_t$ implements the oracle that outputs the set.

\paragraph*{Query runtime.}
The randomized algorithm to the min-finding problem takes $O(\log \ell)$ queries in expectation and with high probability.
Since we only follow $O(1)$ pointers for each query and $\ell = O(\sqrt{r})$, the query algorithm takes $O(\log r)$ expected time.

\paragraph*{Space analysis.} 
To store all distances from vertices to boundaries of the $r$-division takes total space $O(n^2/\sqrt{r})$, and the size of the lookup tables between distances of pairs of vertices in the same piece is $O(nr)$.

Let us now fix a piece $P\in\rdiv$.
Since $|\Delta_j|=O(r^{3/2})$, the number of different vectors $Y_{s,j}$ for a fixed $j\in [\ell]$ is at most $O\left(\left(\sqrt{r}\cdot r^{3/2}\right)^{h-1}\right) = O(r^{2(h-1)})$ by~\Cref{thm:mb_pseudodim}~and~\Cref{cor:restrict}, and thus at most $O(r^{2h-3/2})$ per piece as $\ell = O(\sqrt{r})$.
To store the map $X_t$, we need to store a pointer to a subset of vertices of the boundary $\bnd{P}$ which takes $O(\sqrt{r})$ space.
Since there are $O(r)$ maps $X_t$,
their total size is $O(r^{2h})$.

By summing through all pieces $P\in\rdiv$< we obtain that the total space used
is $O\left(n^2/\sqrt{r}+nr^{2h-1}\right)$.
Setting $r = n^{2/(4h-1)}$ yields a space usage of $O(n^{2-1/(4h-1)})$ and a query time of $O(\log n)$.

\section{Technical overview: unweighted applications}\label{sec:overview}
Let $G$ be an unweighted $K_h$-minor-free directed graph. Recall the template of~\Cref{sec:template}; all our algorithmic developments for unweighted graphs follow that template.
Let~$\rdiv$ be an $r$-division of $G$. In the following, we generally focus on computing distances (or reachability) between two vertices located in distinct pieces of~$\rdiv$. The number of single-piece pairs is only $O(nr)$, so they are easier to handle.

\subsection{Faster algorithms for unweighted digraphs}

\paragraph{Base patterns.} For constructing unweighted distance oracles, computing the eccentricities and the Wiener index, the main set system that we use (and was also used by~\cite{LeW24} in their distance oracle, reiterated in~\Cref{sec:unweighted-oracle}) is $\dirBs_G$.
The patterns for a piece $P\in \rdiv$ are the intersections of balls from $\dirBs_G$ with~$V(P)$.
Recall that since $\dirBs_G$ has VC dimension $h-1$ (\Cref{cor:vcdim}), the number of patterns per piece is $O(r^{h-1})$ (\Cref{cor:restrict}).

For each pair $(s,P)$ such that $s\notin V(P)$, we would like to store not one (as in the running example from~\Cref{sec:template}), but $|\bnd{P}|$ pattern pointers.
Namely, we would like to store pointers to each of the patterns
$\dirB(s,\dist(s,b))\cap V(P)$ for $b\in \bnd{P}$. As observed by~\cite{LeW24}, these pointers are useful for the following reason.

Suppose $b_1,\ldots,b_k$ are the boundary vertices of $P$ sorted non-decreasingly
by $\dist(s,b_i)$.
Let $t\in V(P)$.
Let~$j$ be the largest index such that $t\notin \dirB(s,\dist(s,b_j))$.
If $j$ does not exist, then $t\in \dirB(s, \dist(s, b_1))$. However, $b_1$ must be the closest vertex in $P$ from $s$. In this case we can easily determine that $d(s, t) = d(s, b_1)$.
Otherwise, one can prove that
\begin{equation}\label{eq:ovw1}
\dist(s,t)=\dist(s,b_j)+\dist_P\left(\dirB(s,\dist(s,b_j))\cap V(P),t\right).
\end{equation}
Consequently, by~\eqref{eq:ovw1}, for any $t\in V(P)$, the distance between $s$ and $t$ can be computed based on:
\begin{enumerate}[label=(\arabic*)]
    \item The distances from $s$ to $\bnd{P}$ in $G$ (a part of the ``global'' data),
    \item The distance in $P$ from a pattern $\dirB(s,\dist(s,b_j))\cap V(P)$ and $t\in V(P)$ (which is a part of the ``local'' data for the piece $P$).
\end{enumerate}
Note that the piecewise local data can be computed in $O((n/r)\cdot r^{h-1}\cdot r)=\Ot(nr^{h-1})$ time, whereas the global data takes $\Ot(n^2/\sqrt{r})$ time to compute by running Dijkstra's algorithm to each $\bnd{\rdiv}$.

As shown by~\cite{LeW24}, given the global and the local data, and the pointers to the stored patterns, one can use~\eqref{eq:ovw1} to answer a distance query in $O(\log{n})$ time since the index $j$ can be found via binary search using the local data for piece $P$. The formula~\eqref{eq:ovw1} allows computing the eccentricities easily as well, and is also the base of our algorithm for computing the Wiener index (\Cref{sec:wiener}).

The challenge in constructing the unweighted distance oracle in subquadratic time (and also computing the Wiener index) lies in computing the pointers to the $O(\sqrt{r})$ patterns 
$\dirB(s,\dist(s,b))\cap V(P)$, $b\in \bnd{P}$, and generating the required patterns themselves.
This is also the element of the template of~\Cref{sec:template} that the oracle of~\cite{LeW24} was missing and thus a non-trivial construction algorithm was not given.

\paragraph{Proxy patterns.} The high-level idea to deal with the challenge is to use another layer of ``proxy'' patterns.
Fix a piece $P$. Suppose we want to find the required base pattern pointers for all $s\in V\setminus V(P)$.

We use the following observation. For any $x\in\mathbb{N}$, the pattern of the form $\dirB(s,\dist(s,b'))\cap V(P)$ is uniquely determined by
$P$, the value $\dist(s,b')-x$, and the vector $(\dist(s,b)-x)_{b\in \bnd{P}}$. Intuitively, this is because every path from $s$ to $V(P)$ has to pass through some last vertex in $\bnd{P}$, and thus the ball $\dirB(s,d(s,b'))\cap V(P)$ can be expressed as the union of balls $\dirB_P(b,\dist_G(s,b')-\dist_G(s,b))$
inside $P$, for $b\in \bnd{P}$. At the same time, each radius $\dist(s,b')-\dist(s,b)$ can be computed based on the values $\dist(s,b')-x$ and $\dist(s,b)-x$ only.

Consider first a special case when $P$ is strongly connected. Then, for any two boundary vertices $b,b'\in \bnd{P}$, we have $|\dist(s,b)-\dist(s,b')|<r$. By our observation above, if we set $b^*\in\bnd{P}$ to be the boundary vertex of $P$ with the minimum distance from $s$, then the required balls $\dirB(s,\dist(s,b))\cap V(P)$ are uniquely determined by the vector
\begin{equation}\label{eq:ovw2}
\delta_{P,s}:=\left(\dist(s,b)-\dist(s,b^*)\right)_{b\in \bnd{P}}\in [r]^{\bnd{P}},
\end{equation}
which constitutes the other (``proxy'') type of pattern that we use. To see that there cannot be too many patterns of this kind, note that each pattern~\eqref{eq:ovw2} can also be seen as an element of the system $\dirLP_{G,[r]}$ or a multiball vector (see~\Cref{sec:setsystems})
$\dirMB\left(s,\dist_G(u,b^*),[r]\cup\{\pm\infty\}\right)$ from $\dirMBs_{G,[r]\cup\{\pm\infty\}}$ restricted to $\bnd{P}$. That set, in turn, has size $O((|\bnd{P}|\cdot r)^{h-1})=O(r^{3(h-1)/2})$ by~\Cref{thm:mb_pseudodim}~and~\Cref{cor:restrict}.

Observe that for each pair $(s,P)$, the pattern $\delta_{P,s}$ can be obtained from the ``global'' data in $O(\sqrt{r})$ time.
For each of the $O(r^{3(h-1)/2})$ encountered patterns for $P$, the $O(\sqrt{r})$ originally required balls can be computed in $O(\poly(r))$ time. This is why the vectors $\delta_{P,s}$ can be called proxy patterns.

\paragraph{The general case and dynamic strings.} The strong connectivity assumption is very powerful and allows mapping each $(s,P)$ to a \emph{single} proxy pattern.
To handle the general case, we first observe (\Cref{l:interval-ball-unique}) that if the distances $\dist_G(s,b)$, for $b\in \bnd{P}$, are spread apart too much, then we do not need all the exact offsets $\dist(s,b')-\dist(s,b)$ to uniquely determine $\dirB(s,b')\cap V(P)$.
In fact, we only need the exact offsets for those $b\in\bnd{P}$ whose distances from $s$ are $O(r)$ far from $\dist(s,b')$.

Still, if the distances in question are too far apart, it is unclear how to leverage the above observation to handle the general case with a single proxy pattern per $(s,P)$ pair. In fact, it is not even clear whether $o(\sqrt{r})$ such patterns can be enough. This is problematic since the proxy patterns are inherently $\Theta(\sqrt{r})$-space objects. If we were to compute all of them explicitly, the running time would be quadratic in $n$.

Our technical idea to deal with this problem is to define the proxy patterns in a way that guarantees that the $O(\sqrt{r})$ patterns for $(s,P)$ do not differ too much so that these patterns can be represented implicitly. 
Indeed, the proxy patterns from $\{-r,\ldots,r\}^{\bnd{P}}$ that we end up using for $(s,P)$ are such that each subsequent of them can be obtained from the previous one using amortized $O(1)$ coordinate changes.
Consequently, if we view these patterns as $|\bnd{P}|$-length strings over the alphabet of size $O(r)$, then we can compute a fingerprint representation (allowing $O(1)$-time equality tests) of them all in $\Ot(\sqrt{r})$ time using dynamic strings data structures~\cite{AlstrupBR00, GawrychowskiKKL18, MehlhornSU97}.
This allows computing all the $O\left(n^2/\sqrt{r}\right)$ proxy pattern fingerprints in $\Ot\left(n^2/\sqrt{r}\right)$ total time (through all pairs $(s,P)$).

\paragraph{Speeding up ball computation.} We also find a further application of the dynamic strings data structure.
Namely, generating all the original patterns $\dirB(s,\dist(s,b))\cap V(P)$ from the encountered proxy patterns requires testing whether a ball of this kind was not yet generated and stored. Since each such a ball is uniquely determined by a proxy pattern, we could generate the $O(\sqrt{r})$ balls for a proxy pattern naively in $O(r^{3/2})$ time. 
However, these balls are nested and therefore we can compute their fingerprints in $\Ot(r)$ time by taking advantage of a dynamic strings data structure once again.

\subsection{Decremental reachability}
\paragraph{The data structure.} The ``global'' data in our decremental reachability data structure consists of $O(n/\sqrt{r})$ decremental single-source reachability (SSR) data structures (run on the reverse graph $G^R$ for each possible source in $\bnd{\rdiv}$) maintaining reachability between all pairs $V\times \bnd{\rdiv}$ subject to deletions in $G$. In the basic randomized variant of our data structure, one can use a near-optimal decremental SSR data structure~\cite{BernsteinPW19} with near-linear total update time. Then, maintaining the global data subject to deletions issued to $G$ takes $\Ot(n^2/\sqrt{r})$ time in total.

Fix a piece $P\in\rdiv$. The patterns we use for $P$ are the subsets of $\bnd{P}$ that are currently reachable in $G$ from some $s\in V$.
Note that for a fixed $s\in V$, the subset of $\bnd{P}$ reachable from $s$ -- which we hereby define to be the pattern $\delta_{P,s}$ -- can only shrink with time.
As a result, each pattern $\delta_{P,s}$ changes only $O(\sqrt{r})$ times.

The local data for a pattern $\delta$ for $P$ is the current set of $V(P)$ reachable from the subset $\delta\subseteq \bnd{P}$ via a path in~$P$. This local data can be also maintained using a decremental SSR data structure on $P$ (with an auxiliary super-source modelling $\delta$) so that the total time spent per pattern $\delta$ is $\Ot(r)$.

Note that the global and local data together can be used to answer any reachability query $(s,t)$ where $s\in V\setminus V(P)$ and $t\in V(P)$ if only we can maintain a pointer from the pattern $\delta_{P,s}$ to the local data of $P$ associated with that pattern. 
To guarantee that, we again leverage the dynamic strings data structure to map the patterns into $O(1)$-size fingerprints that we can use as an addressing layer for the patterns of $P$.

Since $\delta_{P,s}$ undergoes $O(\sqrt{r})$ updates in total, its fingerprint can be maintained at all times in $\Ot(\sqrt{r})$ total time.
Whenever a fingerprint unseen before is encountered, the appropriate local data for it (i.e., the decremental SSR data structure inside a piece) is initialized.

\paragraph{Analysis and bottleneck paths.} The total update time of the data structure can be seen to be $\Ot\left(n^2/\sqrt{r}\right)$ plus $\Ot((n/r)\cdot r)=\Ot(n)$ times the total number of patterns encountered for a single piece $P$.
We analyze this quantity by considering the entire (unknown) sequence of updates and observing that in fact every pattern encountered is a restriction of some \emph{bottleneck ball} to $\bnd{P}$ in a related graph $G'$ whose edge weight constitute timestamps of the respective edge deletions.

For a weighted digraph $H$, the \emph{bottleneck length} of a path $u\to v$ is defined to be the maximum edge weight on that path. The \emph{bottleneck distance} $\beta_{H}(u,v)$ between $u$ and $v$ is defined as the minimum bottleneck length of a $u\to v$ path in $H$. The bottleneck ball of $s$ in $H$ is the set of vertices of $H$ with bottleneck distance from $s$ within some threshold. Bottleneck distances are less general than standard distances: for any $H$ they can be encoded using the latter by mapping the $i$-th smallest weight of $H$ to, e.g., $n^i$.
As a result, the bottleneck balls of $K_h$-minor-free graphs also have VC dimension at most $h-1$ by~\Cref{cor:vcdim}.

Since bottleneck balls in $G'$ have VC dimension $h-1$, we obtain that for each piece, at most $O\left(|\bnd{P}|^{h-1}\right)=O\left(r^{(h-1)/2}\right)$ different patterns can be observed for any order of edge deletions issued to the initial graph $G$. We conclude that the total update time of the data structure is $\Ot\left(n^2/\sqrt{r}+nr^{(h-1)/2}\right)$. Setting $r=n^{2/h}$ yields a total update time of $\Ot(n^{2-1/h})$.

\paragraph{Derandomization.} The best-known deterministic decremental SSR data structures for general graphs~\cite{BernsteinGS20} are not as efficient as the randomized alternative~\cite{BernsteinPW19}. However, many of the used decremental SSR instances in our application are run on related graphs.
As a result, we can use a certain decremental low-diameter reachability emulator data structure for minor-free graphs from~\cite{Karczmarz18} so that we can replace decremental SSR data structures for general graphs with very simple near-linear deterministic decremental SSR data structures for $\polylog(n)$-diameter digraphs~\cite{EvenS81, HenzingerK95}. This incurs no additional overhead cost.

\paragraph{Further applications.} Our data structure can be very easily turned into a bottleneck distance oracle for weighted graphs with space and construction time matching the total update time $\Ot(n^{2-1/h})$ of the decremental reachability oracle, and $O(\log{n})$ query time. This in turn can be used to obtain a strongly polynomial $(1+\eps)$-approximate (standard) distance oracle for real-weighted digraphs with the same preprocessing, space, and query bounds (see~\Cref{sec:approx} for details). 

\section{Pseudodimension of multiballs in minor-free graphs}\label{sec:pseudodimenson}
Let $G = (V, E)$ be a directed graph where the edges have real-weights.
For simplicity, we will assume that all edges have positive weights;
zero weight edges can be handled as well.\footnote{For example, for the purpose of analysis one could reweight each edge $e$ to $(n/\delta)\cdot w(e) + 1$, where $\delta$ is the smallest possible positive difference between weights of two simple paths in $G$. Let $G'$ be the obtained positively weighted digraph. Note that every ball $\dirB(v,r)$ (or, more generally, multiball vector) in $G$ has a corresponding ball $\dirB(v,(n/\delta)\cdot r+(n-1))$ (multiball vector) in $G'$. Thus, proving a bound on the pseudodimension of multiball vectors in $G'$ implies the same bound in $G$. 
}
We show that if the pseudodimension of a graph is $d$, then the graph must contain $K_d$ as a minor.
The proof uses the minor-building argument employed by Le and Wulff-Nilsen~\cite{LeW24} (they attribute the proof structure to Chepoi, Estellon, and Vaxes~\cite{ChepoiEV07}).
They (implicitly) prove the following lemma which we call the minor building lemma, that requires certain conditions on shortest paths. They actually require an additional property (Claim~1 in \cite{LeW24}) that we show is unnecessary in our proof of the lemma. We defer the proof to \Cref{ap:minor_building}.

\begin{restatable}[Minor building via shortest paths~\cite{LeW24}]{lemma}{minorbuilding}
\label{lem:minor_building}
Let $G = (V, E)$ be a graph. Let $\pi(x,y)$ denote any shortest path between $x$ and $y$.
Suppose we have a set of vertices $v_1, ..., v_d$ such that for every $1\le i < j \le d$
there is a vertex $t_{ij}$ that satisfies the following:
\begin{enumerate}
    \item[(1)]  For all distinct $i,j,p,q\in [d]$, $\pi(t_{ij} , v_i)$ and $\pi(t_{pq}, v_p)$ are vertex disjoint.
    \item[(2)]  For all distinct $i,j,p\in [d]$,
    $V(\pi(t_{ij} , v_i)) \cap V(\pi(t_{jp} , v_j)) \subseteq \{t_{ij}\}$.
\end{enumerate}
Then $G$ contains $K_d$ as a minor. 
\end{restatable}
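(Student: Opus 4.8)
The goal is to show that the vertices $v_1,\dots,v_d$ can serve as "branch sets" of a $K_d$-minor. The plan is to build, for each $i\in[d]$, a connected subgraph $T_i$ of $G$ (a branch set), such that the $T_i$ are pairwise vertex-disjoint and for every $i\neq j$ there is an edge of $G$ between $V(T_i)$ and $V(T_j)$. The natural candidates are $T_i := \bigcup_{j\neq i} \pi(t_{ij},v_i)$, i.e. the union of all shortest paths emanating from $v_i$ toward the various terminals $t_{ij}$. Each $T_i$ is connected because all of these paths share the vertex $v_i$. Condition (1) is exactly what guarantees that $T_i$ and $T_p$ are vertex-disjoint whenever $i\neq p$: any path $\pi(t_{ij},v_i)$ in $T_i$ and any path $\pi(t_{pq},v_p)$ in $T_p$ are vertex-disjoint (for $i,j,p,q$ distinct — the only slightly delicate bookkeeping is when the index sets $\{i,j\}$ and $\{p,q\}$ overlap in the second coordinate, e.g. comparing $\pi(t_{ij},v_i)$ with $\pi(t_{ji},v_j)$; here I would note $t_{ij}=t_{ji}$ is the same terminal vertex, but this vertex lies on a path toward $v_j$, not $v_i$, and condition (2) with the roles set up appropriately controls the intersection).

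The key remaining point is adjacency: for each pair $i<j$ I need an edge between $V(T_i)$ and $V(T_j)$. The terminal $t_{ij}$ lies on $\pi(t_{ij},v_i)\subseteq T_i$ and also should lie on $\pi(t_{ij},v_j)\subseteq T_j$ — but wait, the lemma only gives us $t_{ij}$ for $i<j$ and a single path $\pi(t_{ij},v_i)$; the path toward $v_j$ is $\pi(t_{ij},v_j)$, which I would argue is a path we are entitled to include in $T_j$ as well (interpreting the union over "all paths from $v_j$ to some terminal $t_{\cdot\cdot}$", where $t_{ij}$ is a legitimate terminal associated to the index $j$). So both $T_i$ and $T_j$ contain the vertex $t_{ij}$ — but then they are not disjoint! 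The resolution, which is the actual content of the construction, is to split each path $\pi(t_{ij},v_i)$ at its midpoint or at a carefully chosen vertex, assigning the half near $v_i$ to $T_i$ and the half near $t_{ij}$ to... a connector. Concretely: I would let $T_i$ consist of the portions of the paths $\pi(t_{ij},v_i)$ up to (but not including) where they would collide, and use condition (2) to certify that $\pi(t_{ij},v_i)$ and $\pi(t_{ij},v_j)$ meet only at $t_{ij}$ (apply (2) with the pair sharing terminal $t_{ij}$), so that $P_{ij}:=\pi(t_{ij},v_i)\cup\pi(t_{ij},v_j)$ is a simple $v_i\to v_j$ path meeting $T_p$ ($p\neq i,j$) trivially by (1); then contract each $P_{ij}$ appropriately. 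The cleanest packaging: form the minor model where branch set $i$ is $\{v_i\}$ together with the "$v_i$-halves" of all $P_{ij}$, and the edge between branch sets $i$ and $j$ is realized by the single internal vertex of $P_{ij}$ where the two halves meet — equivalently, contract $P_{ij}$ minus its endpoints into either $v_i$ or $v_j$.

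I would organize the write-up as: (a) define the subgraphs/branch sets precisely, deciding the split convention; (b) prove connectivity of each branch set (shared vertex $v_i$); (c) prove pairwise disjointness, invoking (1) for the "bulk" disjointness and (2) for the subtle overlap cases at shared terminals; (d) exhibit the required cross edges, invoking (2) to see that each $P_{ij}$ is a genuine path connecting the $i$- and $j$-branch sets and (1) to see it avoids all other branch sets; (e) conclude by the definition of minor. The main obstacle I anticipate is purely the disjointness bookkeeping in step (c)–(d): making sure that after we route a connecting path between $v_i$ and $v_j$ through $t_{ij}$, it does not accidentally run into branch set $v_p$, and that the two connecting paths $P_{ij}$ and $P_{ik}$ (which both start at $v_i$) share only the vertex $v_i$ and nothing else — this second point is precisely why hypothesis (2) is stated with three distinct indices $i,j,p$ and is where I'd spend the most care. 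Everything else (connectivity, the final appeal to the minor definition) is routine.
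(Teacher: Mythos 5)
Your overall architecture (branch sets built from the bunches $\bigcup_{j\neq i}\pi(t_{ij},v_i)$, connected through $v_i$, with $t_{ij}$ serving as the connector between the $i$- and $j$-sets) matches the paper's. But there is a genuine gap at exactly the point you flag as the crux, and the tool you reach for does not work.

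You assert that condition (2), ``applied with the pair sharing terminal $t_{ij}$,'' certifies that $\pi(t_{ij},v_i)$ and $\pi(t_{ij},v_j)$ meet only at $t_{ij}$. It does not: condition (2) is quantified over \emph{distinct} $i,j,p$, and to make $\pi(t_{jp},v_j)$ equal to $\pi(t_{ij},v_j)$ you would need $p=i$. The hypotheses of the lemma say nothing at all about the intersection of the two paths leaving the shared terminal $t_{ij}$ toward $v_i$ and toward $v_j$; they can overlap in many vertices (e.g.\ if $t_{ij}$ has a single outgoing edge, both paths contain its other endpoint). This is precisely why the paper first proves a version of the lemma with an extra hypothesis (3) demanding that these two paths be internally disjoint, and then, to prove the lemma as actually stated, (i) perturbs the edge weights via the Isolation Lemma so that shortest paths become unique while remaining shortest in $G$ --- which forces the overlap of $\pi(t_{ij},v_i)$ and $\pi(t_{ij},v_j)$ to be a common \emph{prefix} starting at $t_{ij}$ --- and (ii) contracts that shared prefix into $t_{ij}$, using condition (2) to verify that no other terminal $t_{jp}$ is accidentally swallowed by this contraction. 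Without uniqueness of shortest paths the shared portion need not even be contiguous, so your ``split at the midpoint / a carefully chosen vertex'' step has no foundation as written.

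A secondary issue: your disjointness bookkeeping assumes each $t_{ij}$ can simply be handed to a ``connector,'' but $t_{ij}$ may be an \emph{interior} vertex of one of the two bunches (it can lie on $\pi(t_{ik},v_i)$ for some $k\neq j$, and condition (2) explicitly permits $t_{ij}\in\pi(t_{jp},v_j)$). The paper resolves this with a separate endpoint lemma: $t_{ij}$ is a leaf of at least one of the two trees $\Bunch(v_i)$, $\Bunch(v_j)$, so it can be contracted into whichever branch set it is interior to, leaving either a direct edge or a two-edge path between $v_i$ and $v_j$. You would need some version of this argument (and it again uses the tree structure coming from unique shortest paths) to complete step (c)--(d) of your plan.
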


Let $\Delta$ to be a vector of $\ell-1$ real weights $\delta_1 < \delta_2 < \cdots < \delta_{\ell-1}$ augmented by $\{-\infty,\infty\}$.
Suppose that the pseudodimension of $\dirMBs_{G,\Delta}$ is~$d$.
This means there exists a set of vertices $v_1, ..., v_d \in V$, and a set of threshholds $y_1, ..., y_d \in [\ell]$ 
such that for every pair $\{i, j\}\subseteq [d]$, there exists a $t_{ij}\in V$ and an $r_{ij} \in \R$ such that the function $f_{ij} = \dirMB(t_{ij}, r_{ij}, \Delta) \in \dirMBs_{G, \Delta}$ satisfies:
\begin{itemize}
    \item[(a)] $f_{ij}(v_k) \le y_k$ for $k \in \{i, j\}$, 
    meaning that $d(t_{ij}, v_k) \le r_{ij} + \delta_{y_k}$. 
    \item[(b)] $f_{ij}(v_k) > y_k$ for all $k \in [\ell] - \{i, j\}$,
    meaning that $d(t_{ij}, v_k) > r_{ij} + \delta_{y_k}$. 
    \item[(c)] $r_{ij}$ is minimum among all choices of pairs $(t_{ij}, r_{ij})$.
\end{itemize}

We show that these $t_{ij}$ satisfy the conditions of \Cref{lem:minor_building}, and so $G$ must contain $K_d$ as a minor. We claim that the conditions of \Cref{lem:minor_building} are satisfied in the following.

\begin{figure}[hb!]
    \centering
    \begin{subfigure}{0.3\textwidth}
    \includegraphics[page=2]{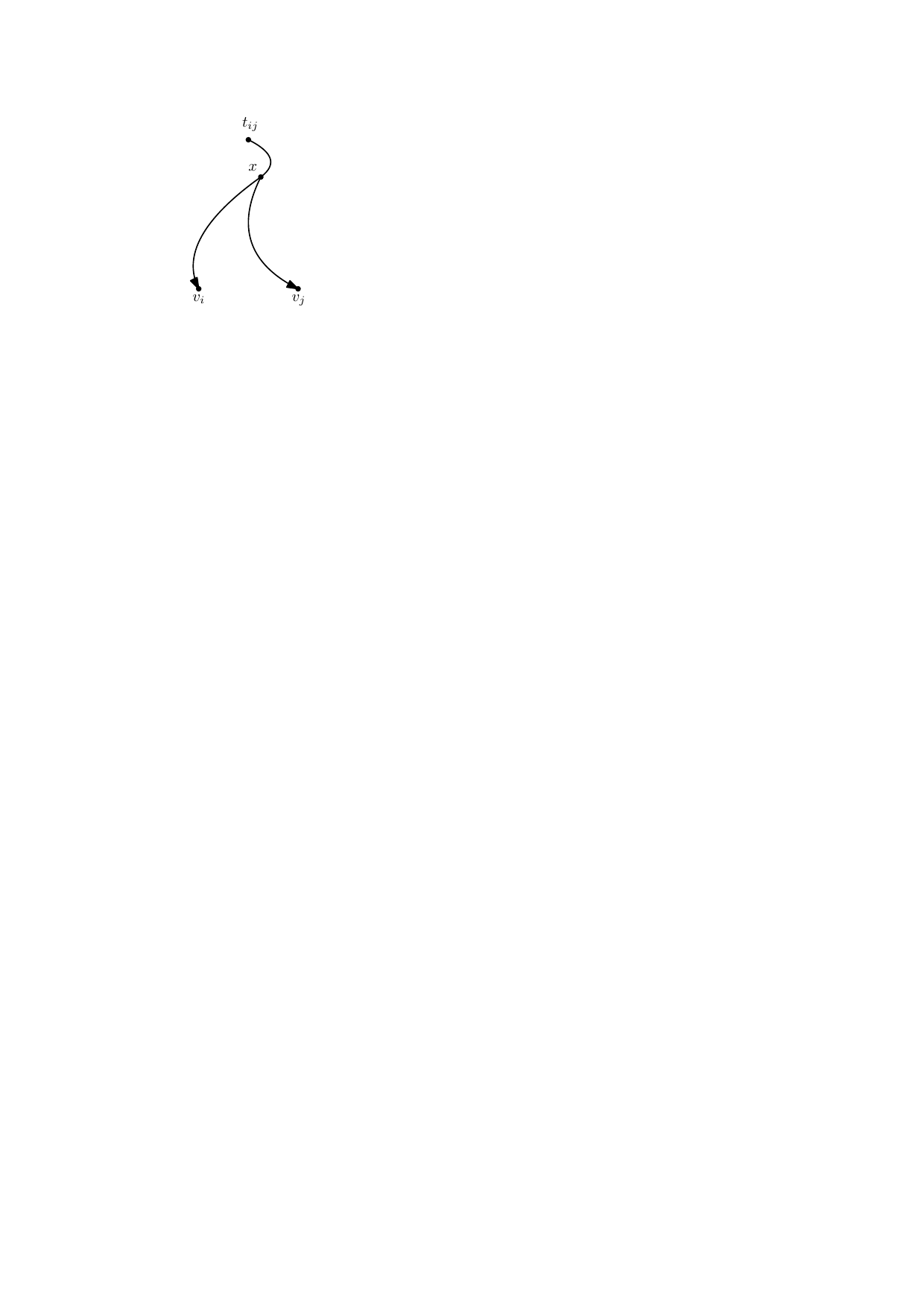}
    \caption{Figure for \Cref{claim2}.}
    \label{fig:minor_building2}
    \end{subfigure}
    \begin{subfigure}{0.3\textwidth}
    \includegraphics[page=3]{graphics/minor_building.pdf}
    \caption{Figure for \Cref{claim3}.}
    \label{fig:minor_building3}
    \end{subfigure}
    \caption{Illustrations for claims.}
\end{figure}

\begin{claim} \label{claim2}
For all distinct $i,j,p,q\in [d]$, any shortest paths $\pi(t_{ij} , v_i)$ and $\pi(t_{pq} , v_p)$ are vertex disjoint.
\end{claim}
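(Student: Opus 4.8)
The goal is to show that $\pi(t_{ij},v_i)$ and $\pi(t_{pq},v_p)$ share no vertex when $i,j,p,q$ are pairwise distinct. I will argue by contradiction: suppose some vertex $w$ lies on both $\pi(t_{ij},v_i)$ and $\pi(t_{pq},v_p)$. The plan is to use $w$ as a ``shortcut'' vertex that lets me build a cheaper multiball witness, contradicting the minimality condition (c).

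\textbf{Key steps.} First I would record the basic distance inequalities coming from the definitions. Since $w\in V(\pi(t_{ij},v_i))$, we have $d(t_{ij},w)+d(w,v_i)=d(t_{ij},v_i)\le r_{ij}+\delta_{y_i}$, and similarly $d(t_{pq},w)+d(w,v_p)=d(t_{pq},v_p)\le r_{pq}+\delta_{y_p}$. Second, I would consider swapping $t_{pq}$ for $t_{ij}$ along the prefix up to $w$: because the path $\pi(t_{pq},v_p)$ passes through $w$, and $w$ is reachable from $t_{ij}$ within $r_{ij}+\delta_{y_i}-d(w,v_i)$, one can try to route from $t_{ij}$ to $v_p$ via $w$. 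The natural candidate witness is $f_{ij}$ itself, and the key computation is to bound $d(t_{ij},v_p)$: since the path $\pi(t_{pq},v_p)$ from $w$ onward reaches $v_p$ in $d(w,v_p)$ steps, we get $d(t_{ij},v_p)\le d(t_{ij},w)+d(w,v_p)$. Third, I would try to show this makes $f_{ij}(v_p)\le y_p$, contradicting property (b) for $f_{ij}$ (which requires $f_{ij}(v_k)>y_k$ for $k=p\notin\{i,j\}$); alternatively, symmetrically, build a new center (e.g. keep $t_{pq}$ but decrease its radius, or use $w$-based splicing) that satisfies all of (a),(b) with a strictly smaller radius, contradicting (c). The bookkeeping to see which of the two multiballs $f_{ij}$ or $f_{pq}$ gets violated — and to verify that the indices $i,j,p,q$ being all distinct is exactly what makes $v_p$ a ``forbidden'' vertex for $f_{ij}$ — is the core of the argument; the figure referenced (\Cref{fig:minor_building2}) presumably depicts the two shortest paths crossing at $w$ and the rerouted path.

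\textbf{Main obstacle.} The delicate part is combining the inequalities so that the contradiction is with a \emph{strict} inequality, since properties (b) are strict ($>y_k$) while (a) and (c) involve non-strict bounds and a minimization. In particular I expect to need the positivity of edge weights (guaranteed in the preliminaries) to conclude that passing through an ``extra'' vertex $w$ on a path that is not already on $\pi(t_{ij},v_p)$ forces a strict decrease somewhere, or to argue that the concatenated route through $w$ is a genuinely shorter path certifying $d(t_{ij},v_p)\le r_{ij}+\delta_{y_p}$ and hence $f_{ij}(v_p)\le y_p$, directly contradicting (b). I would also need to handle the degenerate possibility that $r_{ij}$ could be decreased without affecting the required inequalities at $v_i,v_j$ — this is where condition (c) does the real work, ruling out any ``slack'' that a shared vertex $w$ would otherwise create. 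Once the single shared-vertex case yields a contradiction, the claim follows, and the analogous but slightly different case analysis (sharing the endpoint $t_{ij}$, or the three-path overlap) is handled in \Cref{claim3}.
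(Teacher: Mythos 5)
Your setup is the right one — assume a common vertex $w$, decompose the two shortest-path lengths through $w$, and use the triangle inequality on the ``crossed'' pairs — but the mechanism you propose for extracting the contradiction does not close, and the actual mechanism is missing. From $d(t_{ij},v_p)\le d(t_{ij},w)+d(w,v_p)$ you cannot conclude $d(t_{ij},v_p)\le r_{ij}+\delta_{y_p}$ (and hence $f_{ij}(v_p)\le y_p$): there is no a priori bound relating $d(t_{ij},w)+d(w,v_p)$ to $r_{ij}+\delta_{y_p}$, so property (b) for $f_{ij}$ alone is not violated. Likewise, condition (c) plays no role in this claim (it is the engine of \Cref{claim3}, not \Cref{claim2}), and positivity of edge weights is not needed here either.

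The step you are missing is to use \emph{both} crossings simultaneously and sum. Writing $a_1=d(t_{ij},w)$, $a_2=d(w,v_i)$, $a_3=d(t_{pq},w)$, $a_4=d(w,v_p)$, property (a) gives $a_1+a_2\le r_{ij}+\delta_{y_i}$ and $a_3+a_4\le r_{pq}+\delta_{y_p}$, while the triangle inequality combined with property (b) (legitimate precisely because $p\notin\{i,j\}$ and $i\notin\{p,q\}$, which is where distinctness of all four indices enters) gives $a_1+a_4\ge d(t_{ij},v_p)>r_{ij}+\delta_{y_p}$ and $a_2+a_3\ge d(t_{pq},v_i)>r_{pq}+\delta_{y_i}$. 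Both sums equal $a_1+a_2+a_3+a_4$, so
$(r_{ij}+\delta_{y_i})+(r_{pq}+\delta_{y_p})\ge a_1+a_2+a_3+a_4>(r_{ij}+\delta_{y_p})+(r_{pq}+\delta_{y_i})$,
i.e.\ $\delta_{y_i}+\delta_{y_p}>\delta_{y_i}+\delta_{y_p}$, a contradiction. Strictness is automatic from the two strict (b)-inequalities; no perturbation, minimality, or positivity argument is required. As it stands, your proposal identifies the right objects but not the cancellation that actually produces the contradiction, so it has a genuine gap.
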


\begin{proof}
Without loss of generality, we can renumber the indices so that $i=1$ and $p=2$.
Suppose to the contrary that $x\in V(\pi(t_{1j} , v_1)) \cap V(\pi(t_{2q} , v_2))$.
See \Cref{fig:minor_building2} for an illustration.
Define the following quantities:
\begin{align*}
    a_1 &= d(t_{1j} , x) \quad &  a_3 &= d(t_{2q} , x) \\
    a_2 &= d(x , v_1) \quad &  a_4 &= d(x , v_2) 
\end{align*}

Note that $a_1+a_2 = d(t_{1j} , v_1)$ and $a_3+a_4 = d(t_{2q} , v_2)$ as $x$ is on the shortest path, and by the triangle inequality we have that $a_1 + a_4 \ge d(t_{1j} , v_2)$ and $a_2 + a_3 \ge d(t_{2q} , v_1)$. Combined with the definitions of $t_{1j}$ and $t_{2q}$ and their associated radii, the following mix of equalities and inequalities hold.
\begin{align*}
a_1 + a_2 = d(t_{1j}, v_1) &\le r_{1j} + \delta_{y_1}   
\quad   &a_1 + a_4 \ge d(t_{1j}, v_2) > r_{1j} + \delta_{y_2}    \\
a_3 + a_4 = d(t_{2q}, v_2) &\le r_{2q} + \delta_{y_2}   
\quad   &a_2 + a_3 \ge d(t_{2q}, v_1) > r_{2q} + \delta_{y_1}    
\end{align*}

Putting together the inequalities from above we derive the following.
\[
(r_{1j} + \delta_{y_1}) + (r_{2q} +\delta_{y_2}) 
\ge a_1 + a_2 + a_3 + a_4  
> (r_{1j} + \delta_{y_2}) + (r_{2q} + \delta_{y_1}).
\]
This is a contradiction, thus $x$ cannot exist.
\end{proof}

\begin{claim} \label{claim3}
For all distinct $i,j,p\in [d]$, $V(\pi(t_{ij} , v_i)) \cap V(\pi(t_{jp} , v_j)) \subseteq \{t_{ij}\}$.
\end{claim}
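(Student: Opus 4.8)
The plan is to argue by contradiction, this time invoking the minimality condition~(c), which played no role in the proof of \Cref{claim2}. Suppose there is a vertex $x\in V(\pi(t_{ij},v_i))\cap V(\pi(t_{jp},v_j))$ with $x\neq t_{ij}$. Mirroring the setup of \Cref{claim2}, I would introduce the segment lengths
\[ a_1=d(t_{ij},x),\qquad a_2=d(x,v_i),\qquad a_3=d(t_{jp},x),\qquad a_4=d(x,v_j), \]
and record the identities $a_1+a_2=d(t_{ij},v_i)$ and $a_3+a_4=d(t_{jp},v_j)$ (valid since $x$ lies on the respective shortest paths), together with $a_1>0$ (since $x\neq t_{ij}$ and all weights are positive). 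I will also use that in a shattered configuration every threshold satisfies $y_k\le\ell-1$, so that all the $\delta_{y_k}$ appearing below are finite and the shortest paths in question genuinely exist.

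The heart of the argument is to show that the pair $(x,\,r_{ij}-a_1)$ already witnesses the defining conditions (a) and (b) for the index pair $\{i,j\}$; since $a_1>0$ its radius $r_{ij}-a_1$ is strictly smaller than $r_{ij}$, which will contradict~(c). Two of the three conditions follow from the conditions at $t_{ij}$ alone. For $v_i$: $d(x,v_i)=a_2=d(t_{ij},v_i)-a_1\le(r_{ij}+\delta_{y_i})-a_1$ by (a) at $t_{ij}$. For every $k\notin\{i,j\}$: the triangle inequality along $t_{ij}\to x\to v_k$ gives $a_1+d(x,v_k)\ge d(t_{ij},v_k)>r_{ij}+\delta_{y_k}$ by (b) at $t_{ij}$, hence $d(x,v_k)>(r_{ij}-a_1)+\delta_{y_k}$.

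The step I expect to need the most care is the remaining condition, the one on $v_j$, i.e.\ $d(x,v_j)=a_4\le(r_{ij}-a_1)+\delta_{y_j}$, equivalently $a_1+a_4\le r_{ij}+\delta_{y_j}$; this is precisely where the second pair $\{j,p\}$ must enter. Since $x\in V(\pi(t_{jp},v_j))$, condition (a) at $t_{jp}$ gives $a_3+a_4=d(t_{jp},v_j)\le r_{jp}+\delta_{y_j}$, so it suffices to prove $r_{jp}-a_3<r_{ij}-a_1$. For that I would use the triangle inequality along $t_{jp}\to x\to v_i$, which yields $a_2+a_3\ge d(t_{jp},v_i)>r_{jp}+\delta_{y_i}$ by (b) at $t_{jp}$ (legitimate since $i\notin\{j,p\}$, as $i,j,p$ are distinct), and combine it with $a_1+a_2=d(t_{ij},v_i)\le r_{ij}+\delta_{y_i}$ from (a) at $t_{ij}$: comparing the two bounds on $a_2$ and cancelling $\delta_{y_i}$ gives $r_{jp}-a_3<r_{ij}-a_1$. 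Chaining everything, $a_1+a_4\le a_1+(r_{jp}-a_3)+\delta_{y_j}<a_1+(r_{ij}-a_1)+\delta_{y_j}=r_{ij}+\delta_{y_j}$, as required. (The case $x=t_{jp}$ needs no separate treatment, as $a_3>0$ is never used.) This contradicts the minimality of $r_{ij}$, so no such $x$ exists and $V(\pi(t_{ij},v_i))\cap V(\pi(t_{jp},v_j))\subseteq\{t_{ij}\}$; the configuration involved is the one depicted in \Cref{fig:minor_building3}.
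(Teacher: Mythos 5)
Your proposal is correct and follows essentially the same route as the paper: contradict the minimality condition (c) by showing that $(x,\,r_{ij}-a_1)$ satisfies properties (a) and (b), with the only real work being the bound $a_1+a_4\le r_{ij}+\delta_{y_j}$, which you derive from the same three inequalities ($a_1+a_2\le r_{ij}+\delta_{y_i}$, $a_3+a_4\le r_{jp}+\delta_{y_j}$, $a_2+a_3>r_{jp}+\delta_{y_i}$) that the paper combines in one algebraic step. The two-step presentation via $r_{jp}-a_3<r_{ij}-a_1$ is just a reorganization of the paper's chain, not a different argument.
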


\begin{proof}
Without loss of generality, we can renumber $i,j,p$ so that $i=1$ and $j=2$, and $p\in [d] \setminus \{1, 2\}$.
Suppose to the contrary that $x\in V(\pi(t_{12} , v_1)) \cap V(\pi(t_{2p} , v_2))$ with $x\neq t_{12}$.
See \Cref{fig:minor_building3} for an illustration.
Define the following quantities:
\begin{align*}
    a_1 &= d(t_{12} , x) \quad &  a_3 &= d(t_{2p} , x) \\
    a_2 &= d(x , v_1) \quad &  a_4 &= d(x , v_2) 
\end{align*}

We show 
that if we define $r_x = r_{12}-a_1$, then $g  = \dirMB(x, r_x, \Delta) \in \dirMBs_{G, \Delta}$ satisfies Properties (a) and~(b). Since the edge weights are positive, $a_1>0$, and thus $r_x < r_{12}$. But this contradicts Property (c).

We begin by making some similar observations as in \Cref{claim2}, that $a_1+a_2 = d(t_{12} , v_1)$ and $a_3+a_4 = d(t_{2p} , v_2)$ as $x$ is on the shortest path, and by the triangle inequality we have that $a_2 + a_3 \ge d(t_{2p} , v_1)$. Thus the following mix of equalities and equalities hold.
\begin{align*}
a_1 + a_2 = d(t_{12}, v_1) &\le r_{12} + \delta_{y_1}   
\quad 
\\
a_3 + a_4 = d(t_{2p}, v_2) &\le r_{2p} + \delta_{y_2}   
\quad   &a_2 + a_3 \ge d(t_{2p}, v_1) > r_{2p} + \delta_{y_1}    
\end{align*}
The first inequality implies that:
\[ a_2 = (a_1 + a_2) - a_1 \le r_{12} + \delta_{y_1} - a_1 = r_x + \delta_{y_1} \]
Combining the above with the previous two inequalities yields:
\[ a_4 = a_2 + (a_3 + a_4) - (a_2 + a_3) < (r_x + \delta_{y_1}) + (r_{2p} + \delta_{y_2}) - (r_{2p} + \delta_{y_1}) =  r_x + \delta_{y_2} \]

Thus, Property (a) holds for $g$.
Now consider Property (b). Since Property (b) holds for $f_{12}$, for all $k\in [\ell] - \{1, 2\}$, $d(t_{12}, v_k) > r_{ij} + \delta_{y_k}$. By the triangle inequality,
\[ 
d(x , v_k) \ge d(t_{12}, v_k) - d(t_{12} , x) > r_{12} + \delta_{y_k} - a_1 = r_x + \delta_{y_k}
\]
This concludes that Property (b) also holds for $g$, which is a contradiction, so $x$ cannot exist.
\end{proof}

Claim~\ref{claim2}, and Claim~\ref{claim3} together satisfy the hypothesis of \Cref{lem:minor_building}. If the graph $G$ were $K_h$-minor free, we can conclude that the pseudodimension of $G$ is strictly less than $h$, so \Cref{thm:mb_pseudodim} holds.

\mbpseudodim*

\section{Deterministic weighted distance oracle} \label{sec:det_weighted_do}

To make the randomized distance oracle from \Cref{sec:weighted} deterministic, we consider a slightly different solution to the min-finding problem (\Cref{prob:min_finding}) where we pick all our randomness upfront by fixing a single random permutation $\tau$. We describe it as follows:

\begin{center} \textbf{Min-finding algorithm with a permutation $\tau$} \end{center}
\begin{enumerate}
    \item Pick a permutation $\tau_1, \tau_2, ..., \tau_n$ of the numbers from $1$ to $n$.
    \item Let $S$ be a set that is guaranteed to contain $\sigma_1$.
    Let $i$ be the minimum value such that $\tau_i\in S$.
    Query the oracle to obtain the set $S_{\tau_i}$.
    \item Repeat the above step until $S$ is empty.
\end{enumerate}

The next lemma shows that the algorithm terminates in $O(\log n)$ rounds in expectation with good tail bounds by using standard backwards analysis \cite{HarPeled11}. 
\begin{lemma} \label{lem:perm_bounds}
If $\tau$ is a random permutation, the above algorithm will succeed in the min-finding problem with at most $c\log n$ queries with probability at least $1-1/n^{\Omega(c)}$ for any $c>1$.
\end{lemma}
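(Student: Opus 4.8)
The plan is to show that the number of oracle queries the algorithm makes is exactly the number of left-to-right minima of a uniformly random permutation of $[n]$, and then to read off the expectation and the tail bound from the classical description of that quantity as a sum of independent indicators.

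First I would normalize the instance. Relabel the ground set by the bijection $i\mapsto\sigma_i$, so that without loss of generality the hidden permutation is the identity; then the oracle queried on $x=k$ returns $\{1,\dots,k-1\}$, and the sought element is $1$. Since $\tau$ is a uniformly random permutation, listing the elements in the order they occur in $\tau$ produces a uniformly random permutation $\pi=(\pi_1,\dots,\pi_n)$ of $[n]$. A one-line induction shows that every set $S$ maintained by the algorithm is a prefix $\{1,\dots,m\}$: this holds initially with $m=n$, and if $S=\{1,\dots,m\}$ then the element queried is $\tau_i$, the element of $S$ occurring earliest in $\pi$, after which the oracle replaces $S$ by $\{1,\dots,\tau_i-1\}$, again a prefix that is strictly smaller and still contains the target $1$ (unless it is empty, in which case the algorithm halts having just queried $1$). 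Consequently, reading $\pi$ from left to right, the algorithm queries precisely those $\pi_j$ that are smaller than all earlier entries $\pi_1,\dots,\pi_{j-1}$ — the left-to-right minima of $\pi$ — and stops exactly at the global minimum $1$. Hence the number of queries $Q$ equals the number of left-to-right minima of $\pi$, which in particular guarantees termination and success in \Cref{prob:min_finding}.

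Next I would invoke the standard fact (provable by backwards analysis; see e.g.\ \cite{HarPeled11}) that $Q=\sum_{k=1}^{n}B_k$, where $B_k$ is the indicator that $\pi_k<\pi_j$ for all $j<k$; the variables $B_1,\dots,B_n$ are mutually independent, and $\Pr[B_k=1]=1/k$, since by symmetry $\pi_k$ is the smallest of the first $k$ entries with probability $1/k$. Therefore $\E[Q]=H_n\le 1+\ln n$. Since $Q$ is a sum of independent $\{0,1\}$ random variables with mean $H_n=\Theta(\log n)$, a standard multiplicative Chernoff bound gives $\Pr[Q\ge c\log n]\le n^{-\Omega(c)}$ for every constant $c>1$ (using the $e^{-\delta^2\mu/3}$ regime when the relative deviation $\delta$ is bounded, which already yields a fixed polynomial bound for $c$ close to $1$, and the $e^{-\delta\mu/3}$ regime for larger $c$), which is exactly the claimed tail.

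The only step that needs genuine care is the reduction in the second paragraph: checking that ``repeatedly query the element of the current prefix that appears earliest in $\tau$'' is precisely the process that enumerates the left-to-right minima of $\pi$, and that the relabeling leaves $\pi$ uniformly random. Once that identification is in place, the expectation computation and the concentration bound are entirely routine, so this would establish \Cref{lem:perm_bounds}.
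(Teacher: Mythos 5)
Your proposal is correct and follows essentially the same route as the paper: both identify the number of queries with the number of left-to-right minima of a uniformly random permutation, decompose it as a sum of independent indicators with $\Pr[B_k=1]=1/k$, and apply a multiplicative Chernoff bound. Your write-up is somewhat more careful than the paper's in justifying the identification (the prefix invariant for $S$ and the relabeling argument), which the paper simply asserts as an observation.
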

\begin{proof}
Observe that the algorithm is equivalent to randomly shuffling a permutation of length $n$, considering the permutation in order and counting the number of times the minimum number seen so far decreases. 

Let $X$ be the total number of times the minimum changes.
Let $X_i$ denote the event that the minimum among the first $i$ elements of the permutation occurs at exactly the $i$th element. Note that these events are independent.
Observe that $X = \sum_{i=1}^n X_i$.
Since $\E[X_i] = 1/i$, the expected number times the minimum changes by linearity of expectation is $\E[X] = \sum_{i=1}^n \E[X_i] = O(\log n)$.
To show the desired tail bound, applying Chernoff bound for $c\ge 1$ yields:
\[ P(X \ge c\, \E[X]) \le \exp\left( - (c-1)^2\cdot O(\log n)/(c+1) \right) = 1/n^{\Omega(c)}.\] 
Thus the probability of the algorithm terminating within $c\log n$ queries is at least $1-1/n^{\Omega(c)}$.
\end{proof}

For our deterministic distance oracle, we will do the same precomputation as in the randomized distance oracle, and in addition for every vertex $t$ in a piece $P\in \rdiv$, we will precompute a fixed permutation $\tau(t)$ of size $|\bnd P|$.
To handle a query between $s\in V$ and a vertex $t\in P$, we will use $\tau(t)$ as the permutation for the min-finding algorithm.
By \Cref{lem:perm_bounds}, if we choose a random permutation for $\tau(t)$ of size $|\bnd P|$, the probability that this permutation works in at most $c\log r$ queries for all vertices $s\in V$ is at least $1-n/r^{\Omega(c)}$ by the union bound.
Since we chose $r = n^{O(1)}$, for sufficiently large $c$, a random permutation guarantees $c\log n$ query time for all vertices $s\in V$ with positive probability. Hence we conclude that some fixed permutation guarantees this as well. 

\begin{theorem}
There exists a deterministic distance oracle for a weighted $K_h$-minor-free digraph $G$ on $n$ vertices that uses $O(n^{2-1/(4h-1)})$ space and can handle queries in $O(\log n)$ time.
\end{theorem}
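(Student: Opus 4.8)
The plan is to reuse the entire preprocessing of the randomized oracle from \Cref{sec:weighted} verbatim, and on top of it to derandomize only the query procedure, by hard-wiring a single good permutation per query target. Concretely, I would first compute an $r$-division $\rdiv$ of $G$, store $d_G(s,v)$ for every $s\in V$ and every $v\in\bnd{\rdiv}$, store the intra-piece distance tables $d_G(u,v)$ for $u,v\in V(P)$ of each piece $P$, and store the multiball-restricted vectors $Y_{s,j}$ together with the dictionaries $X_t$ exactly as before. In addition, for every piece $P\in\rdiv$ and every vertex $t\in V(P)$ I would precompute and store one fixed permutation $\tau(t)$ of the boundary $\bnd{P}$ (of size $|\bnd{P}|=O(\sqrt r)$); a query from $s\notin V(P)$ to $t$ is then answered by running the min-finding algorithm with permutation $\tau(t)$, where each oracle call it needs is served by the $O(1)$-time lookup $X_t[Y_{s,i}]$. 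The case $s\in V(P)$ is still answered from the lookup table.

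Correctness is immediate: this query algorithm is exactly the permutation-based solution to \Cref{prob:min_finding} applied to the hidden permutation that sorts the $\ell=O(\sqrt r)$ candidate path lengths $d(s,v_i)+d_P(v_i,t)$, and $X_t$ faithfully implements the required set oracle, so it returns the boundary vertex $v_{\sigma_1}$ minimizing $d(s,v_i)+d_P(v_i,t)$, hence $d(s,t)$. For the query-time bound I would invoke \Cref{lem:perm_bounds}: for a \emph{uniformly random} permutation of size $|\bnd{P}|=O(\sqrt r)$, the algorithm finishes within $c\log r$ queries with probability $\ge 1-1/r^{\Omega(c)}$, for every fixed source $s$. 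Taking a union bound over the $n$ possible sources $s$ (and, if desired, over the $O(r)$ targets in $P$), a random $\tau(t)$ works simultaneously for all $s\in V$ within $c\log r$ steps with probability $\ge 1-n/r^{\Omega(c)}$; since $r=n^{\Theta(1)}$, choosing $c$ large enough makes this positive, so by the probabilistic method a suitable fixed $\tau(t)$ exists and can be committed to once and for all. This yields a worst-case deterministic $O(\log r)=O(\log n)$ query time.

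For space I would observe that the added permutations cost only $O(\sqrt r)$ each and there are $O(n)$ of them, i.e. $O(n\sqrt r)$ total, which is dominated by the terms already present. Hence the total space remains $O\!\left(n^2/\sqrt r + nr^{2h-1}\right)$ as in the randomized construction, and setting $r=n^{2/(4h-1)}$ balances the two terms to $O\!\left(n^{2-1/(4h-1)}\right)$ while keeping the query time $O(\log n)$.

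I expect the only genuine subtlety to be the order of quantifiers in the derandomization step: one must let the random permutation depend on the target $t$ only (not on $s$), extract from \Cref{lem:perm_bounds} a per-source failure probability that is polynomially small in $r$, and then exploit $r=n^{\Theta(1)}$ so that the union bound over all sources still leaves positive probability for a good deterministic choice. Everything else is a direct transcription of the randomized oracle.
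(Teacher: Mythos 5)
Your proposal matches the paper's own derandomization essentially verbatim: fix one permutation $\tau(t)$ per target, invoke \Cref{lem:perm_bounds} for a per-source failure probability of $1/r^{\Omega(c)}$, union-bound over the $n$ sources, and use $r=n^{\Theta(1)}$ to conclude by the probabilistic method that a good fixed permutation exists, with the $O(n\sqrt{r})$ extra space dominated by the existing terms. The approach and all quantifier handling are the same as in the paper, so there is nothing further to add.
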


\paragraph*{Remarks.} 
The proof of the existence of the permutations $\tau(t)$ is non-constructive. 
This can be made into a constructive algorithm with the method of conditional expectations, though construction of the permutations will take superquadratic (although still polynomial) time.
Note that even if we could construct all these permutations in subquadratic time, we do not know how to reduce the precomputation time needed in the rest of the data structure to subquadratic.

\section{Exact distances in unweighted digraphs}\label{sec:unweighted}
In this section, we describe the applications of \Cref{thm:mb_pseudodim} to distance problems in an \emph{unweighted} $K_h$\nobreakdash-minor-free digraph $G=(V,E)$, where $h=O(1)$: subquadratic algorithms computing (1) the exact distance oracle of~\cite{LeW24}, (2) the $n$ vertex eccentricities (thus also the diameter), and (3) the Wiener index.

We start by describing some infrastructure common to all these applications. First of all,
we compute an $r$-division $\rdiv$ of $G$. 
Recall  that this takes $O(n^{1+\eps})$ time. 

Next, we compute and store the following:
\begin{itemize}
    \item the distances $\dist_G(u,v)$ for every $u\in V$ and $v\in \bnd{R}$. They can be obtained by running breadth-first-search on the reverse $G$ from every $v\in \bnd{\rdiv}$. This takes $O(|E|\cdot |\bnd{\rdiv}|)=O(n^2/\sqrt{r})$ time and space.
    \item for every piece $P\in\rdiv$, and every pair of vertices $u,v\in V(P)$, the distances $\dist_P(u,v)$ and $\dist_G(u,v)$. All the ``intra-piece'' distances of the former kind can be computed in $O(|\rdiv|\cdot r^2)=O(nr)$ time.   
    Note that
\[ \dist_G(u,v) = \min\left(\dist_P(u,v), \min_{b\in \bnd{P}} (\dist_G(u,b)+\dist_P(b,v)) \right),\] so the desired distances of the latter kind can be found in $O(nr^{3/2})$ time based on what we have computed previously.
The space used is $O(nr)$.
\end{itemize}

\subsection{The exact distance oracle of~\cite{LeW24}}\label{sec:unweighted-oracle}
\newcommand{\pcbset}{\mathcal{Z}}
Note that the precomputed distances allow answering distance queries $(u,v)$ in $G$ except when $v$ is a non-boundary vertex of a piece $P$ not containing $u$.
\cite{LeW24} deal with those remaining most challenging queries by storing, for each such $u,P$,
the (pointers to) ball-piece intersections
\[\dirB\left(u,\dist(u,b)\right)\cap V(P)\] for all $b\in \bnd{P}$.
By~\Cref{cor:vcdim}~and~\Cref{cor:restrict}, the number of distinct
sets of the above form for a fixed piece $P$ is $O(r^{h-1})$. 
Call the set of such distinct balls $\pcbset_P$.
The balls of $\pcbset_P$ can be
stored explicitly using $O(r^h)$ space (as each ball contains $O(r)$ vertices of $P$),
for a total of $O(nr^{h-1})$ space
over the $n/r$ pieces $P\in\rdiv$. Storing the balls explicitly
enables constant time membership queries in these balls.

Moreover, for each ball-piece intersection from $\pcbset_P$, and $v\in V(P)\setminus \bnd{P}$, the distance
\[\dist_P\left(\dirB(u,\dist_G(u,b))\cap V(P),v\right)\]
is computed and stored. The time and space used for that is also bounded by $O(nr^{h-1})$. 

As~\cite{LeW24} show, the stored ball-piece intersections along with the precomputed distances are enough to answer distance queries in $O(\log{r})$ time. Let us reiterate their argument for completeness.

Suppose $b_1,\ldots,b_k$ are the boundary vertices of $P$ sorted non-decreasingly
by $\dist(u,b_i)$.
Using binary search, in $O(\log{r})$ time one can locate the largest $j$ such that $v\notin \dirB(u,\dist(u,b_j))$ -- recall that membership queries take $O(1)$ time.
The index $j$ along with the preprocessed data
can be used to handle a distance query using the following:
\begin{lemma}
Let $P$ be a piece and $u,v\in V$ be vertices such that $v\in P\setminus \bnd P$ and $u\notin P$. Let $j$ be the largest index such that $v\notin \dirB(u,\dist(u,b_j))$. Then
\begin{equation}\label{eq:distance-query}
\dist_G(u,v)=\dist_G(u,b_j)+\dist_P\left(\dirB(u,\dist_G(u,b_j))\cap V(P),v\right).
\end{equation}
\end{lemma}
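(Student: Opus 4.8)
The plan is to prove~\eqref{eq:distance-query} by two inequalities, using the characterization of $j$ as the largest index with $v\notin\dirB(u,\dist_G(u,b_j))$ together with the fact that every $u\to v$ path in $G$ must enter $P$ through some boundary vertex of $P$ (since $u\notin V(P)$, $v\in V(P)\setminus\bnd P$, and $\bnd P$ separates $V(P)$ from the rest of $G$ in the piece structure). Write $B_j:=\dirB(u,\dist_G(u,b_j))\cap V(P)$ for brevity.

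First I would establish the ``$\le$'' direction. Let $w\in B_j$ be a vertex of $P$ minimizing $\dist_P(B_j,v)$, i.e.\ $\dist_P(w,v)=\dist_P(B_j,v)$. Since $w\in\dirB(u,\dist_G(u,b_j))$ we have $\dist_G(u,w)\le\dist_G(u,b_j)$, and concatenating a shortest $u\to w$ path in $G$ with a shortest $w\to v$ path inside $P$ gives $\dist_G(u,v)\le\dist_G(u,w)+\dist_P(w,v)\le\dist_G(u,b_j)+\dist_P(B_j,v)$. (One should note $B_j$ is nonempty: $b_j\in B_j$ since $\dist_G(u,b_j)\le\dist_G(u,b_j)$.)

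Next, the ``$\ge$'' direction, which is the crux. Take a shortest $u\to v$ path $\pi$ in $G$; it crosses $\bnd P$, so let $b$ be the \emph{last} boundary vertex on $\pi$. Then the suffix of $\pi$ from $b$ to $v$ stays in $P$, so $\dist_G(u,v)=\dist_G(u,b)+\dist_P(b,v)$. The key claim is that $\dist_G(u,b)\le\dist_G(u,b_j)$, i.e.\ $b\in B_j$; given this, $\dist_G(u,v)=\dist_G(u,b)+\dist_P(b,v)\ge\dist_G(u,b_j)+\dist_P(B_j,v)$ (the last step uses $\dist_P(b,v)\ge\dist_P(B_j,v)$ together with $b\in B_j$ and $\dist_G(u,b)\ge\dist_G(u,b_j)$ being impossible to strengthen — actually we need the cleaner bound $\dist_G(u,v)=\dist_G(u,b)+\dist_P(b,v)\ge\dist_G(u,b_j)+\dist_P(b,v)$ would go the wrong way, so instead I argue: since $b\in B_j$, $\dist_P(b,v)\ge\dist_P(B_j,v)$, and combined with $\dist_G(u,b)\ge\dist_G(u,b_j)$... ). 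Let me restate: once $b\in B_j$, we know $v$ is reachable within $P$ from $b$, hence $\dist_P(B_j,v)\le\dist_P(b,v)$; but we also need the base distance to be $\dist_G(u,b_j)$, not $\dist_G(u,b)$. For this I would instead observe that $\dist_G(u,v)=\dist_G(u,b)+\dist_P(b,v)$ and that $B_b:=\dirB(u,\dist_G(u,b))\cap V(P)\supseteq B_j$ whenever $\dist_G(u,b)\ge\dist_G(u,b_j)$; since $v\in P$ is reached from $b\in B_b$, and by maximality of $j$ we have $v\in\dirB(u,\dist_G(u,b_{j+1}))$, i.e.\ $v\in B_{j+1}$, so actually $\dist_G(u,v)\le\dist_G(u,b_{j+1})$. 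Hmm — the clean route is: prove $b\in B_j$ by showing $\dist_G(u,b)\le\dist_G(u,b_j)$, which follows because if $\dist_G(u,b)>\dist_G(u,b_j)$, i.e.\ $b=b_{j'}$ for some $j'>j$ in the sorted order, then $v$ would be reachable in $P$ from $b_{j'}$, giving $\dist_G(u,v)\le\dist_G(u,b_{j'})+\dist_P(b_{j'},v)$... which still does not directly contradict $v\notin\dirB(u,\dist_G(u,b_j))$. The actual argument is that $v\in\dirB(u,\dist_G(u,b_{j+1}))$ by maximality of $j$, hence $v\in B_{j+1}$, so there is a path of length $\le\dist_G(u,b_{j+1})$; the last boundary vertex $b'$ on a shortest such path satisfies $\dist_G(u,b')\le\dist_G(u,b_{j+1})$, and since distances to boundary vertices are totally ordered we must be able to pick the witness so that the relevant ball is exactly $B_j$ — here I would invoke that $B_{j}$ is characterized as $\bigcup_{i\le j}\dirB_P(b_i,\dist_G(u,b_j)-\dist_G(u,b_i))$, the standard decomposition of a ball-piece intersection into intra-piece balls around boundary vertices.

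The main obstacle I anticipate is exactly pinning down why the base radius in~\eqref{eq:distance-query} is $\dist_G(u,b_j)$ and not something smaller or the slightly larger $\dist_G(u,b_{j+1})$; resolving it cleanly requires the ball-decomposition identity $\dirB(u,\rho)\cap V(P)=\bigcup_{b\in\bnd P:\,\dist_G(u,b)\le\rho}\dirB_P\!\big(b,\rho-\dist_G(u,b)\big)$ (valid since $u\notin V(P)$ so every $u\to$ path into $P$ factors through $\bnd P$), applied with $\rho=\dist_G(u,b_j)$. With that identity, ``$\le$'' is immediate, and for ``$\ge$'' one takes the last boundary vertex $b$ on a shortest $u\to v$ path, notes $b\ne v$ and $\dist_P(b,v)=\dist_G(u,v)-\dist_G(u,b)$, observes that $v\notin\dirB(u,\dist_G(u,b_j))$ forces $\dist_G(u,b)+\dist_P(b,v)=\dist_G(u,v)>\dist_G(u,b_j)$, while maximality of $j$ ($v\in\dirB(u,\dist_G(u,b_{j+1}))$, and $b_j$ is the last index below the threshold so $\dist_G(u,b)$ cannot exceed $\dist_G(u,b_j)$ without $v$ landing in a strictly smaller ball by the decomposition — i.e.\ $b$ itself is one of $b_1,\dots,b_j$). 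Once $b\in\{b_1,\dots,b_j\}$, the decomposition gives $v\in\dirB_P(b,\dist_G(u,b_j)-\dist_G(u,b))$, hence some $w\in B_j$ with $\dist_P(w,v)\le\dist_G(u,b_j)-\dist_G(u,b)=\dist_G(u,v)-\dist_G(u,b)$... which is the wrong direction again unless $\dist_G(u,b)=\dist_G(u,b_j)$. I expect the authors' proof handles this by simply noting $\dist_P(B_j,v)\le\dist_P(b,v)$ and $\dist_G(u,b_j)\le\dist_G(u,b)$ cannot both be strict given $j$'s maximality, so equality propagates; I would fill in that step carefully, and this reconciliation is the one place real care is needed.
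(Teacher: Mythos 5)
Your easy direction ($\dist_G(u,v)\le\dist_G(u,b_j)+\dist_P(\dirB(u,\dist_G(u,b_j))\cap V(P),v)$, via the triangle inequality applied to the minimizer $w$ in the ball--piece intersection) is correct and matches the paper. But the hard direction is left unresolved, and you say so yourself: after fixing the last boundary vertex $b$ on a shortest $u\to v$ path you obtain $\dist_G(u,v)=\dist_G(u,b)+\dist_P(b,v)$ with $\dist_G(u,b)\le\dist_G(u,b_j)$, and every attempt to convert this into $\dist_G(u,v)\ge\dist_G(u,b_j)+\dist_P(B_j,v)$ runs into the same sign problem. Your closing guess --- that maximality of $j$ forces $\dist_G(u,b)=\dist_G(u,b_j)$, ``so equality propagates'' --- is false: the shortest path can perfectly well enter $P$ through a boundary vertex strictly closer to $u$ than $b_j$ is.

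The missing idea is to exploit unweightedness and to stop insisting that the witness be a boundary vertex. Since $\dist_G(u,b)\le\dist_G(u,b_j)<\dist_G(u,v)$ and distances along a shortest path in an unweighted graph increase by exactly one per edge, there is a vertex $z$ on $\pi$ at distance \emph{exactly} $\dist_G(u,b_j)$ from $u$; it lies on the suffix of $\pi$ after $b$, hence $z\in V(P)$ and the subpath $z\to v$ stays inside $P$. Then $z\in\dirB(u,\dist_G(u,b_j))\cap V(P)$ and $\dist_P(z,v)\le\dist_G(u,v)-\dist_G(u,b_j)$, which is precisely the inequality you were missing. (The same ``at least one more edge'' device, using $v\notin\bnd P$, is also what finishes your sub-claim that $b\notin\{b_{j+1},\dots,b_k\}$: otherwise $\dist_G(u,v)\ge\dist_G(u,b_{j+1})+1>\dist_G(u,b_{j+1})\ge\dist_G(u,v)$, the last step by maximality of $j$.) Note that unweightedness is essential here, not a convenience --- for real-weighted graphs the formula in the lemma fails, which is exactly why the paper's weighted oracle in Section~3 resorts to an entirely different mechanism (Fredman's trick). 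Your proposed ball-decomposition identity is true (it is the paper's Lemma~6.2) but does not by itself close the gap without this ``walk to distance exactly $\dist_G(u,b_j)$'' step.
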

\begin{proof}

By the definition of $j$, either $j=k$ or $j<k$ and
$\dist_G(u, b_{j+1}) \ge \dist_G(u,v)>\dist_G(u,b_j)$.
Either way, the shortest $u\to v$ path in $G$ (if it exists) does not pass through $\{b_{j+1},\ldots,b_k\}$: when $j<k$, it follows by $v\notin \bnd{P}$.

Now, if no $u\to v$ path in $G$ exists, then the right-hand side of~\eqref{eq:distance-query} is clearly infinite and thus~\eqref{eq:distance-query} holds.

Let us suppose $\dist_G(u,v)<\infty$ and let $\pi$ be a shortest $u\to v$ path in $G$. $\pi$ can be expressed as $\pi_1\pi_2$, where $\pi_2=c\to v$ and $c$ is the last boundary vertex of $P$ appearing on $\pi$. Note that $\pi_2$ is fully contained in $P$.
Moreover, $\dist_G(u,c)\leq \dist_G(u,b_j)$ as otherwise $c\in \{b_{j+1},\ldots,b_k\}$ which, as we already argued, cannot happen.
Consequently, the vertex $z$ on $\pi$ at distance $\dist_G(u,b_j)$ from $u$ lies on $\pi_2$ which in turn implies $z\in V(P)$. The quantity on the right-hand side of~\eqref{eq:distance-query} is thus no larger than $\dist_G(u,z)+\dist_P(z,v)=\dist_G(u,v)$.

To finish the proof, it is enough to argue that the RHS of~\eqref{eq:distance-query} cannot be smaller than $\dist_G(u,v)$.
Let $y\in \dirB(u,\dist_G(u,b_j))\cap V(P)$ be such that $\dist_P(y,v)$ is minimized. 
Recall that we have
$\dist_G(u,y)\leq \dist_G(u,b_j)$.
Moreover, $\dist_P(y,v)\geq 1$ since $\dist_G(u,v)>\dist_G(u,b_j)\geq \dist_G(u,y)$. 
So if $\dist_G(u,y)=\dist_G(u,b_j)$, then $\dist_G(u,b_j)+\dist_P(y,v)\geq \dist_G(u,v)$ follows by the triangle inequality. We now argue that $\dist_G(u,y)<\dist_G(u,b_j)$ cannot happen. Indeed,  
if this was the case, the vertex $y'$ following $y$ on a shortest $y\to v$ path in $P$ would satisfy $\dist_G(u,y')\leq \dist_G(u,b_j)$ and
$\dist_P(y',v)<\dist_P(y,v)$. This would contradict our choice of $y$.
\end{proof}

The distance oracle uses $O\left(n^2/\sqrt{r}+nr^{h-1}\right)$ space. 
This bound is optimized for $r=n^{\frac{2}{2h-1}}$ and thus the obtained space bound is $O\left(n^{2-\frac{1}{2h-1}}\right)$.
However, it is not clear how the small VC dimension of directed balls in $G$ alone can be leveraged to compute the desired ball-piece intersections $\pcbset_P$ efficiently. 

\subsection{Computing the balls in subquadratic time}
In this section, we describe an algorithm computing the distance oracle of~\cite{LeW24}.
The algorithm runs in subquadratic time for some choice of $r$, proving that
a non-trivial exact distance oracle can be constructed in subquadratic time.
Note that the distance oracle we construct will use more space than the one in~\cite{LeW24}.

Fix a piece $P\in\rdiv$. First, note that any $\dirB(u,q)\cap V(P)$ is uniquely
determined by $q$, $P$, and the distance
vector $(\dist_G(u,b))_{b\in \bnd{P}}$.
This is formally captured by the following lemma:
\begin{lemma}\label{lem:piece-ball-union}
Let $u\in V\setminus V(P)$ and $q\in\mathbb{R}$. Then $\dirB(u,q)\cap V(P)=\bigcup_{b\in\bnd{P}} \dirB_P(b,q-\dist_G(u,b))$.
\end{lemma}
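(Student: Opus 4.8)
The claim is the set identity $\dirB(u,q)\cap V(P)=\bigcup_{b\in\bnd{P}} \dirB_P(b,q-\dist_G(u,b))$, and I would prove it by a straightforward double inclusion, with the key structural fact being that in an $r$-division any path from a vertex $u\notin V(P)$ into $V(P)$ must enter $P$ through a boundary vertex $b\in\bnd{P}$.

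For the inclusion $\supseteq$: take $v\in \dirB_P(b,q-\dist_G(u,b))$ for some $b\in\bnd{P}$, so $\dist_P(b,v)\le q-\dist_G(u,b)$. Since $P$ is an edge-induced subgraph of $G$, a path inside $P$ is also a path in $G$, hence $\dist_G(b,v)\le\dist_P(b,v)$, and by the triangle inequality $\dist_G(u,v)\le \dist_G(u,b)+\dist_G(b,v)\le \dist_G(u,b)+(q-\dist_G(u,b))=q$. Thus $v\in\dirB(u,q)\cap V(P)$ (and $v\in V(P)$ since $\dirB_P(b,\cdot)\subseteq V(P)$).

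For the inclusion $\subseteq$: take $v\in\dirB(u,q)\cap V(P)$, so $v\in V(P)$ and there is a $u\to v$ path $\pi$ in $G$ of length at most $q$. Let $b$ be the last vertex of $\pi$ lying in $\bnd{P}$; such a $b$ exists because $u\notin V(P)$, $v\in V(P)$, and the only vertices shared between pieces are boundary vertices, so $\pi$ cannot pass from outside $V(P)$ to the interior of $P$ without touching $\bnd{P}$ (if $v\in\bnd{P}$ we may take $b=v$). Write $\pi=\pi_1\pi_2$ where $\pi_1$ is the $u\to b$ portion and $\pi_2$ the $b\to v$ portion. By choice of $b$, every vertex of $\pi_2$ after $b$ lies in $V(P)\setminus\bnd{P}$, so all edges of $\pi_2$ are edges of $P$, giving $\dist_P(b,v)\le \mathrm{len}(\pi_2)=\mathrm{len}(\pi)-\mathrm{len}(\pi_1)\le q-\dist_G(u,b)$. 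Hence $v\in\dirB_P(b,q-\dist_G(u,b))$, completing the inclusion.

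The only subtlety — and the one point worth stating carefully rather than the arithmetic — is the claim that a shortest (or any) $u\to v$ walk in $G$ with $u\notin V(P)$ and $v\in V(P)$ must contain a vertex of $\bnd{P}$, and that after the last such vertex the walk stays within $P$'s edge set; this is exactly the defining property of the boundary $\bnd{P}$ in an $r$-division (the vertices of $P$ shared with other pieces), together with the fact that pieces are edge-induced, so I would phrase that step explicitly. One mild edge case to note: if $q-\dist_G(u,b)<0$ then $\dirB_P(b,q-\dist_G(u,b))=\varnothing$ by the convention $\dirB_H(w,\rho)=\{x:\dist_H(w,x)\le\rho\}$, which causes no problem; and if $\dist_G(u,b)=\infty$ for some $b$ the corresponding term is likewise empty, consistent with the convention $q-\infty=-\infty$.
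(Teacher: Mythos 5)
Your proof is correct and follows essentially the same route as the paper's: the nontrivial inclusion is obtained by splitting a $u\to v$ path at the last boundary vertex $b$ of $P$ and observing that the suffix stays inside the piece, while the reverse inclusion is the triangle inequality. The paper's version is terser (it uses the shortest path and writes $\dist_G(u,v)=\dist_G(u,b^*)+\dist_P(b^*,v)$ directly), but your extra care about why the suffix uses only edges of $P$ and about the empty-ball edge cases is consistent with, and slightly more explicit than, the original.
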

\begin{proof}
Let $v\in \dirB(u,q)\cap V(P)$.
Since $u\notin V(P)$, the shortest $u\to v$ path has to pass through
some last boundary vertex $b^*$ of $P$. That is, $\dist_G(u,v)=\dist_G(u,b^*)+\dist_P(b^*,v)$.
This implies $\dist_P(b^*,v)\leq q-\dist_G(u,b^*)$ and thus $v\in \dirB_P(b,q-\dist_G(u,b^*))$.
The $\text{RHS}\subseteq \text{LHS}$ inclusion is trivial.
\end{proof}
In fact, from~\Cref{lem:piece-ball-union} a more general property follows:
that $\dirB(u,q)\cap V(P)$ is uniquely determined by $P$, $q-x$, and the vector
$(\dist_G(u,b)-x)_{b\in \bnd{P}}$, where $x$ is a possibly unknown real offset value.
\begin{lemma}\label{lem:piece-dijsktra}
    Let $u\in V\setminus V(P)$. Then, the distances $\dist_G(u,v)$ for all $v\in V(P)$ can be computed in $\Ot(r)$ time based on $P$ and the distances $\dist_G(u,b)$ given for all $b\in \bnd{P}$. 
\end{lemma}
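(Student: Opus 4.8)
The plan is to run a single-source shortest paths computation inside $P$ augmented with information about how $u$ reaches the boundary. The key observation is \Cref{lem:piece-ball-union} (and its corollary): every shortest $u\to v$ path for $v\in V(P)$ passes through some last boundary vertex $b^\ast\in\bnd{P}$, so $\dist_G(u,v)=\min_{b\in\bnd{P}}\bigl(\dist_G(u,b)+\dist_P(b,v)\bigr)$. Thus if we already know all the values $\dist_G(u,b)$ for $b\in\bnd{P}$ — which is exactly the hypothesis of the lemma — the distances inside $P$ are obtained by computing a ``multi-source'' shortest paths tree in $P$ where each boundary vertex $b$ is given an initial distance label $\dist_G(u,b)$.

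Concretely, I would build an auxiliary graph $P^+$ from $P$ by adding a single new super-source vertex $s^\ast$ together with an edge $s^\ast\to b$ of weight $\dist_G(u,b)$ for each $b\in\bnd{P}$. Then for every $v\in V(P)$ we have $\dist_{P^+}(s^\ast,v)=\min_{b\in\bnd{P}}\bigl(\dist_G(u,b)+\dist_P(b,v)\bigr)=\dist_G(u,v)$, the last equality being precisely \Cref{lem:piece-ball-union} rephrased for distances. The graph $P^+$ has $O(r)$ edges since $P$ has $O(r)$ edges and $|\bnd{P}|=O(\sqrt r)$, and it is unweighted apart from the $O(\sqrt r)$ source edges whose weights are bounded by $n$; running Dijkstra from $s^\ast$ therefore takes $\Ot(r)$ time. (Alternatively, since all edge weights of $P$ are $1$, one can bucket the $\le n$ possible distance values and run a BFS-like sweep, again in $\Ot(r)$ time; the logarithmic factor hidden in $\Ot$ absorbs the heap operations either way.) Finally I would output $\dist_{P^+}(s^\ast,v)$ for each $v\in V(P)$, with the understanding that unreached vertices get $\infty$.

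For correctness I would argue the two inequalities separately, exactly mirroring the proof of \Cref{lem:piece-ball-union}: the $\le$ direction is immediate because any path $s^\ast\to b\to\cdots\to v$ in $P^+$ corresponds to a genuine $u\to b\to\cdots\to v$ walk in $G$ of the same length; the $\ge$ direction uses that a shortest $u\to v$ path in $G$, with $v\in V(P)$ and $u\notin V(P)$, must cross into $P$ at some last boundary vertex $b^\ast$, after which it stays in $P$, so its length is $\dist_G(u,b^\ast)+\dist_P(b^\ast,v)\ge \dist_{P^+}(s^\ast,v)$.

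I do not expect a genuine obstacle here — this lemma is essentially a packaging statement, and the only thing to be careful about is the running time bookkeeping: one must note that $P$ being an edge-induced subgraph of a $K_h$-minor-free graph has $|V(P)|=O(r)$ as well (not just $O(r)$ edges), and that the $O(\sqrt r)$ auxiliary edges do not change the asymptotics, so a single Dijkstra (or bucketed BFS) run is $\Ot(r)$. The mild subtlety worth a sentence in the writeup is that the distance values can be as large as $n$, so the implementation should not allocate an $O(n)$-size bucket array per piece naively when summing over all pieces; using a priority queue keeps the per-piece cost at $\Ot(r)$ regardless.
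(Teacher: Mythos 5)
Your proposal is correct and matches the paper's proof exactly: both add a super-source to $P$ with edges to each $b\in\bnd{P}$ of weight $\dist_G(u,b)$, run Dijkstra in $\Ot(r)$ time, and justify correctness via the observation that any shortest $u\to v$ path passes through a last boundary vertex of $P$. The extra bookkeeping you include is fine but not needed beyond the paper's two-sentence argument.
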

\begin{proof}
For $v\in V(P)$, any shortest $u\to v$ path passes through some last from $\bnd{P}$,
and thus $\dist_G(u,v)=\min_{b\in \bnd{P}}(\dist_G(u,b)+\dist_P(b,v))$.
Observe that such minima can be found
by running Dijkstra's algorithm on the graph $P$ augmented with a super-source $s$
added and edges $sb$ of weight $\dist_G(u,b)$ for all $b\in \bnd{P}$.
\end{proof}

\subsubsection{A special case}
For $u\in V\setminus V(P)$, let $b_{u,0}$ be the boundary vertex of $P$ that
minimizes $q_u:=\dist_G(u,b_{u,0})$ (if one exists).
To illustrate the overall approach, consider first a special case when
for all $u\in V$ and $b\in \bnd{P}$ reachable from $u$,
\begin{equation}\label{eq:dist-oracle-easy}
    \dist_G(u,b)-q_u< r.
\end{equation}
This is the case, for example, 
if the piece $P$ is strongly connected.
Observe that, under the above assumption, the distances $\dist_G(u,b)$ are uniquely determined by $q_u$ and the position $y_b$
of the first ball containing $b$ in
the sequence
\[\dirB(u,q_u+(\dist_G(u,b_{u,0})-q_u)+1)=\dirB(u,q_u+1),\dirB(u,q_u+2),\ldots,\dirB(u,q_u+(r-1)).\]
Note that $\{y_b\}_{b\in\bnd{P}}$ is precisely the multiball vector
$\dirMB(u,q_u,\Delta)$, where
$\Delta=\{1,\ldots,r-1,\infty\}$, that belongs to the restriction 
of $\dirMBs_{G,\Delta}$ to $\bnd{P}$.
Therefore, by~\Cref{thm:mb_pseudodim}~and~\Cref{cor:restrict}, the number of such distinct multiball vectors (through all $u$ and $q_u$)
is $O((|\bnd{P}|\cdot r)^{h-1})=O(r^{3(h-1)/2})$.
We conclude that for a fixed $P\in\rdiv$, the number of distinct relative boundary distance vectors from $u$ to $\bnd{P}$:
\[ \delta_{P,u}:=\left(\dist_G(u,b)-\dist_G(u,b_{u,0})\right)_{b\in \bnd{P}}, \]
called \emph{patterns} in the following,
through all $u$, is $O(r^{3(h-1)/2})$ as well.

\newcommand{\bdvdict}{\mathcal{D}}

We can use the patterns (and their bounded number) as a proxy for constructing the 
pointers from $u\in V$ to the required balls, as follows.
For each piece $P$, we store an (initially empty) dictionary $\bdvdict_P$ mapping the already encountered patterns $(z_b)_{b\in \bnd{P}}$ to the collection of corresponding ball-piece intersections from $\pcbset_P$ required by the distance oracle of~\cite{LeW24}.
That is, for each $c\in \bnd{P}$, we store a pointer $\bdvdict_P(z)[c]$ to
\begin{equation}\label{eq:bpieceunion}
\bigcup_{b\in \bnd{P}}\dirB_P(b,z_c-z_b)
\end{equation}
Note that if $\delta_{P,u}$ is a pattern from $u$ to $\bnd{P}$, and $c\in \bnd{P}$, then by~\Cref{lem:piece-ball-union}:
\begin{align*}
    \dirB(u,\dist_G(u,c))\cap V(P) &= \bigcup_{b\in \bnd{P}}\dirB_P(b,\dist_G(u,c)-\dist_G(u,b))\\
    &=\bigcup_{b\in \bnd{P}}\dirB_P\left(b,\left({(\delta_{P,u})}_c+\dist_G(u,b_{u,0})\right)-\left({(\delta_{P,u})}_b+\dist_G(u,b_{u,0})\right)\right)\\
    &=\bigcup_{b\in \bnd{P}}\dirB_P\left(b,{(\delta_{P,u})}_c-{(\delta_{P,u})}_b\right)\\
    &=\bdvdict_P(\delta_{P,u})[c].
\end{align*}

Having fixed a piece $P$, we process the sources $u\in V\setminus V(P)$ one by one.
We can construct the pattern $\delta_{P,u}$ in $\Ot(\sqrt{r})$ time based on the stored distances.
If $\delta_{P,u}\notin \bdvdict_P$, then we compute and store the $O(\sqrt{r})$
sets~$Z$ of the form~\eqref{eq:bpieceunion}, one for each $c\in \bnd{P}$.
Note that each such set $Z$ consists of some number of vertices of $P$ that are
closest to $u$; the distances from $u$ to $V(P)$ in $G$ can be found in $\Ot(r)$ time
by~\Cref{lem:piece-dijsktra}.
As a result, each $Z$ is computed in $\Ot(r)$ time.
Next,~$Z$ is inserted into $\pcbset_P$ if it was
not present there yet -- this can be tested in $\Ot(|Z|)=\Ot(r)$ time. Finally, $\bdvdict_P(\delta_{P,u})[c]$ is set to point to
the location of $Z$ in $\pcbset_P$.

Once the pattern $\delta_{P,u}$ is present in $\bdvdict_P$,
we simply extract the $|\bnd{P}|$ pointers to $\pcbset_P$
from $\bdvdict_P(\delta_{P,u})$, as required by the distance oracle, in $O(\sqrt{r})$ time.

Note that the described computation when $\delta_{P,u}\notin \bdvdict_P$ takes $\Ot(r^{3/2})$ time ($\Ot(r)$ time for each set of the form~\eqref{eq:bpieceunion}), but can only happen $O(r^{3(h-1)/2})$ times. As a result, the total time cost
incurred through all $O(n)$ sources $u$ is $O(n\sqrt{r}+r^{3h/2})$.
Over all pieces $P\in\rdiv$, the total time spent is $\Ot\left(n^2/\sqrt{r}+nr^{3h/2-1}\right)$.

\subsubsection{The general case}
We will now explain how to extend the above approach to deal with the general case when  condition~\eqref{eq:dist-oracle-easy} does not hold.
The main challenge here is
that it is unclear how to establish a correspondence between the previously considered patterns and multiball vectors with a shifts set of size $\poly(r)$.
We circumvent this problem by using multiple patterns per piece-source pair $(P,u)$. In fact, the number of patterns used will possibly be $\Theta(|\bnd{P}|)$. This causes further problems, if the used patterns have size $|\bnd{P}|$, as before: if one computed all the patterns naively and explicitly, their total description length could be $\Theta(n^2)$.

Let $P\in\rdiv$ and $u\in V\setminus V(P)$.
Once again, suppose $\bnd{P}=\{b_1,\ldots,b_k\}$ so that
$\dist_G(u,b_1)\leq \ldots\leq \dist_G(u,b_k)$.
We start with the following lemma.

\begin{lemma}\label{l:interval-ball-unique}
    Let $s\in \{1,\ldots,k\}$. Let $i$ be the smallest index such that $\dist_G(u,b_i)> \dist_G(u,b_s)-r$.
    Let $j$ be the largest index such that $\dist_G(u,b_j)< \dist_G(u,b_s)+r$.
    Then, for all $l=s,\ldots,j$, the set $\dirB\left(u,\dist_G(u,b_l)\right)\cap V(P)$ is
    uniquely determined by $P$ and the values $i$, $j$, and $(\dist_G(u,b_i)-\dist_G(u,b_s)),\ldots,(\dist_G(u,b_j)-\dist_G(u,b_s))$.
\end{lemma}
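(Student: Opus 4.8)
The plan is to mimic the proof of~\Cref{lem:piece-ball-union}, but to argue that, for the relevant radii $q=\dist_G(u,b_l)$ with $l\in\{s,\ldots,j\}$, only the ``nearby'' boundary vertices — those indexed between $i$ and $j$ — actually contribute to the union $\bigcup_{b\in\bnd{P}}\dirB_P(b,q-\dist_G(u,b))$, and that for those contributing vertices the radius $q-\dist_G(u,b)$ is computable from the listed data. By~\Cref{lem:piece-ball-union}, for each such $l$ we have $\dirB(u,\dist_G(u,b_l))\cap V(P)=\bigcup_{b'\in\bnd{P}}\dirB_P(b',\dist_G(u,b_l)-\dist_G(u,b'))$. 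First I would split the index set $\{1,\ldots,k\}$ of boundary vertices into three parts according to the definitions of $i$ and $j$: those with $\dist_G(u,b_m)\le\dist_G(u,b_s)-r$ (i.e.\ $m<i$), those with $i\le m\le j$, and those with $\dist_G(u,b_m)\ge\dist_G(u,b_s)+r$ (i.e.\ $m>j$).

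The key step is to discard the far groups. For $m>j$: since $l\le j$, we have $\dist_G(u,b_l)\le\dist_G(u,b_j)<\dist_G(u,b_s)+r\le \dist_G(u,b_m)$ — wait, that only gives $\dist_G(u,b_l)<\dist_G(u,b_m)$, not that the radius is negative. I need the stronger consequence that the ball $\dirB_P(b_m,\dist_G(u,b_l)-\dist_G(u,b_m))$ adds nothing \emph{new}. Here is the cleaner route: I claim the union over $m>j$ is already contained in the union over $i\le m\le j$. Indeed, take any $v\in\dirB_P(b_m,\dist_G(u,b_l)-\dist_G(u,b_m))$ with $m>j$; then $\dist_P(b_m,v)\le\dist_G(u,b_l)-\dist_G(u,b_m)$, and there is a $b_m\to v$ path inside $P$. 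But every vertex of $P$ lies within distance $\operatorname{diam}$-type bound $r$ of any boundary vertex it can reach \emph{only} if $P$ is strongly connected, which we are not assuming. So instead I would argue directly at the level of shortest $u\to v$ paths, as in~\Cref{lem:piece-ball-union}: if $v\in\dirB(u,\dist_G(u,b_l))\cap V(P)$, its shortest $u\to v$ path passes through some last boundary vertex $b^*=b_{m^*}$ of $P$, and then $\dist_P(b^*,v)=\dist_G(u,v)-\dist_G(u,b^*)\le\dist_G(u,b_l)-\dist_G(u,b_{m^*})$. Since $\dist_P(b^*,v)\le r-1<r$ (path inside a piece with $O(r)$ edges — actually I should use that $v\notin\bnd P$ or a diameter bound; the cleanest is $\dist_P(b^*,v)< r$ because any simple path in $P$ has $<r$ edges in the unweighted case), we get $\dist_G(u,b_{m^*})>\dist_G(u,b_l)-r\ge\dist_G(u,b_s)-r$, forcing $m^*\ge i$; and $\dist_G(u,b_{m^*})\le\dist_G(u,v)\le\dist_G(u,b_l)<\dist_G(u,b_s)+r$ forces $m^*\le j$. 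Hence the last boundary vertex on every relevant shortest path lies in $\{b_i,\ldots,b_j\}$, so $\dirB(u,\dist_G(u,b_l))\cap V(P)=\bigcup_{m=i}^{j}\dirB_P(b_m,\dist_G(u,b_l)-\dist_G(u,b_m))$.

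It remains to observe that each quantity appearing on the right is recoverable from the listed data. The radius for the $m$-th term is $\dist_G(u,b_l)-\dist_G(u,b_m)=\bigl(\dist_G(u,b_l)-\dist_G(u,b_s)\bigr)-\bigl(\dist_G(u,b_m)-\dist_G(u,b_s)\bigr)$, and both $\dist_G(u,b_l)-\dist_G(u,b_s)$ (since $s\le l\le j$) and $\dist_G(u,b_m)-\dist_G(u,b_s)$ (since $i\le m\le j$) are among the explicitly given offsets; the index range $\{i,\ldots,j\}$ is given; and the graph $P$ with all its internal distances $\dist_P$ is given. Therefore the set $\dirB(u,\dist_G(u,b_l))\cap V(P)$ is a function of $P$, $i$, $j$, and the offset list, as claimed.

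\textbf{Main obstacle.} The delicate point is pinning down the bound $\dist_P(b^*,v)<r$ for the last-boundary-vertex segment of the shortest path: this uses unweightedness (a simple path in a piece with $O(r)$ edges has $<r$ edges, hence length $<r$) rather than strong connectivity, and one must be careful that it is the $P$-distance along the path, not $\dist_P(b^*,v)$ a priori, that is bounded — though they coincide here since the path suffix is a shortest path and lies in $P$. A secondary subtlety is making sure ``uniquely determined'' is interpreted correctly: two different sources $u,u'$ yielding the same $P,i,j$ and offset list must yield the same ball-piece intersection for each $l$ in range, which is exactly what the displayed formula gives since its right-hand side mentions only that data.
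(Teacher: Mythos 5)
There is a genuine gap: the claimed identity $\dirB(u,\dist_G(u,b_l))\cap V(P)=\bigcup_{m=i}^{j}\dirB_P\left(b_m,\dist_G(u,b_l)-\dist_G(u,b_m)\right)$ is false, because the step ``forcing $m^*\ge i$'' rests on a flipped inequality. From $\dist_P(b^*,v)=\dist_G(u,v)-\dist_G(u,b_{m^*})$ and $\dist_P(b^*,v)<r$ you may conclude $\dist_G(u,b_{m^*})>\dist_G(u,v)-r$; but since $v$ lies \emph{inside} the ball you only know $\dist_G(u,v)\le\dist_G(u,b_l)$, which is the wrong direction to deduce $\dist_G(u,b_{m^*})>\dist_G(u,b_l)-r$. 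Concretely, take $\dist_G(u,b_1)=10$, $\dist_G(u,b_s)=1000$, $r=100$, and a vertex $v\in V(P)$ reachable in $P$ only from $b_1$, with $\dist_P(b_1,v)=5$. Then $\dist_G(u,v)\le 15\le\dist_G(u,b_l)$, so $v\in\dirB(u,\dist_G(u,b_l))\cap V(P)$, yet the last boundary vertex on its shortest path is $b_1$ with index $1<i$, and $v$ belongs to none of the balls $\dirB_P(b_m,\cdot)$ with $i\le m\le j$. So your restricted union misses $v$, and no manipulation of the given offsets can detect such vertices, since their distances from $u$ are governed by boundary vertices whose offsets are not part of the data.

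The paper's proof closes exactly this hole with a separate case: if $v$ is reachable in $P$ from some $b_t$ with $t<i$, then $\dist_G(u,v)\le\dist_G(u,b_t)+\dist_P(b_t,v)<\dist_G(u,b_t)+r\le\dist_G(u,b_s)\le\dist_G(u,b_l)$, so $v$ automatically lies in every ball under consideration; this test depends only on $P$ and the set $\{b_1,\ldots,b_{i-1}\}$ (which in the application is encoded by the clamped value $-r$ in the pattern) and needs none of the unknown offsets. Only for $v$ \emph{not} reachable from $\{b_1,\ldots,b_{i-1}\}$ in $P$ can one argue that the last boundary vertex of any shortest $u\to v$ path has index at least $i$, after which your computation with the offsets goes through. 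The rest of your argument is sound: the exclusion of indices $t'>j$ is correct (there the radius $\dist_G(u,b_l)-\dist_G(u,b_{t'})$ is negative, so those terms are empty — your initial worry on this point was unnecessary), and the recoverability of each remaining radius as a difference of two listed offsets is exactly as in the paper.
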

\begin{proof}
Let $v\in V(P)$. Note that $l\in  [s,j]$ implies that $\dist_G(u,b_l)<\infty$.
We prove that we can determine whether $v$ is in 
$\dirB(u,\dist_G(u,b_l))$ based only on the piece~$P$ and the values $l$, $i$, $j$, and $(\dist_G(u,b_i)-\dist_G(u,b_s)),\ldots,(\dist_G(u,b_j)-\dist_G(u,b_s))$. 
First, if $v$ is not reachable from $\bnd{P}$ in $P$, then clearly $v\notin \dirB(u,\dist_G(u,b_l))$
(recall that $u\notin V(P)$).

If $v$ is reachable from some $b_t\in \{b_1,\ldots,b_{i-1}\}$
in $P$, then
\[\dist_G(u,v)\leq \dist_G(u,b_t)+\dist_P(b_t,v)< \dist_G(u,b_t)+r\leq \dist_G(u,b_s)\leq \dist_G(u,b_l).\]
So $v\in \dirB(u,\dist_G(u,b_l))$. 
Let us then assume that $v$ is not reachable from $\{b_1,\ldots,b_{i-1}\}$.
Then, any shortest $u\to v$ path in $G$ has some last vertex in
$b_t\in \{b_i,\ldots,b_k\}$. Hence, $v\in \dirB(u,\dist_G(u,b_l))$ iff
${\dist_G(u,b_t)+\dist_P(b_t,v)\leq \dist_G(u,b_l)}$ for some
$b_t\in \{b_i,\ldots,b_k\}$. Note that, equivalently, the inequality
has to hold for some ${b_t\in \{b_i,\ldots,b_j\}}$, since for $t'\in \{j+1,\ldots,k\}$,
we have
\[\dist_G(u,b_{t'})+\dist_P(b_{t'},v)\geq \dist_G(u,b_{t'})>\dist_G(u,b_j)\geq \dist_G(u,b_l).\]
Whether the required $t$ exists can be tested by checking, for all $t\in \{i,\ldots,j\}$, the inequality\linebreak
$(\dist_G(u,b_t)-\dist_G(u,b_s))+\dist_P(b_t,v)\leq (\dist_G(u,b_l)-\dist_G(u,b_s))$.
\end{proof}

Equipped with Lemma~\ref{l:interval-ball-unique}, we now define the set of patterns
used for the pair $(P,u)$. Let the \emph{pivot indices} $\Pi_{P,u}\in [k]$ for $\bnd{P}$ be defined
inductively as follows:
\begin{itemize}
    \item the index $1$ is included in $\Pi_{P,u}$,
    \item for subsequent $i=2,\ldots,k$, the index $i$ is included in $\Pi_{P,u}$ if the last previously
    included $j\in\Pi_{P,u}$ (where $j<i)$ satisfies $\dist_G(u,b_j)\leq\dist_G(u,b_i)-r$.
\end{itemize}
Let $\Pi_{P,u}=\{\pi_1,\ldots,\pi_{\ell}\}$, where $\pi_1\leq\ldots\leq\pi_{\ell}$. For convenience, introduce sentinels $\pi_0:=0$ and $\pi_{\ell+1}:=k+1$.
The size of $\Pi_{P,u}$ may generally range from $1$ to $|\bnd{P}|$.

We define $\ell$ patterns $\delta_{P,u,1},\ldots,\delta_{P,u,\ell}$ coming from $\{-r,\ldots,r\}^{\bnd{P}}$. For $b_t\in \bnd{P}$, we have:
\[ \left(\delta_{P,u,i}\right)_{b_t} = \begin{cases}
    \max(-r, \dist_G(u,b_t)-\dist_G(u,b_{\pi_i})) & \text{ if }t<\pi_{i+1},\\
    \min(r, \dist_G(u,b_t)-\dist_G(u,b_{\pi_i})) & \text{ if }t\geq \pi_{i+1}.
\end{cases}\]
Note that in the first case above for $t<\pi_{i+1}$ we indeed have:
\[\dist_G(u,b_t)-\dist_G(u,b_{\pi_i})\leq \dist_G(u,b_{\pi_{i+1}-1})-\dist_G(u,b_{\pi_i})\leq r-1< r.\]
From Lemma~\ref{l:interval-ball-unique} we can infer the following.
\begin{lemma}\label{lem:ball-pattern}
    The balls $\dirB(u,\dist_G(u,b_l))\cap V(P)$ for indices $l$ satisfying $\pi_q\leq l<\pi_{q+1}$ are all uniquely determined by $P$ and the pattern $\delta_{P,u,q}$.
\end{lemma}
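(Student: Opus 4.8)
The plan is to reduce \Cref{lem:ball-pattern} directly to \Cref{l:interval-ball-unique}, the only real content being a bookkeeping argument that the data $\delta_{P,u,q}$ carries exactly the information that \Cref{l:interval-ball-unique} demands. First I would fix a pivot index $\pi_q$ and set $s:=\pi_q$ in the statement of \Cref{l:interval-ball-unique}. I then need to identify the indices $i$ and $j$ from that lemma: $i$ is the smallest index with $\dist_G(u,b_i)>\dist_G(u,b_s)-r$, and $j$ is the largest index with $\dist_G(u,b_j)<\dist_G(u,b_s)+r$. The key observation is that the way $\Pi_{P,u}$ was constructed pins these down relative to the pivots: since $\pi_{q-1}$ is the previous pivot, all indices $t$ with $\dist_G(u,b_t)\le\dist_G(u,b_{\pi_q})-r$ satisfy $t<\pi_q$, and more precisely the definition of a pivot (``$\dist_G(u,b_{\pi_{q-1}})\le\dist_G(u,b_{\pi_q})-r$ and no later index before $\pi_q$ has this property'') shows that $i$ lies in $(\pi_{q-1},\pi_q]$. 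Symmetrically, because $\pi_{q+1}$ is the \emph{next} pivot, every $t$ with $\dist_G(u,b_t)\ge\dist_G(u,b_{\pi_q})+r$ satisfies $t\ge\pi_{q+1}$, so $j<\pi_{q+1}$; and since $\pi_q$ itself trivially satisfies $\dist_G(u,b_{\pi_q})<\dist_G(u,b_{\pi_q})+r$, we get $j\ge\pi_q=s$. So the interval $[s,j]\supseteq[\pi_q,\pi_{q+1}-1]$, which already covers all the indices $l$ named in the lemma.

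Next I would check that $\delta_{P,u,q}$ lets us recover the quantities $i$, $j$, and the offsets $(\dist_G(u,b_i)-\dist_G(u,b_s)),\ldots,(\dist_G(u,b_j)-\dist_G(u,b_s))$ that \Cref{l:interval-ball-unique} feeds on, together with $l-s$ for the index $l$ being queried. For $t$ in the range $[\pi_q,\pi_{q+1}-1]$ we have $t<\pi_{q+1}$, so the first branch of the definition applies, and the offset $\dist_G(u,b_t)-\dist_G(u,b_{\pi_q})$ lies in $[0,r-1]$ (the computation already written out in the text shows the upper bound, and nonnegativity is immediate from the sorted order), hence it is \emph{not} truncated and $(\delta_{P,u,q})_{b_t}$ equals the exact offset. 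For the relevant indices $t\in[i,\pi_q-1]$ — i.e., those strictly below the pivot but still within $r$ of it — we also have $t<\pi_{q+1}$, so the first branch applies; here $\dist_G(u,b_t)-\dist_G(u,b_{\pi_q})$ is negative but bounded below by $-r$ precisely because $t\ge i$ forces $\dist_G(u,b_t)>\dist_G(u,b_s)-r$, so again the $\max(-r,\cdot)$ truncation is inert and the exact offset is stored. Thus for every $t\in[i,j]$ (which is contained in $[i,\pi_{q+1}-1]$) the coordinate $(\delta_{P,u,q})_{b_t}$ equals $\dist_G(u,b_t)-\dist_G(u,b_s)$ exactly. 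The index $i$ is readable off as the smallest $t$ whose coordinate is $>-r$ (equivalently, is not the truncated sentinel value $-r$), and $j$ as the largest $t<\pi_{q+1}$ whose coordinate is $<r$; and $\pi_{q+1}$ itself is the first coordinate that switches to the $\min(r,\cdot)$ branch, which is detectable because for $t\ge\pi_{q+1}$ either the stored value is $r$ or $\dist_G(u,b_t)-\dist_G(u,b_{\pi_q})\ge r$ anyway — actually the cleanest route is to note that $\delta_{P,u,q}$ also implicitly encodes $\pi_{q+1}-1$ as the last index before the regime change, so no extra data is needed.

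Having matched up all the inputs, I would invoke \Cref{l:interval-ball-unique} with $s=\pi_q$: it says $\dirB(u,\dist_G(u,b_l))\cap V(P)$ is uniquely determined by $P$, $i$, $j$, and the offsets for $t\in[i,j]$, for every $l\in[s,j]$. Since $[\pi_q,\pi_{q+1}-1]\subseteq[s,j]$ and all of $i$, $j$, and those offsets are functions of $P$ and $\delta_{P,u,q}$ alone, the conclusion follows for every $l$ with $\pi_q\le l<\pi_{q+1}$.

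The main obstacle I anticipate is not any inequality — those are all short and have essentially been done in the surrounding text — but rather making the ``$i$, $j$, and $\pi_{q+1}$ are recoverable from $\delta_{P,u,q}$'' step genuinely airtight, in particular handling the boundary cases $q=1$ (so $\pi_0=0$ and there may be no truncated-below coordinates, meaning $i$ could be $1$) and $q=\ell$ (so $\pi_{\ell+1}=k+1$ and the second branch never triggers, so one must argue $j$ can still be located as the last index within distance $r$). I would dispatch these by observing that when a coordinate is truncated to $-r$ or $+r$ the pattern still tells us ``this index is outside the $r$-window,'' which is all \Cref{l:interval-ball-unique} needs, since such an index $t$ is provably irrelevant (reachable-from-closer, or too-far, respectively) — so even a truncated coordinate conveys the one bit that matters. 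That reduces the whole lemma to the already-proved \Cref{l:interval-ball-unique} with no further computation.
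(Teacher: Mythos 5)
Your proof is correct and follows essentially the same route as the paper: apply \Cref{l:interval-ball-unique} with $s=\pi_q$ and check that the pattern $\delta_{P,u,q}$ encodes $i$, $j$, and the exact (untruncated) offsets on $[i,j]$. If anything you are slightly more careful than the paper's own write-up, which asserts $i=\pi_{q-1}+1$ (only $i\in(\pi_{q-1},\pi_q]$ is guaranteed by the pivot rule), whereas your observation that a coordinate truncated to $\pm r$ still conveys exactly the one bit \Cref{l:interval-ball-unique} needs is the right way to close that gap.
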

\begin{proof}
We apply Lemma~\ref{l:interval-ball-unique} with $s:=\pi_q$. Then, the index $i$ in Lemma~\ref{l:interval-ball-unique} equals $\pi_{q-1}+1$ by the definition of $\pi_q$. Similarly,  the index $j$ in Lemma~\ref{l:interval-ball-unique} equals $\pi_{q+1}-1$ by the definition of $\pi_{q+1}$. For all $y=s,\ldots,j$ we have $\dist_G(u,b_y)-\dist_G(u,b_s)<r$ and analogously for all
$y=i,\ldots,s$ we have $\dist_G(u,b_y)-\dist_G(u,b_s)>-r$.
So for all $y=i,\ldots,j$, $(\delta_{P,u,q})_{b_y}=\dist_G(u,b_y)-\dist_G(u,b_s)$.
Moreover, we have $(\delta_{P,u,q})_{\pi_{q-1}}=-r$ and  $(\delta_{P,u,q})_{\pi_{q+1}}=r$.
This means that all the data required by~\Cref{l:interval-ball-unique} (besides $P$) to uniquely determine the desired balls can be read from the pattern $\delta_{P,u,q}$.
\end{proof}
As in the special case, a pattern $\delta_{P,u,i}$ corresponds 
to the unique multiball vector $\dirMB(u,\dist_G(u,b_{\pi_i}),\Delta)$, where
now we use $\Delta=\{-\infty,-r,\ldots,r,\infty\}$, that in turn belongs to the restriction 
of $\dirMBs_{G,\Delta}$ to $\bnd{P}$.
Once again, by~\Cref{thm:mb_pseudodim}~and~\Cref{cor:restrict}, we conclude that the numbers
of distinct patterns possible is $O((|\bnd{P}|\cdot |\Delta|)^{h-1})=O(r^{3(h-1)/2})$.

We would like to proceed analogously as in the special case. For each piece $P$, store an (initially empty) dictionary $\bdvdict_P$ mapping the already encountered patterns $(z_b)_{b\in \bnd{P}}$ to the collection of corresponding ball-piece intersections from~$\pcbset_P$ so that many pointers
of the distance oracle can be inferred from the previous computation.
Since now we might have as many as $|\bnd{P}|$ patterns per $(P,u)$ pair, we cannot
afford to spend $\widetilde{\Theta}(|\bnd{P}|)$ time for dictionary operations and even for constructing each of the patterns $\delta_{P,u,i}$. We deal with both these problems
by employing a variant of the dynamic strings data structure, defined as follows.

\newcommand{\sid}{\operatorname{id}}
\newcommand{\tcoll}{\mathcal{T}}
\newcommand{\tcollb}{\mathcal{T}^{\mathcal{B}}}

\begin{definition}
    Let $\Sigma$ be an alphabet. A \emph{dynamic strings data structure} maintains an initially empty collection $\tcoll$ of distinct strings from $\Sigma^*$. It supports adding new strings to the collection, and each insertion of a string $s\in \Sigma^*$ returns an \emph{identifier} $\sid(s)$ of $s$. Two strings from $\tcoll$ have equal
    identifiers if and only if they are equal. The two allowed insertion operations are:
    \begin{enumerate}
        \item insertion of an explicitly given string $s\in\Sigma^*$,
        \item given the identifier $\sid(s)$ of some $s\in\tcoll$, an index $k\in [|s|]$, and $c\in \Sigma$, insertion of the string $s'$ obtained from $s$ by replacing its $k$-th letter with symbol $c$.
    \end{enumerate}
\end{definition}
\begin{theorem}\label{t:dyn-string}{\upshape \cite{AlstrupBR00, MehlhornSU97}}
    There exists a deterministic dynamic strings data structure that processes a sequence of string insertions in $O(t\polylog{t})$ time,
    where $t$ equals $|\Sigma|$ plus the total number of insert operations plus the total length of the strings inserted explicitly (using the insertion of the first kind). The produced identifiers are $O(\polylog(t))$-bit integers.\footnote{It is worth noting that widely known dynamic strings data structures~\cite{AlstrupBR00, GawrychowskiKKL18, MehlhornSU97} support a much more general interface that allows producing new strings by concatenations and splits (at a desired position) of previously inserted strings that we do not need. Our dynamic string operations can be easily reduced to those. There also exist more efficient~\cite{AlstrupBR00, GawrychowskiKKL18} or simpler (Karp-Rabin fingerprint-based) randomized dynamic strings data structures.}
\end{theorem}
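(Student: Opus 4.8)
The statement is essentially a repackaging of the classical dynamic string equality data structures of Mehlhorn--Sundar--Uhrig and Alstrup--Brodal--Rauhe, so the plan is to recall their construction as a black box and then reduce our restricted interface to the primitives they support. Those data structures maintain a collection of strings, each represented by a balanced tree over its characters equipped with \emph{canonical} node labels (``signatures''), with the crucial property that two occurrences of the same substring --- regardless of their surroundings --- receive the same signature subtree. Consequently equal strings receive equal \emph{root} signatures, so one simply declares $\sid(s)$ to be the root signature of the representation of $s$; two strings in $\tcoll$ then compare equal in $O(1)$ time by comparing these integers. The primitives these structures support in $O(\polylog)$ amortized time are \emph{splitting} a maintained string at a given position into two maintained strings and \emph{concatenating} two maintained strings; each such operation touches only $O(\polylog)$ tree nodes precisely because signatures are context-independent.

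Given this, I would implement the two required operations as follows. To insert an explicitly given string $s=s_1\cdots s_m\in\Sigma^*$, build its representation by $m-1$ concatenations of the single-character strings $\langle s_1\rangle,\ldots,\langle s_m\rangle$, combined in a balanced order so that each concatenation costs $O(\polylog)$, for a total of $O(m\cdot\polylog)$ time; since $m$ is counted in the budget parameter $t$, this stays within the claimed bound. To perform a single-character replacement at index $k$ on a maintained string $s$ presented by $\sid(s)$, split $s$ around position $k$ to obtain $s=s_{\mathrm{pre}}\cdot\langle s_k\rangle\cdot s_{\mathrm{suf}}$, discard the middle token, and return the representation of $s_{\mathrm{pre}}\cdot\langle c\rangle\cdot s_{\mathrm{suf}}$ formed by two further concatenations; this is $O(1)$ calls to the underlying structure, hence $O(\polylog t)$ time, and the new identifier is again read off as the root signature.

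For the aggregate running time, observe that each of our operations triggers $O(1)$ underlying split/concat calls except explicit insertion, which triggers $O(m)$ of them; summing over the whole sequence, the number of underlying calls is $O(t)$ and their total cost is $O(t\cdot\polylog t)$, matching the claim. The identifiers are $O(\polylog t)$-bit integers because the signature hierarchy has only $O(\log(\text{maximum string length}))=O(\log t)$ levels, and the distinct signatures that ever appear at a fixed level can be renumbered consecutively into an integer range of size $O(t)$ (the structure generates only $O(t)$ new signatures over its lifetime), so a signature is a (level, index) pair encodable in $O(\log t)$ bits, and root signatures inherit this.

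The main obstacle --- and the only genuinely nontrivial ingredient --- is establishing the \emph{context-independence} of the signature decomposition that simultaneously makes split/concat cheap and forces equal strings to get equal root labels. This is exactly the content of the Mehlhorn--Sundar--Uhrig construction: each level of the hierarchy is parsed into blocks by a deterministic, locally consistent scheme (deterministic coin tossing in the style of Cole--Vishkin to break long runs into short blocks, together with a fixed rule for pairing the remaining symbols), engineered so that the parse of any window depends only on a bounded-radius neighborhood of that window, whence a substring's signature tree is oblivious to its surroundings. I would cite this wholesale rather than reprove it. If one is content with Monte Carlo correctness, the whole apparatus collapses to hashing a balanced binary tree over the string with Karp--Rabin fingerprints, which already suffices for all of our applications but does not yield the deterministic guarantee asserted here.
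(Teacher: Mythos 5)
Your proposal is correct and matches the paper's treatment: the paper cites \cite{AlstrupBR00, MehlhornSU97} as a black box and notes in a footnote that the restricted insert/replace interface "can be easily reduced" to the split/concatenate primitives those structures support, which is exactly the reduction you spell out. Your accounting of the $O(t\polylog t)$ total cost and the $O(\polylog t)$-bit identifiers is consistent with the claimed bounds, so nothing further is needed.
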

We set up a fresh dynamic strings data structure $\tcoll_P$ (with alphabet $\Sigma=\Delta\setminus\{-\infty,\infty\}$) of Theorem~\ref{t:dyn-string} for each piece $P$. Its main goals are (1) speeding up the construction of the multiple patterns for each $u$, and (2) compression of the pattern keys in the dictionary $\bdvdict_P$. 
Our algorithm will guarantee
that the quantity~$t$ in Theorem~\ref{t:dyn-string} is polynomial in $r$, so that
the produced identifiers can be manipulated in $\Ot(1)$ time.

For a fixed $u\in V\setminus V(P)$, the pivots $\Pi_{P,u}$ can be obtained in $\Ot(\sqrt{r})$ time after sorting the boundary vertices $|\bnd{P}|$ by their distance from $u$.
Having found the pivots, we start by explicitly setting up the zeroth ``sentinel'' pattern $\delta_{P,u,0}=(r)^{|\bnd{P}|}$ and inserting it into $\tcoll_P$; this way we obtain its identifier $\sid(\delta_{P,u,0})$.

Observe that each subsequent (identifier of) $\delta_{P,u,i}$ can be produced from $\delta_{P,u,i-1}$ by performing some number of coordinate changes,
each of which decreases some coordinate (character) $c$ either (1) all the way to $-r$,
or (2) by at least $r$ if $c\in [0,r)$, or (3) to a value less than $r$ if $c=r$.
Consequently, when transforming $\delta_{P,u,0}\to \delta_{P,u,1}\to\ldots\to \delta_{P,u,\ell}$,
every individual of the $k$ coordinates undergoes at most three changes:
its starting value is $r$, it may drop to some value between between $-r$ and $r$ exactly once, and after it becomes $-r$ the coordinate never changes again.
Therefore, if the strings
after each coordinate change are inserted into~$\tcoll$, the identifiers of all $\delta_{P,u,i}$
can be obtained using $O(|\bnd{P}|)$ operations on the data structure $\tcoll_P$, i.e., in $O(|\bnd{P}|\polylog{r})=\Ot(\sqrt{r})$ time. These identifiers are then
used for querying the dictionary $\bdvdict_P$ whether the relevant balls determined
by these patterns have already been computed. Recall that if a certain ball has not been yet encountered, we can compute it explicitly in $\Ot(r)$ time based on the pattern (see Lemma~\ref{l:interval-ball-unique}), and store a pointer to it in $\bdvdict_P$. Recall also that there are $O(\sqrt{r})$ relevant balls per pattern.

As in the special case, the total time spent through all pieces is $\Ot\left(n^2/\sqrt{r}+nr^{3h/2-1}\right)$.

\subsubsection{Speeding up balls computation with dynamic strings}\label{s:speeding-balls}
The bottleneck of the computation so far was computing the $O(\sqrt{r})$ balls of the form
$\dirB(u,\dist_G(u,b_l))\cap V(P)$, for $l=s,\ldots,j$, based on a single pattern $\delta$ storing the values $\dist_G(u,b_i)-\dist_G(u,b_s),\ldots,\dist_G(u,b_j)-\dist_G(u,b_s)$
that, together with $P$, uniquely determine these balls  (see~\Cref{l:interval-ball-unique} and~\Cref{lem:ball-pattern}).
Each of these balls was computed naively in $O(r)$ time, and its presence in $\pcbset_P$ was tested. This incurred
an $\Ot(r^{3/2})$ time overhead per encountered pattern.
Note, however, that the balls $\dirB(u,\dist_G(u,b_l))\cap V(P)$ are not unrelated,
as we have
\begin{equation}\label{eq:balls-inclusion}
    \dirB(u,\dist_G(u,b_s))\cap V(P)\subseteq \dirB(u,\dist_G(u,b_{s+1}))\cap V(P)\subseteq \ldots\subseteq \dirB(u,\dist_G(u,b_j))\cap V(P).
\end{equation}
Recall from the proof of Lemma~\ref{l:interval-ball-unique} that $v\in \dirB(u,\dist_G(u,b_l))$
if and only if $v$ is reachable from $\{b_1,\ldots,b_{i-1}\}$ in $P$ or 
there exists $t\in \{i,\ldots,j\}$ such that $(\dist_G(u,b_t)-\dist_G(u,b_s))+\dist_P(b_t,v)\leq (\dist_G(u,b_l)-\dist_G(u,b_s))$.
But the quantities $\min_{t=i}^j\{(\dist_G(u,b_t)-\dist_G(u,b_s))+\dist_P(b_t,v)\}$
can be easily computed for all $v\in V(P)$ in $\Ot(r)$ time using a single run of Dijkstra's algorithm on $P$ with a supersource
(as in Lemma~\ref{lem:piece-dijsktra}). Therefore, in $\Ot(r)$ time one can easily determine,
for each $v\in V(P)$,
the first ball from the sequence~\eqref{eq:balls-inclusion} that contains $v$.

This observation can be used as follows. Let us additionally store all the balls encountered (for the piece~$P$) in a dynamic strings data structure $\tcollb_P$ of~\Cref{t:dyn-string} with $\Sigma:=\{0,1\}$.
That is, assuming some total order on $V(P)$, the strings in $\tcollb_P$ correspond
to subsets of $V(P)$. Let again $\pcbset_P$ store the required ball-piece intersections explicitly, but now keyed
by their identifiers from $\tcollb_P$.
Observe that, when processing a pattern $\delta$, the identifiers of the balls from the sequence~\eqref{eq:balls-inclusion} can be computed using
$O(r)$ dynamic strings operations, as follows. We start from the string $\phi$ corresponding to the empty subset of $V(P)$. For each $i=s,\ldots,j$ we toggle in $\phi$ the coordinates corresponding to all $v\in V(P)$ such that $\dirB(u,\dist_G(u,b_i))$ is the earliest ball in~\eqref{eq:balls-inclusion} containing $v$. This way, each coordinate in $\phi$ is toggled at most once.
For each of the considered balls, if its identifier
is not yet in $\pcbset_P$, we insert its explicit representation with the appropriate
key to $\pcbset_P$.

The discussed optimization reduces the time needed to compute all the required
balls for ${P\in\rdiv}$ to $\Ot\left(\left(|\pcbset_P|+|\bdvdict_P|\right)\cdot r\right)=\Ot(r^{(3h-1)/2})$. The total time spent processing all pieces is thus $\Ot\left(n^2/\sqrt{r}+nr^{3(h-1)/2}\right)$.
This bound is optimized for $r=n^{\frac{2}{3h-2}}$, and thus we obtain:
\thmunweightedoracle*
It is worth recalling that the space-bound of the distance oracle remains $O\left(n^2/\sqrt{r}+nr^{h-1}\right)$. The most space-efficient exact distance oracle that we can still construct in
truly subquadratic time is obtained when~$r$ is maximum such that $nr^{3(h-1)/2}$ is still subquadratic. Specifically, for $r=n^{\frac{2}{3h-3}-2\eps}$ the preprocessing is truly subquadratic and the data structure
uses $\Ot\left(n^{2-\frac{1}{3h-3}+\eps}\right)$ space.

\subsection{Eccentricities and the Wiener index}\label{sec:wiener}
The data stored by the distance oracle is powerful enough to enable (with minor adjustments) computing
the vertex eccentricities and the Wiener index within the distance oracle's construction time.

\paragraph{Eccentricities.} Recall that the eccentricity of a vertex $u\in V$ is defined as $e(u)=\max_{v\in V}\{\dist_G(u,v)\}$. Since $O(n^2/\sqrt{r}+nr)$ distance pairs (intra-piece and $V\to \bnd{\rdiv}$) are stored explicitly,
computing eccentricities reduces to computing, for each pair $(P,u)$ with $u\notin V(P)$, the
quantity $e'_P(u):=\max_{v\in V(P)\setminus \bnd{P}}\{\dist_G(u,v)\}$ whenever
it is larger than the largest distance $\dist_G(u,b^*)$ from $u$ to $\bnd{P}$.
By Equation~\eqref{eq:distance-query}, $e'_P(u)>\dist_G(u,b^*)$ implies
\[e'_P(u)=\dist_G(u,b^*)+\max_{v\in V(P)\setminus \bnd{P}}\left\{\dist_P(\dirB(u,\dist_G(u,b^*))\cap V(P),v)\right\}.\]
But recall that values $\dist_P\left(\dirB(u,\dist_G(u,b))\cap V(P),v\right)$ for $b\in \bnd{P}$ and $v\in V(P)$ are computed and stored in the exact distance oracle anyway. So their maxima for
fixed boundary vertices $b$ can be found without increasing the running time asymptotically.
Hence, we conclude that all the relevant values $e'_P(u)$, and therefore the $n$ vertex eccentricities,
can be computed within $\Ot\left(n^{2-\frac{1}{3h-2}}\right)$ time.

\paragraph{Wiener index.} The Wiener index of $G$ is defined as $\sum_{u,v\in V}\dist_G(u,v)$. Suppose $G$ is strongly connected, as otherwise the Wiener index is infinite -- this can be verified in linear time.

Again, computing the Wiener index can be reduced in $O(n^2/\sqrt{r}+nr)$ time to computing
values
${\sigma_{P,u}:=\sum_{v\in V(P)\setminus \bnd{P}}\dist_G(u,v)}$ for all $P\in \rdiv$ and $u\in V\setminus V(P)$. Let again $b_1,\ldots,b_k$ denote the boundary vertices of $P$ so that
$\dist_G(u,b_1)\leq \ldots\leq \dist_G(u,b_k)$. For convenience, put $\dist_G(u,b_{k+1}):=\infty$ and $B_{i}:=\dirB(u,\dist_G(u,b_i))\cap V(P)$.
By Equation~\eqref{eq:distance-query},
we obtain:
\begin{align*}
\sigma_{P,u}&=\sum_{i=1}^k \sum_{v\in B_{i+1}\setminus B_i\setminus\bnd{P}}\left(\dist_G(u,b_i)+\dist_P\left(B_i,v\right)\right)\\
&=\sum_{i=1}^k \left(\dist_G(u,b_i) \cdot |B_{i+1}\setminus B_i\setminus \bnd{P}|+\sum_{v\in B_{i+1}\setminus B_i\setminus\bnd{P}}\dist_P\left(B_i,v\right)\right)\\
&=\sum_{i=1}^k \left(\dist_G(u,b_i) \cdot (|B_{i+1}|-|B_i|-[i\neq k\land 
 \dist_G(u,b_{i+1})>\dist_G(u,b_i)])\right)+\sum_{i=1}^k\sum_{v\in B_{i+1}\setminus B_i\setminus\bnd{P}}\dist_P\left(B_i,v\right).\\
\end{align*}
Observe that $\sigma_{P,u}$ could be computed in $O(\sqrt{r})$ time if, for 
all pairs $(B',B'')\in \pcbset_B$, the value $\sum_{v\in B'\setminus B''\setminus\bnd{P}}\dist_P\left(B'',v\right)$
was available.
Recall that $|\pcbset_P|=O(r^{h-1})$, so all these auxiliary
values could be computed in $O(|\pcbset_P|^2\cdot r)=O(r^{2h-1})$ time.
However, proceeding naively like this, we would obtain
the $\Ot\left(n^{2-\frac{1}{4h-3}}\right)$ bound that is worse than the one we have obtained for computing eccentricities.

To obtain a more efficient solution, we reuse the patterns $\delta_{P,u,i}$. Recall (\Cref{lem:ball-pattern}) that
for a piece-source pair $(P,u)$, one could compute pivots $1=\pi_1,\ldots,\pi_{\ell+1}=k+1$ such that for ${i=1,\ldots,\ell}$, the balls $\dirB(u,\dist_G(u,b_l))\cap V(P)$, where
$\pi_i\leq l<\pi_{i+1}$, were uniquely determined
by the pattern $\delta_{P,u,i}$.
For $i=1,\ldots,\ell$, consider:
\begin{align}\label{eq:wiener}
z_{P,u,i}
&=\sum_{j=\pi_i}^{\pi_{i+1}-1} \sum_{v\in B_{j+1}\setminus B_i\setminus\bnd{P}}\dist_P\left(B_j,v\right)
\end{align}
Clearly, if we can compute $\sum_{i=1}^\ell z_{P,u,i}$ in $O(\ell)$ time for a fixed pair $(P,u)$, then we are done.

Since the balls $B_{\pi_i},\ldots,B_{\pi_{i+1}-1}$ are uniquely determined by the
pattern $\delta_{P,u,i}$ (\Cref{l:interval-ball-unique}), all summands in~\eqref{eq:wiener} except the last one (for $j=\pi_{i+1}-1$) depend
on $\delta_{P,u,i}$ only. 
To make the last summand determined by $\delta_{P,u,i}$ a slight change in the
pattern's definition is enough:
let the pattern store the value \linebreak $\min(2r,\dist_G(u,b_t)-\dist_G(u,b_{\pi_i}))$
instead of $\min(r,\dist_G(u,b_t)-\dist_G(u,b_{\pi_i}))$
for $t\geq \pi_{i+1}$. This slight change does not increase the time needed to compute the patterns of their total number asymptotically.
To see that the adjustment indeed makes $B_{\pi_{i+1}}$ uniquely determined by $\delta_{P,u,i}$,
note that if $\dist_G(u,b_{\pi_{i+1}})-\dist_G(u,b_{\pi_i})<2r$, then the pattern encodes the exact value of that difference and one can proceed similarly
as in the proof of Lemma~\ref{l:interval-ball-unique}.
If, on the other hand $\dist_G(u,b_{\pi_{i+1}})-\dist_G(u,b_{\pi_i})\geq 2r$, then by
$\dist_G(u,b_{\pi_{i+1}-1})-\dist_G(u,b_{\pi_i})<r$
we conclude $\dist_G(u,b_{\pi_{i+1}})-\dist_G(u,b_{\pi_{i+1}-1})\geq r$. 
It follows that $B_{\pi_{i+1}}$ consists precisely
of the elements of 
$B_{\pi_{i+1}-1}$, the vertex $b_{\pi_{i+1}}$, and all the vertices reachable
from $\{b_1,\ldots,b_{\pi_{i+1}-1}\}$ in $P$.
In other words, $B_{\pi_{i+1}}$ is completely determined by the pattern and $P$ in such a case as well.

We have thus proved that the value $z_{P,u,i}$ can be computed based on the pattern $\delta_{P,u,i}$ and the piece~$P$ exclusively. Hence, we can store and reuse this data when processing other pairs $(P,u)$.
Finally, recall that the sets $B_{\pi_i},(B_{\pi_i+1}\setminus B_{\pi_i}),\ldots,(B_{\pi_{i+1}}\setminus B_{\pi_{i+1}-1})$,
as well as the identifiers of the respective balls $B_{\pi_i},\ldots,B_{\pi_{i+1}}$
are computed anyway when processing the pattern $\delta_{P,u,i}$ (see Section~\ref{s:speeding-balls}). As a result, the sum~\eqref{eq:wiener} can be evaluated
in $\Ot(r)$ time per encountered pattern as well, i.e., the time spent processing a single
pattern increases only by a constant factor.
We summarize the above discussion with the following.
\thmeccentricitieswiener*

\section{Decremental reachability and bottleneck distances}\label{s:decremental}
In this section, we show how the bounded VC dimension
of the ball set in $G$ can be used to obtain a
non-trivial (partially) dynamic reachability data structure with $O(1)$ query time.

More specifically, we consider the decremental setting.
At the beginning, a $K_h$-minor-free digraph $G$ is given.
The graph $G$ is subject to edge deletions.
The goal is to obtain a data structure that can process any sequence of (at most $m$) edge deletions
in $G$ possibly interleaved with arbitrary-pair reachability queries in the current version of $G$.

We will first show a simpler randomized data structure. Later, we will discuss how to achieve the same asymptotic bounds deterministically. We will use the following result.
\begin{theorem}\label{thm:dec-ssr}{\upshape\cite{BernsteinPW19}}
    Let $G=(V,E)$ be a directed graph and let $s\in V$ be a fixed source. There exists a data structure maintaining the set of vertices reachable from $s$ in $G$ when $G$ is subject to edge deletions.
    Specifically, after each deletion, the data structure reports the vertices that became unreachable from $s$ as a result of that deletion.
    The expected total update time of the data structure is $\Ot(n+m)$.
\end{theorem}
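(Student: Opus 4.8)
The plan is to prove the statement by reducing decremental single-source reachability to decremental maintenance of the strongly connected components (SCCs) of $G$, which is the genuinely hard ingredient and is the core of~\cite{BernsteinPW19}. Observe first that, under edge deletions, SCCs only ever split, so the condensation $\widehat{G}$ obtained by contracting each SCC only refines over time, and a vertex $t$ is reachable from $s$ in $G$ if and only if the SCC of $t$ is reachable from the SCC of $s$ in $\widehat{G}$. Reachability from a single source in a decrementally refining DAG is easy to maintain in near-linear total time: keep, for each DAG node, a counter of incoming edges originating at currently reachable nodes; a node becomes permanently unreachable when its counter hits zero, at which point one scans its out-edges and propagates the change; when an SCC fractures into a sub-DAG one re-evaluates reachability locally on the new nodes. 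With a standard potential argument (charging rescans of an edge to the hierarchy level at which its endpoints' SCC is resolved, see below), this bookkeeping costs $\Ot(m)$ in total, and it naturally reports, after each deletion, the vertices that just became unreachable from $s$, as required. So it suffices to maintain SCCs under deletions in $\Ot(n+m)$ expected time.

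For the SCC maintenance I would use the hierarchical-decomposition framework (due to Chechik, Hansen, Italiano, and collaborators): maintain a tree of vertex sets of depth $O(\log m)$ whose root is $V$; at a node $X$ holding a currently strongly connected piece $H_X$, store a \emph{separator} $S_X \subseteq V(H_X)$ whose deletion decomposes $H_X$ into strongly connected sub-pieces on which the hierarchy recurses as the children of $X$ (the vertices of $S_X$ are handled as trivial pieces at level $X$), arranged so that the edge count of a piece drops by a constant factor with each level of descent. Hence the hierarchy has $O(\log m)$ levels and each edge sits in $O(\log m)$ pieces. When a deleted edge lies inside a piece $H_X$ that is no longer strongly connected, $H_X$ fractures; one re-runs the separator routine on the resulting strongly connected sub-pieces one level deeper, charging the recomputation to those sub-pieces. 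Because the relevant potential drops geometrically as one descends, the total recomputation cost telescopes to $\Ot(m)$; answering ``what is the current SCC of $v$?'' is then a lookup of the deepest strongly connected piece containing $v$.

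The crux — and what I expect to be the main obstacle — is the separator primitive: given a strongly connected $H$ with $m_H$ edges, one must find in $\Ot(m_H)$ time a separator whose removal leaves pieces that are either balanced in edge count or have small weak diameter $d$, with $|S| = \Ot(m_H/d)$, and which is cheap to maintain as $H$ shrinks. The idea I would use, following~\cite{BernsteinPW19}, is a randomized bounded-depth search: pick a uniformly random center $c \in V(H)$ and grow forward and backward Even--Shiloach (bounded-depth BFS) trees from $c$ up to depth $d$; if both trees reach all of $H$, the two depth-$d$ trees (in $H$ and in its reverse) certify strong connectivity at maintenance cost $O(m_H \cdot d)$, and any vertex that later leaves a tree triggers a local fracture-and-recurse; otherwise some BFS layer at radius $d' \le d$ must be ``thin'' and reasonably balanced, and its crossing vertices form $S$. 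Repeating over $O(\log n)$ independent centers and taking a union bound gives success with high probability, and tuning $d$ so that the per-level Even--Shiloach cost $O(m \cdot d)$ balances against the $O(m/d)$-sized separators, together with the geometric potential drop down the $O(\log m)$-deep hierarchy, yields the claimed $\Ot(n+m)$ total bound. Verifying that the ``thin layer'' potential genuinely decreases geometrically at each level, and that the randomized choices remain correct against the (oblivious) deletion sequence, is where the real accounting work lies; none of it interacts with the rest of this paper, which uses the theorem only as a black box.
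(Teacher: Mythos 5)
This statement is not proved in the paper at all: it is \Cref{thm:dec-ssr}, quoted verbatim from~\cite{BernsteinPW19} and used strictly as a black box (the paper's only ``proof'' is the citation). So there is no internal argument to compare yours against; what you have written is a sketch of the \emph{cited} paper's proof. As such a sketch, your first two paragraphs are a fair account of the architecture (reduce SSR to decremental SCC maintenance plus reachability in the refining condensation DAG --- note the even simpler reduction of adding a permanent edge $v\to s$ for every $v$, after which $s$ reaches $v$ iff $s$ and $v$ share an SCC --- then maintain SCCs via a logarithmic hierarchy whose pieces are certified strongly connected by in/out bounded-depth Even--Shiloach trees rooted at random centers, splitting off pieces along thin BFS layers when certification fails).

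The genuine gap is in your final accounting. ``Tuning $d$ so that the per-level Even--Shiloach cost $O(m\cdot d)$ balances against the $O(m/d)$-sized separators'' does \emph{not} yield $\Ot(n+m)$: that balancing is exactly the computation behind the earlier $\Ot(m\sqrt{n})$ bound of Chechik, Hansen, Italiano, Loitzenbauer, and Parotsidis, and no choice of $d$ makes $m\cdot d$ and the separator-driven work simultaneously near-linear. The near-linear bound of~\cite{BernsteinPW19} needs two further ideas your sketch omits: (i) the hierarchy contracts the SCCs found at lower levels before recursing, so that at every level the ES trees only ever need polylogarithmic depth rather than depth tied to a balancing parameter; and (ii) because the center is chosen uniformly at random (against an oblivious adversary), the part split off when certification fails is, in expectation, a constant fraction smaller than its parent, so each vertex changes its SCC only $O(\log n)$ times in expectation --- this, not a thin-layer balance condition, is what makes the recomputation telescope. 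Without (i) and (ii) the claimed $\Ot(n+m)$ total time does not follow from the ingredients you list. Since the surrounding paper only needs the theorem as an imported primitive, none of this affects its results, but your proposal as written does not constitute a proof of the stated bound.
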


\subsection{The maintained components}
The base of our data structure is an $r$-division $\rdiv$ of the initial graph. Note that $\rdiv$ -- set up once -- remains a valid $r$-division of $G$ even when $G$ is subject to deletions. Specifically, whenever an edge $e\in E$ is deleted, it is deleted from the corresponding piece $P\in\rdiv$ it is contained in. Clearly, the properties of the $r$-division (see~\Cref{sec:template}) are preserved.
Recall that computing $\rdiv$ costs $O(n^{1+\eps})$ time for $h=O(1)$.

\newcommand{\dssr}{\mathcal{L}}

\paragraph{Intra-piece decremental reachability.}
For each $P\in \rdiv$ and $u\in V(P)$ we set up a
decremental
single-source reachability (SSR)  data structure
of~\Cref{thm:dec-ssr} on the graph $P$ with source $u$. As each initial edge~$e$ is contained in precisely one piece of $\rdiv$, the deletion of $e$ is passed to the corresponding piece. These data structures allow checking for intra-piece paths in $O(1)$ time. Their total update time is $\Ot((n/r)\cdot r\cdot r)=\Ot(nr)$.

\paragraph{Decremental reachability to boundary vertices.} For each $b\in \bnd{\rdiv}$, we set up a decremental SSR  data structure
$\dssr_b$ of~\Cref{thm:dec-ssr} on the \emph{reverse} $G^R$ of $G$  with source $b$.
Note that the data structures $\dssr_b$, $b\in \bnd{\rdiv}$, can be used to explicitly maintain, for all pairs $(P,u)\in \rdiv\times V$, the subset $R_{P,u}$ of boundary vertices $\bnd{P}$ currently reachable from $u$ in $G$.
Indeed, whenever some $u\in V$ is reported as unreachable from some $b\in \bnd{\rdiv}$ in $G^R$, this means that $b$ becomes unreachable from $u$ in $G$. In such a case, $b$ can be removed from all $R_{P,u}$ such that $b\in \bnd{P}$.
Each deletion issued to $G$ is passed to each of the
$\Ot(n/\sqrt{r})$ data structures $\dssr_b$. Note that their total update time is therefore $\Ot(n^2/\sqrt{r})$. Since $\sum_{P\in\rdiv}|\bnd{P}|=O(n/\sqrt{r})$, the sets
$R_{P,u}$ undergo $O(n^2/\sqrt{r})$ updates in total.

\paragraph{Pattern identifiers.} Moreover, for every piece $P\in\rdiv$, we set up a dynamic strings data structure $\tcoll_P$ (see~\Cref{t:dyn-string}) maintaining the subsets $R_{P,u}\subseteq \bnd{P}$ encountered so far (in any of the versions of the graph~$G$), so that they are assigned unique small-integer identifiers. Recall that such subsets can be viewed as strings
from~$\{0,1\}^{|\bnd{P}|}$. We will also call
these subsets \emph{patterns}.
Whenever a vertex $b\in \bnd{P}$ gets removed
from some~$R_{P,u}$, this corresponds to a single-coordinate update in the string representing it,
so a new identifier of the updated set
$R_{P,u}\in\tcoll_P$ can be obtained in $\Ot(1)$ time. Observe that this way, the data structure maintains small-integer identifiers of all the $O(n^2/r)$ sets $R_{P,u}$ (for every individual $P\in\rdiv$) subject to a sequence of edge deletions in $G$. The total time spent maintaining the patterns is $\Ot(n^2/\sqrt{r})$.

\paragraph{Decremental reachability from patterns.} Finally, for a fixed $P$, and every encountered pattern $X\subseteq \bnd{P}$ (of the form $R_{P,u}$ for some $u$), we set up a decremental SSR data structure $\dssr_{P,X}$ on the (current version of the) piece $P$ with a super-source $s$
and auxiliary edges $sb$ for all $b\in X$ added. Every subsequent edge deletion that $P$ undergoes is passed to $\dssr_{P,X}$, so that $\dssr_{P,X}$
always maintains the vertices reachable from the set $X$ in $P$. By~\Cref{thm:dec-ssr}, the
total update time of $\dssr_{P,X}$ is $\Ot(r)$.
As a result, the total cost of maintaining
these decremental SSR data structures can
be bounded by $\sum_{P\in\rdiv}\Ot\left(|\tcoll_P|r\right)$,
where $|\tcoll_P|$ is the size of the collection $|\tcoll_P|$ at the end of the sequence of updates, or equivalently, the total number of distinct patterns encountered for the piece $P$.

\subsection{Answering queries} Suppose a query $(u,v)$ is issued whether $u$ can currently reach $v$ in $G$. Let $P$ be any piece containing $v$.
If $u\in V(P)$, a path $u\to v$ contained entirely in the piece $P$ may exist. We can test that using the intra-piece decremental SSR data structure.

If $u\notin V(P)$ or no $u\to v$ path inside $P$ exists, it is enough to check whether~$v$ is reachable from
$R_{P,u}$ in the piece $P$ -- recall that this is maintained by the data structure $\dssr_{P,R_{P,u}}$.
Indeed, if $v\in \bnd{P}$, then $u$ can reach $v$ iff $v\in R_{P,u}$ by the definition.
Otherwise, $P$ is the unique piece containing
$v$, and any $u\to v$ path in $G$ contains some latest $b\in \bnd{P}$. So an $u\to v$ path
exists in $G$ iff $v$ is reachable in $P$ from some of its boundary vertices reachable from $u$ in $G$. 
Clearly, such a query can be processed in constant time as well.

\subsection{Bounding the total update time}\label{sec:decr-total-update}
The expected total update time can be easily seen to be $\Ot(n^2/\sqrt{r}+nr)$ plus $\Ot(r)$ times the
total number of distinct patterns encountered through all pieces. In the following, we will prove the below key lemma.
\begin{lemma}\label{l:reach-patterns}
    For any piece $P\in\rdiv$, the total number of distinct patterns encountered is $O(r^{(h-1)/2})$.
\end{lemma}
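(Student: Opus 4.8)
The key is to recognize that every pattern $R_{P,u}$ ever encountered during the deletion sequence is a \emph{bottleneck ball} in a related auxiliary weighted digraph, restricted to $\bnd{P}$. I would construct the auxiliary graph $G'$ on the same vertex set as the initial graph $G$, where edge $e$ is assigned weight equal to the timestamp of its deletion (say, its position in the deletion sequence, with edges never deleted getting weight $\infty$, or more conveniently $m+1$). Recall the bottleneck distance $\beta_{G'}(u,v)$ is the minimum over $u\to v$ paths of the maximum edge weight on the path. The central observation is: after the first $\tau$ deletions have been applied to $G$, a vertex $v$ is reachable from $u$ in the current graph if and only if $\beta_{G'}(u,v)>\tau$ (equivalently, there is a $u\to v$ path all of whose edges have deletion-timestamp exceeding $\tau$, i.e.\ all still present). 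Hence the set of vertices reachable from $u$ at time $\tau$ is exactly the bottleneck ball $\{v : \beta_{G'}(u,v) > \tau\}$, which is the complement of a bottleneck ball $\{v : \beta_{G'}(u,v) \le \tau\}$.

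\medskip
\noindent Next I would invoke the reduction, sketched in the overview, that bottleneck distances are a special case of ordinary shortest-path distances: reweight the $i$-th smallest edge weight of $G'$ to $n^i$ (or any superincreasing sequence), obtaining a positively weighted digraph $G''$ in which the shortest $u\to v$ path has length dominated by its maximum-weight edge, so that $\beta_{G'}(u,v) \le \tau$ iff $d_{G''}(u,v) \le n^{\tau}$ (with the appropriate correspondence of thresholds). Therefore each bottleneck ball of $G'$ is a directed ball $\dirB(u,q)$ in $G''$ for a suitable radius $q$, and $G''$ is still $K_h$-minor-free since it has the same underlying graph as $G$. By \Cref{cor:vcdim}, the ball system $\dirBs_{G''}$ has VC dimension at most $h-1$. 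Restricting to the ground set $\bnd{P}$ and applying \Cref{cor:restrict} (with $\ell=2$, i.e.\ the Sauer--Shelah case), the number of distinct sets of the form $\dirB(u,q)\cap\bnd{P}$ is $O(|\bnd{P}|^{h-1}) = O(r^{(h-1)/2})$.

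\medskip
\noindent It then remains to check that every pattern the data structure actually encounters is of this form. A pattern for $P$ is a set $R_{P,u}$ for some vertex $u$ at some point in time; at that time $\tau$ deletions have occurred, and by the correspondence above $R_{P,u} = \{b \in \bnd{P} : \beta_{G'}(u,b) > \tau\} = \bnd{P} \setminus (\dirB_{G''}(u,n^{\tau}) \cap \bnd{P})$. Since complementation within the fixed finite ground set $\bnd{P}$ is a bijection, the number of distinct patterns is at most the number of distinct sets $\dirB_{G''}(u,n^\tau)\cap\bnd{P}$, which we just bounded by $O(r^{(h-1)/2})$. This proves the lemma. Combined with the accounting in \Cref{sec:decr-total-update}, the total update time is $\Ot(n^2/\sqrt{r} + nr + nr \cdot r^{(h-1)/2}/r) = \Ot(n^2/\sqrt r + nr^{(h-1)/2})$, which at $r = n^{2/h}$ gives $\Ot(n^{2-1/h})$.

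\medskip
\noindent \textbf{Main obstacle.} The routine parts (Sauer--Shelah, restriction, the superincreasing reweighting) are standard; the one step that needs care is the precise statement and proof of the equivalence ``$v$ reachable from $u$ after $\tau$ deletions $\iff \beta_{G'}(u,v) > \tau$,'' including handling ties in deletion timestamps (make all timestamps distinct) and the never-deleted edges, and then threading this through the value correspondence between the bottleneck threshold $\tau$ and the ball radius $n^\tau$ cleanly so that a \emph{single} radius captures each snapshot. One subtlety worth noting: the patterns range over \emph{all} $u \in V$ and \emph{all} times $\tau$, but since $G'$ is fixed once and for all (it encodes the entire, eventually-revealed deletion sequence), the family $\{\dirB_{G''}(u,q) : u \in V, q \in \R\}$ already contains all of them, so the VC bound applies uniformly regardless of the adaptivity of the queries.
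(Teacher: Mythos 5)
Your overall strategy is the same as the paper's: interpret the patterns $R_{P,u}$ as restrictions to $\bnd{P}$ of bottleneck balls in a fixed auxiliary graph whose weights encode deletion timestamps, reduce bottleneck balls to ordinary directed balls via a superincreasing reweighting, and invoke \Cref{cor:vcdim} together with \Cref{cor:restrict}. However, your central equivalence is false as stated, and it is exactly the step you flagged as ``the one that needs care.'' With your convention (edge $e$ gets weight equal to its deletion time, so $e$ is present at time $\tau$ iff $w(e)>\tau$), reachability of $v$ from $u$ at time $\tau$ means there \emph{exists} a path all of whose edges have weight exceeding $\tau$, i.e.\ $\max_{\pi}\min_{e\in\pi} w(e) > \tau$. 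This is the \emph{max-min} (widest-path) quantity, not the min-max bottleneck distance $\beta_{G'}(u,v)=\min_{\pi}\max_{e\in\pi}w(e)$ that the paper defines and that your reweighting argument handles. The condition $\beta_{G'}(u,v)>\tau$ says that \emph{every} path contains some not-yet-deleted edge, which is neither necessary nor sufficient for reachability: take two parallel $u\to v$ paths, one whose edge is deleted at time $1$ and one at time $5$; at $\tau=3$ the vertex $v$ is still reachable, yet $\beta_{G'}(u,v)=1\le 3$. Consequently the reachable set is not the complement of a min-max bottleneck ball under your timestamping, and the subsequent identification with $\bnd{P}\setminus(\dirB_{G''}(u,n^\tau)\cap\bnd{P})$ does not go through.

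The fix is small and is precisely what the paper does: reverse the time axis. Index the graph versions so that $G_0$ is the empty graph and $G_m$ the initial one, and let $w(e)$ be the first (reversed) index at which $e$ appears; then a path exists in $G_k$ iff all its edges satisfy $w(e)\le k$, so reachability becomes $\beta_{G'}(u,v)\le k$ and each pattern is a genuine bottleneck ball $\dirB^\beta_{G'}(u,k)$ restricted to $\bnd{P}$ --- no complementation needed. (Equivalently, negate your timestamps: $\max\min$ of $w$ becomes $-\min\max$ of $-w$.) With that correction the rest of your argument --- the superincreasing reweighting showing bottleneck balls form a subfamily of $\dirBs_{G''}$ for a $K_h$-minor-free $G''$, the restriction to $\bnd{P}$, and the observation that $G'$ is fixed once the whole deletion sequence is revealed so the bound holds uniformly over all $u$ and all times --- matches the paper's proof of \Cref{l:reach-patterns} and \Cref{l:bottleneck-vc}, and the complementation-is-a-bijection remark becomes unnecessary.
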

By~\Cref{l:reach-patterns}, it will follow that the expected total update time
of the data structure is $\Ot\left(n^2/\sqrt{r}+nr^{(h-1)/2}\right)$. This  is optimized for $r=n^{2/h}$, and leads to the $\Ot\left(n^{2-1/h}\right)$ bound then.

First, let us consider the concept of \emph{bottleneck distances} in a (static) weighted graph $G=(V,E)$ with edge weights given by $w$.
The \emph{bottleneck length} of a path $e_1\ldots e_k=P\subseteq G$ is defined as $\max_{i=1}^k\{w(e_i)\}$.
For $s,t\in V$, the bottleneck distance $\beta_G(s,t)$
is defined as a minimum bottleneck length of an $s\to t$ path in~$G$. For any $s\in V$ and $r\in\mathbb{R}$, define the bottleneck ball
$\dirB^\beta(s,\rho)=\{v\in V:\beta_G(s,v)\leq \rho\}$.
\begin{lemma}\label{l:bottleneck-vc}
If a weighted digraph $G=(V,E)$ with weights given by $w$ is $K_h$-minor-free, then the set system
$\dirBs^\beta=\left\{\dirB^\beta_G(s,\rho):s\in V, \rho\in\mathbb{R}\right\}$ has VC dimension at most $h-1$.
\end{lemma}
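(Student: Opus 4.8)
The plan is to realise every bottleneck ball of $G$ as an ordinary shortest-path ball of a reweighted copy of $G$, and then to combine Corollary~\ref{cor:vcdim} with the elementary fact that the VC dimension of a set system does not increase when one passes to a sub-collection. Concretely, let $w_1<w_2<\cdots<w_k$ be the distinct edge weights of $G$, and let $w'$ be the (positive) weight function assigning $w'(e)=n^i$ to each edge $e$ with $w(e)=w_i$. Write $G'$ for $G$ equipped with $w'$; since $G'$ has the same underlying digraph as $G$, it is $K_h$-minor-free, so by Corollary~\ref{cor:vcdim} the system $\dirBs_{G'}$ of ordinary balls in $G'$ has VC dimension at most $h-1$. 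It therefore suffices to establish the inclusion of set systems $\dirBs^\beta\subseteq\dirBs_{G'}$ over the ground set $V$: any vertex set shattered by $\dirBs^\beta$ is then shattered by $\dirBs_{G'}$, and the bound follows.

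For the inclusion I would use the key estimate that any \emph{simple} $s\to v$ path $\pi$ whose largest edge weight equals $w_j$ has $w'$-length in the range $[\,n^j,(n-1)n^j\,]\subseteq[\,n^j,n^{j+1})$: $\pi$ contains an edge of $w'$-weight $n^j$, has at most $n-1$ edges, and no edge of $w'$-weight exceeding $n^j$. Since $w'$ is positive, $w'$-shortest paths may be taken simple, so for all $s,v\in V$ and every index $j$ one gets the equivalence ``there is an $s\to v$ path of bottleneck length at most $w_j$'' $\iff$ ``$\dist_{G'}(s,v)\le(n-1)n^j$'': the forward direction feeds such a path into the upper estimate, while the converse uses that any path of bottleneck length at least $w_{j+1}$ has $w'$-length at least $n^{j+1}>(n-1)n^j$. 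Hence, given $s$ and a radius $\rho$: if no edge weight is at most $\rho$, then $\dirB^\beta_G(s,\rho)=\{s\}=\dirB_{G'}(s,0)$; otherwise, letting $w_j$ be the largest edge weight with $w_j\le\rho$, we obtain $\dirB^\beta_G(s,\rho)=\dirB_{G'}\!\big(s,(n-1)n^j\big)$. In either case $\dirB^\beta_G(s,\rho)\in\dirBs_{G'}$, which is the desired inclusion.

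The argument is entirely elementary, and I expect the only delicate points to be bookkeeping: justifying that it suffices to consider simple (hence at most $(n-1)$-edge) paths, which is precisely what lets the powers $n^j$ separate consecutive bottleneck levels; checking that the radius $(n-1)n^j$ lies strictly below $n^{j+1}$ so that both directions of the equivalence align; and handling the two degenerate cases ($v=s$, and $\rho$ below every edge weight, up to the usual convention for the bottleneck length of the empty path). None of these requires work beyond the estimate above, so the lemma follows.
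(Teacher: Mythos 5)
Your proposal is correct and follows essentially the same route as the paper: both reweight each edge to a power of $n$ determined by its rank in the weight order, observe that every bottleneck ball of $G$ then coincides with an ordinary shortest-path ball of the reweighted ($K_h$-minor-free) graph, and invoke \Cref{cor:vcdim} together with monotonicity of VC dimension under taking sub-collections. The only differences are cosmetic (you use the radius $(n-1)n^j$ where the paper uses $n^{j+1}-1$, and you index by distinct weight values rather than by edge counts), so nothing further is needed.
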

\begin{proof}
    Consider a graph $G'$ obtained from $G$ by changing its edge weights:
    the weight of $e\in E$ is given by $n^j$ in $G'$ if there are $j$ edges in $G$ less than or equal $w(e)$.

    We now prove that the set system
    $\mathcal{Q}=\left\{\dirB_{G'}(s,n^{i+1}-1):s\in V, i\in [m]\right\}$ contains $\dirBs^\beta$.
    Let ${\dirB^\beta(s,\rho)\in \dirBs^\beta}$.
    Note that all bottleneck lengths come from
    the set of edge weights of $E$, so
    in fact $\dirB^\beta(s,\rho)=\dirB^\beta_G(s,w(e))$, for some $e\in E$.
    Suppose there are precisely $j$ edges of weight at most~$w(e)$ in $G$.
    We now argue that $\dirB^\beta(s,w(e))=\dirB_{G'}\left(s,n^{j+1}-1\right)$.
    Indeed, if for some $v\in V$, $\beta_G(s,v)\leq w(e)$, then there exists a simple path $\pi=s\to v$ whose all weights are at most $w(e)$. So the (standard) weight of $\pi$ in~$G'$
    is at most $(n-1)\cdot n^j=n^{j+1}-n\leq n^{j+1}-1$.
    This proves that $v\in \dirB_{G'}(s,n^{j+1}-1)$.
    Now, if there exists a simple path $\pi'=s\to v$ in $G'$ of weight at most $n^{j+1}-1$, then every individual edge length in $\pi'$ is at most
    $n^j$, since all edge weights in $G'$ are powers of $n$.
    So all the edges of $\pi'$ have weights no more than $w(e)$ in~$G$, which proves $v\in \dirB^\beta(s,w(e))$.

    Since $\mathcal{Q}\subseteq \dirBs_G$, we also have $\dirBs^\beta\subseteq\dirBs_G$. As $\dirBs_G$ has VC dimension at most $h-1$ by~\Cref{cor:vcdim}, so does $\dirBs^\beta$.
\end{proof}
We are now ready to prove~\Cref{l:reach-patterns}.
\begin{proof}[Proof of~\Cref{l:reach-patterns}]
Without loss of generality, suppose that all $m$ edges of $G=(V,E)$ are eventually deleted. 
Let $G_0,G_1,\ldots,G_m$ be the subsequent versions
 of the graph $G$ given in reverse order, so that
 $E(G_0)=\emptyset$ and $|E(G_m)|=m$.
For each edge $e\in E$, let $w(e)$ denote the smallest index $i$ such that $G_{i}$ contains
the edge $e$. Clearly, we have $e\in G_{j}$ if and only if $j\geq w(e)$. 

Let $G'$ be the (initial) graph $G$ with weights given by $w(\cdot)$. We now prove that for any $s,t\in V$, and $k\in \{0,\ldots,m\}$, a path
$s\to t$ exists in $G_k$ if and only if
$\beta_{G'}(s,t)\leq k$. Indeed, if $\pi$ is
an $s\to t$ path in $G_k$, then each edge $e$ of $\pi$ satisfies $w(e)\leq k$. It follows that $\pi$ has bottleneck length at most $k$ in $G'$.
Conversely, if $\pi'$ is an $s\to t$ path in $G'$
of bottleneck length $k$, then all edges of $\pi'$
have we weight at most $k$ in $G'$, so they are
all contained in $G_k$.

Now, fix some piece $P\in \rdiv$, and suppose a pattern $R_{P,u}$ appears in the decremental reachability data structure.
This means that for some $j$, $R_{P,u}$ is the
set of vertices from $\bnd{P}$ reachable
from $u$ in $G_j$.
Equivalently, for $b\in\bnd{P}$ we have
$b\in R_{P,u}$ if and only if
$\beta_{G'}(u,b)\leq j$.
This means that $R_{P,u}=\dirB^\beta_{G'}(u,j)\cap \bnd{P}$.
So $R_{P,u}$ belongs to the restriction
of the set system $\dirBs^\beta_{G'}$
to $\bnd{P}$. By~\Cref{l:bottleneck-vc} and~\Cref{cor:restrict}, the size
of this set system is $O(|\bnd{P}|^{h-1})=O(r^{(h-1)/2})$. The lemma follows.
\end{proof}

\subsection{A deterministic data structure}
Note that the only source of randomness so far was the decremental SSR data structure of~\Cref{thm:dec-ssr}. So efficient \emph{deterministic} data structures for decremental SSR have not been described yet, even for minor-free graphs.
However, in our application, many instances
of decremental SSR are related. We will exploit that.

We use the following result is proved in~\cite[Section~3.1]{Karczmarz18}.\footnote{There, the desired graph $G^*$ is denoted $\mathcal{R}^+(G)$. \cite[Lemma~3.2]{Karczmarz18} proves the desired properties of that graph. That this graph is maintained deterministically in time $\Ot(n^{3/2})$ is proved in~\cite[Lemma~3.6]{Karczmarz18} and the discussion afterward.}
\begin{theorem}\label{thm:dec-emulator}
    Let $G=(V,E)$ be a $K_h$-minor-free digraph (where $h=O(1)$) that undergoes edge deletions. There exists a \emph{deterministic} data structure maintaining a related subgraph $G^*$ on $V$ that is also decremental, has $\Ot(n)$ edges,
    and satisfies the following at all times:
    \begin{itemize}
        \item for all $s,t\in V$, $s$ can reach $t$ in $G$ if and only if $s$ can reach $t$ in $G^*$,
        \item for all $s,t\in V$ such that $s$ can reach $t$ in $G^*$, $\dist_{G^*}(s,t)=O(\log{n})$.
    \end{itemize}
    The total update time of the data structure is $\Ot\left(n\sqrt{n}\right)$.
\end{theorem}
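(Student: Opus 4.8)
\emph{Proof plan.} This statement is from~\cite{Karczmarz18}; I sketch the construction underlying it. The plan is to first fix, once and for all before any deletion, a recursive separator decomposition of $G$: a rooted tree $\mathcal{T}$ whose root is the region $V$, and in which a region $R$ with $|R|$ above a fixed constant is split by a balanced separator $S(R)\subseteq R$ of size $O(\sqrt{|R|})$ --- such a separator exists and is computable in near-linear time for $K_h$-minor-free graphs by~\cite{KawarabayashiR10} --- into child regions formed by grouping the vertex sets of the connected components of $R\setminus S(R)$ so that each child has at most $\tfrac23|R|$ vertices. This tree has depth $O(\log n)$, and, crucially, a balanced separator of $G$ is still a balanced separator of every subgraph of $G$; hence $\mathcal{T}$ remains valid under all edge deletions and is computed only once.

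Next I would define $G^*$ to be $G$ itself --- which has only $O(n)$ edges, since $G$ is $K_h$-minor-free --- augmented, for every region $R$ of $\mathcal{T}$ and every ordered pair $x,y\in S(R)$ such that $x$ reaches $y$ in the current version of the subgraph induced by $R$, with a shortcut edge $x\to y$. Reachability is clearly preserved, since $G^*\supseteq G$ while every added edge certifies a genuine path of $G$. The number of edges is $O(n)+\sum_{R}|S(R)|^2=O(n)+\sum_{R}O(|R|)=\Ot(n)$, using that the regions at each of the $O(\log n)$ levels of $\mathcal{T}$ partition $V$. For the diameter bound I would prove, by induction on $\mathcal{T}$, that the subgraph of $G^*$ induced by a region $R$ connects any two of its vertices joined by a path of $G$ inside $R$ by a path of $G^*$ of length $O(\log|R|)$ inside $R$: given such a path $p$, either $p$ avoids $S(R)$, so it lies in a single child of $R$ and induction applies, or $p$ first touches $S(R)$ at a vertex $x$ and last touches it at a vertex $x'$, the portion of $p$ before the edge entering $x$ lies in one child, the portion after $x'$ lies in one child, and $x\to x'$ is a single shortcut edge; this yields $O(\log|R|)+O(1)+O(\log|R|)=O(\log|R|)$ hops. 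Applying this with $R=V$ gives the claimed $O(\log n)$ diameter.

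Finally, to maintain the shortcut edges under deletions, I would keep for each region $R$ and each $x\in S(R)$ a decremental single-source reachability structure rooted at $x$ on the current version of $R$; the shortcut $x\to y$ is present precisely while $y$ is reported reachable from $x$, so shortcuts only ever vanish and $G^*$ is truly decremental. Plugging in~\Cref{thm:dec-ssr}, the total cost is $\sum_{R}O(|S(R)|)\cdot\Ot(|R|)=\sum_R\Ot(|R|^{3/2})=\Ot(n^{3/2})$, summing the geometrically decreasing per-level contributions $\Ot(2^{i}\cdot(n/2^{i})^{3/2})$. The one remaining issue is that~\Cref{thm:dec-ssr} is randomized, and the hard part is to recover the same bound deterministically; this is done in~\cite{Karczmarz18} by processing the regions of $\mathcal{T}$ bottom-up and running each single-source computation not on $R$ directly but on a sparse, low-diameter surrogate assembled from the already-maintained emulators of $R$'s children (together with a recursive shortcutting of $S(R)$ itself), so that the simple deterministic Even--Shiloach-style decremental reachability of~\cite{EvenS81, HenzingerK95} suffices and runs in near-linear time. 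I expect this deterministic low-diameter bootstrapping within each region to be the main obstacle; all the other steps are routine separator-tree bookkeeping.
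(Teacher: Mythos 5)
The paper itself gives no proof of this theorem: it is imported wholesale from \cite{Karczmarz18} (the footnote points to the graph $\mathcal{R}^+(G)$ there and to Lemmas~3.2 and~3.6 of that paper). So the only question is whether your reconstruction is sound, and unfortunately it is not: the $O(\log n)$ diameter claim fails for the emulator you build. Your $G^*$ consists of $G$ plus shortcuts only between pairs of vertices of the \emph{same} separator $S(R)$. Your inductive step bounds the hop count inside a region $R$ by (hops inside one child) $+\,O(1)\,+$ (hops inside another child), i.e.\ it yields the recurrence $f(|R|)\le 2f\bigl(\tfrac{2}{3}|R|\bigr)+O(1)$, whose solution is $\Theta\bigl(m^{\log_{3/2}2}\bigr)$ --- polynomial, not logarithmic. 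The step ``$O(\log|R|)+O(1)+O(\log|R|)=O(\log|R|)$'' silently treats the two recursive contributions as if they were already at the level of the conclusion, which they are not: each level of the recursion doubles the number of surviving path segments. A concrete counterexample: let $G$ be the directed path $v_1\to v_2\to\cdots\to v_n$ and let every separator be the single middle vertex of its region (a perfectly legitimate balanced separator of a path). Then no nontrivial shortcut is ever added, $G^*=G$, and $\dist_{G^*}(v_1,v_n)=n-1$.

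What is missing is a mechanism allowing an \emph{arbitrary} vertex of a region to jump to the relevant separator/boundary vertices in $O(1)$ hops, rather than having to walk to them through all lower levels of the decomposition; the construction of \cite{Karczmarz18} supplies precisely such additional covering edges between non-separator vertices and boundary vertices (maintained decrementally), and it is the density of those extra edges that makes the $\Ot(n)$ edge bound and the $\Ot(n^{3/2})$ total update time the genuinely delicate parts of the argument. The portions of your sketch that do work --- the fixed separator decomposition remaining valid under deletions, the reachability equivalence, the $\sum_R|S(R)|^2=\Ot(n)$ count for the separator-pair shortcuts, and the level-by-level $\sum_R\Ot(|R|^{3/2})=\Ot(n^{3/2})$ accounting --- are correct but do not touch the second bullet of the theorem, which is the property the rest of \Cref{cor:det-dec-ssr} crucially relies on.
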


We will also need the following classical result, the so-called \emph{Even-Shiloach tree}.
\begin{theorem}\label{thm:es-tree}{\upshape\cite{EvenS81, HenzingerK95}}
Let $G=(V, E)$ be a digraph with a fixed source $s\in V$. Then, the set $\{v\in V:\dist_G(s,v)\leq d\}$ can be maintained subject to edge deletions in $G$ in $O(d(n+m))$ total update time.
\end{theorem}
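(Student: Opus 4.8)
The plan is to implement the classical Even--Shiloach BFS-tree maintenance, adapted to the directed single-source setting with a depth cutoff. For every vertex $v$ I would keep a label $\ell(v)$ intended to equal $\min\{\dist_G(s,v),\,d+1\}$, where the value $d+1$ is treated as ``$\infty$'' and marks $v$ as outside the maintained set; thus $\{v:\dist_G(s,v)\le d\}$ is read off as $\{v:\ell(v)\le d\}$. Alongside the labels I would store, for each $v$, its incoming and outgoing adjacency lists, a scan pointer $\mathrm{ptr}(v)$ into the incoming list, and a work queue realized as an array of $d+2$ buckets indexed by label value, so that insertion, and extraction of a minimum-key element via a forward-moving pointer, cost $O(1)$. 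Initialization is a single BFS from $s$ truncated at depth $d$, costing $O(n+m)$.

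On deletion of an edge $(u,v)$ I would remove it from the two adjacency lists and, if it was a \emph{support} edge for $v$ (meaning $\mathrm{ptr}(v)$ pointed at it and $\ell(u)=\ell(v)-1$), insert $v$ into the queue with key $\ell(v)$. The queue is then drained in nondecreasing order of key: when $v$ with current label $k$ is extracted, I scan $v$'s incoming list forward from $\mathrm{ptr}(v)$, skipping edges $(w,v)$ with $\ell(w)\neq k-1$, until either a new support edge with $\ell(w)=k-1$ is found (then $v$ is settled for now and $\mathrm{ptr}(v)$ records it) or the list ends. In the latter case $v$ must rise: set $\ell(v)\leftarrow\ell(v)+1$ (capping at $d+1$), reset $\mathrm{ptr}(v)$ to the start of the incoming list, re-insert $v$ into the queue with key $\ell(v)$, and for every out-neighbour $w$ of $v$ with $\ell(w)=\ell(v)$ after the increment, insert $w$ into the queue with key $\ell(w)$, since such a $w$ may have just lost its only support.

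For correctness I would prove, by induction over the deletions and then over the draining of the queue, that once the queue empties after a deletion one has $\ell(v)=\min\{\dist_G(s,v),d+1\}$ for all $v$. Two facts carry the argument: first, since distances only increase, if $\dist_G(s,v)$ strictly increased then at some moment $v$ must have lost every incoming edge from the level directly below it, and the deletion handler together with the out-neighbour notification step guarantees $v$ is enqueued whenever this can happen; second, draining the queue in nondecreasing order of key while raising labels one unit at a time never overshoots, because we increment $\ell(v)$ only after verifying $v$ has no incoming neighbour at level $\ell(v)-1$, which certifies $\dist_G(s,v)\ge\ell(v)$, and we re-enqueue $v$ so it can keep rising if needed.

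The running-time bound is the usual amortized charging, and this is the part needing the most care. Each label $\ell(v)$ is monotonically nondecreasing and lives in $\{0,\dots,d+1\}$, so it takes at most $d+2$ distinct values. Within any fixed value of $\ell(v)$ the pointer $\mathrm{ptr}(v)$ only moves forward — edges skipped while hunting for a support at level $\ell(v)-1$ stay invalid for that value since labels never decrease — and $\mathrm{ptr}(v)$ is reset only when $\ell(v)$ increases; hence the total work scanning $v$'s incoming list over the whole sequence is $O\big((d+2)\cdot(\deg^{\mathrm{in}}(v)+1)\big)$. Likewise $v$ notifies its out-neighbours only when $\ell(v)$ increases, contributing $O\big((d+2)\cdot\deg^{\mathrm{out}}(v)\big)$, and the number of queue operations is dominated by these two quantities plus $O(1)$ per edge deletion. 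Summing over $v$ and adding $O(n+m)$ for initialization gives total update time $O\big(n+m+d\,\textstyle\sum_v(\deg^{\mathrm{in}}(v)+\deg^{\mathrm{out}}(v))\big)=O\big(d(n+m)\big)$. The main obstacle is exactly this pointer bookkeeping: one must argue that a vertex's incoming list is rescanned only $O(d)$ times in total rather than once per queue insertion, which is precisely what the monotone-forward-then-reset discipline for $\mathrm{ptr}(v)$ buys.
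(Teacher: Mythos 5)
The paper does not prove this statement; it is quoted as a known result of Even--Shiloach and Henzinger--King, so there is no in-paper argument to compare against. Your sketch is a correct rendition of the classical Even--Shiloach construction with a depth cutoff, and your amortized charging (at most $d+2$ label values per vertex, with a forward-only scan pointer that is reset only when the label increases) is exactly the standard argument establishing the $O(d(n+m))$ bound.
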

\begin{corollary}\label{cor:det-dec-ssr}
Let $G=(V,E)$ be an $n$-vertex $K_h$-minor-free digraph and suppose a graph $G^*$ of~\Cref{thm:dec-emulator} is maintained for $G$. Then, given some $S\subseteq V$, one can set up a deterministic data structure maintaining vertices reachable from $S$ in $G$ whose total update time is $\Ot(n)$.
\end{corollary}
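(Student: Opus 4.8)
The plan is to reduce decremental single-source reachability from a vertex set $S$ in $G$ to maintaining a shallow Even--Shiloach tree in the low-diameter emulator $G^*$. Concretely, I would form an auxiliary graph $G^*_S$ obtained from the maintained emulator $G^*$ by adding one fresh super-source vertex $s^*$ together with unit-weight edges $s^*b$ for every $b\in S$. Since the emulator's two guarantees from \Cref{thm:dec-emulator} hold at all times, the following equivalence holds throughout the sequence of deletions: a vertex $v\in V$ is reachable from $S$ in $G$ if and only if some $b\in S$ reaches $v$ in $G$, if and only if some $b\in S$ reaches $v$ in $G^*$, if and only if $\dist_{G^*_S}(s^*,v)\le 1+c\log n$ for the constant $c$ hidden in the $O(\log n)$ diameter bound of \Cref{thm:dec-emulator} (the ``$\le$'' direction uses that whenever $v$ is reachable the emulator distance is $O(\log n)$, and the ``$\ge$'' direction is trivial). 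Hence the reachable-from-$S$ set equals $\{v\in V:\dist_{G^*_S}(s^*,v)\le d\}$ for $d:=1+c\log n=O(\log n)$.

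Next I would maintain exactly this set using the Even--Shiloach tree of \Cref{thm:es-tree} rooted at $s^*$ in $G^*_S$ with depth parameter $d$. The graph $G^*_S$ is decremental: the only edges that ever disappear are those inherited from $G^*$ (the super-source edges $s^*b$ are never deleted), and by \Cref{thm:dec-emulator} the emulator $G^*$ itself only ever undergoes edge deletions. So whenever the maintained emulator reports that it deletes an edge of $G^*$, I forward that single deletion to the Even--Shiloach tree; deleting edges can only increase distances, which is precisely the operation \Cref{thm:es-tree} supports. Reachability queries ``is $v$ reachable from $S$ in $G$?'' are then answered in $O(1)$ time by testing membership of $v$ in the maintained set.

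For the running time, note that by \Cref{thm:dec-emulator} the emulator $G^*$ has $\Ot(n)$ edges at all times, so $G^*_S$ has $n+1$ vertices and $\Ot(n)+|S|=\Ot(n)$ edges. The Even--Shiloach tree of \Cref{thm:es-tree} with depth $d=O(\log n)$ therefore runs in total update time $O\!\left(d\cdot(n+\Ot(n))\right)=\Ot(n)$, and the per-deletion forwarding overhead is $O(1)$ amortized into this cost. The cost of maintaining $G^*$ itself is \emph{not} charged here, since the statement assumes $G^*$ is already being maintained. This yields the claimed $\Ot(n)$ total update time and completes the construction.

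\textbf{Main obstacle.} There is essentially no deep obstacle; the only point requiring (easy) care is verifying that the distance-in-$G^*_S$ characterization of reachability is valid \emph{at every intermediate moment} of the deletion sequence, which is exactly what the ``at all times'' phrasing of \Cref{thm:dec-emulator} delivers, and that the emulator is genuinely decremental so that a plain Even--Shiloach tree suffices (no insertions need to be handled).
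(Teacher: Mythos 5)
Your proposal is correct and matches the paper's proof essentially verbatim: both augment the maintained emulator $G^*$ with a super-source connected to $S$, observe that reachability from $S$ in $G$ corresponds to distance $O(\log n)$ from the super-source in $G^*_S$, and maintain this with an Even--Shiloach tree of depth $\Theta(\log n)$ on the $\Ot(n)$-edge decremental graph, yielding $\Ot(n)$ total update time. Your added care about the "at all times" guarantee and the decremental nature of $G^*$ is a correct elaboration of what the paper leaves implicit.
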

\begin{proof}
    It is enough to construct the data structure of~\Cref{thm:es-tree} with $d=\Theta(\log{n})$ for a graph $G^*_S$ obtained from $G^*$ by augmenting it with a super-source $s$ with outgoing edges to the vertices of $S$. Note that the size of $G^*_S$
    is $\Ot(n)$, and thus the total update time
    of the data structure is $\Ot(n)$. All the edge deletions that $G^*$ undergoes are passed to the data structure maintaining $G^*_S$.
    Clearly, every vertex reachable from $S$ in $G$
    is reachable from $s$  in $G^*_S$ via a path of length $O(\log{n})$.
\end{proof}
To obtain a deterministic decremental all-pairs reachability data structure, we apply~\Cref{thm:dec-emulator} to the (reverse) input graph $G$, and also separately to all the
individual pieces $P\in\rdiv$. Note that the total initialization and update cost incurred is $\Ot(n^{3/2}+(n/r)r^{3/2})=\Ot(n^{3/2})$.
Observe that this allows us to replace every instance
of the randomized decremental SSR data structure of~\Cref{thm:dec-ssr} with a respective data structure of~\Cref{cor:det-dec-ssr}. This
comes at no asymptotic cost wrt. $\Ot$ notation.
We have thus proved:
\decreachability*

\subsection{The bottleneck distance oracle}\label{sec:bottleneck}
Finally, we remark that \Cref{thm:dec-reachability} almost immediately implies a bottleneck distance oracle with the same preprocessing time.
\begin{lemma}\label{l:bottleneck-oracle}
    Let $G$ be a weighted $K_h$-minor-free directed graph. An $O(n^{2-1/h})$-space bottleneck distance oracle for $G$ with $O(\log{n})$ query time can be constructed deterministically in $\Ot(n^{2-1/h})$ time.
\end{lemma}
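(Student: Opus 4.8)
The plan is to assemble a \emph{static} oracle directly, following the template of~\Cref{sec:template} with the bottleneck-ball system $\dirBs^\beta$ of~\Cref{l:bottleneck-vc} playing the role of $\dirBs_G$; the dynamic machinery of~\Cref{thm:dec-reachability} is not actually run, only its VC analysis is reused. Fix $r=n^{2/h}$ and compute an $r$-division $\rdiv$. For the ``global'' data I would compute, for every $b\in\bnd{\rdiv}$ and every $u\in V$, the bottleneck distance $\beta_G(u,b)$, by a single bottleneck shortest-path computation (a max-relaxing Dijkstra) from $b$ in the reverse graph $G^R$; as $G^R$ is $K_h$-minor-free and hence has $O(n)$ edges, this costs $\Ot(n)$ per source and $\Ot(n^2/\sqrt r)=\Ot(n^{2-1/h})$ in total. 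For every piece $P$ I would also precompute the intra-piece bottleneck distances $\beta_P(u,v)$ for all $u,v\in V(P)$ (one such computation per vertex of $P$), for $\Ot(r^2)$ time per piece, i.e.\ $\Ot(nr)=\Ot(n^{2-1/h})$ overall (assuming $h\ge3$; smaller $h$ is trivial). A query $(u,t)$ with $t\in\bnd{\rdiv}$ is answered from the stored table; otherwise $t$ lies in a unique piece $P$ and the answer is $\min(\beta_P(u,t),\gamma)$, where the first term is included only if $u\in V(P)$ and $\gamma$ is obtained as follows.

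Order $\bnd P=\{b_1,\dots,b_k\}$ so that $\beta_G(u,b_1)\le\dots\le\beta_G(u,b_k)$, let $\rho_1<\dots<\rho_p$ be the distinct values among these, put $X_q=\{b\in\bnd P:\beta_G(u,b)\le\rho_q\}=\dirB^\beta_G(u,\rho_q)\cap\bnd P$, and write $\beta^{\mathrm{reach}}_P(X,t)=\min_{b\in X}\beta_P(b,t)$. A routine path-decomposition argument (any $u\to t$ walk that meets $\bnd P$ has a last such vertex $b$, beyond which it stays inside $P$), analogous to the proof of~\eqref{eq:distance-query}, yields
\[ \gamma\;=\;\min_{b\in\bnd P}\max\!\bigl(\beta_G(u,b),\beta_P(b,t)\bigr)\;=\;\min_{1\le q\le p}\max\!\bigl(\rho_q,\;\beta^{\mathrm{reach}}_P(X_q,t)\bigr). \]
Because $\rho_q$ is strictly increasing and $\beta^{\mathrm{reach}}_P(X_q,t)$ non-increasing in $q$, the quantity being minimized is quasi-convex in $q$, so its minimizer is found by binary search over $q\in\{1,\dots,p\}$ in $O(\log p)=O(\log n)$ probes, each probe needing only $\rho_q$ and $\beta^{\mathrm{reach}}_P(X_q,t)$.

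The remaining work is to support those probes in $O(1)$ time with $\Ot(n^{2-1/h})$ space and preprocessing, and here the VC bound is essential. By~\Cref{l:bottleneck-vc} and~\Cref{cor:restrict}, for a fixed $P$ the restriction of $\dirBs^\beta_G$ to $\bnd P$ has $O(|\bnd P|^{h-1})=O(r^{(h-1)/2})$ members, so over all sources $u$ the sets $X_q$ realize at most that many distinct ``patterns''. For each $P$ I would set up a dynamic strings structure $\tcoll_P$ (\Cref{t:dyn-string}) on $\{0,1\}$-characteristic vectors of subsets of $\bnd P$, inserting once the all-zeros vector. Then, for each pair $(u,P)$ (there are $O(n^2/r)$), I would sort $\bnd P$ by $\beta_G(u,\cdot)$ in $\Ot(\sqrt r)$ time and grow the chain $X_0\subsetneq X_1\subsetneq\dots\subsetneq X_p$ from the all-zeros vector by one single-coordinate update per boundary vertex, reading off each identifier $\operatorname{id}(X_q)$ in $\Ot(1)$ time; this is $\Ot(\sqrt r)$ per pair, $\Ot(n^2/\sqrt r)$ in total, and keeps the dynamic-strings identifiers to $O(\polylog n)$ bits. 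The first time an identifier is seen for $P$, I compute the array $\bigl(\beta^{\mathrm{reach}}_P(X_q,v)\bigr)_{v\in V(P)}$ by one bottleneck shortest-path run in $P$ from a super-source joined to $X_q$ ($\Ot(r)$ time), store it keyed by the identifier, and never recompute it --- costing $\Ot\!\bigl((n/r)\,r^{(h-1)/2}\,r\bigr)=\Ot(nr^{(h-1)/2})=\Ot(n^{2-1/h})$. For each $(u,P)$ I store the length-$O(\sqrt r)$ list $\bigl(\rho_q,\operatorname{id}(X_q)\bigr)_{q=1}^{p}$, using $\Ot(n^2/\sqrt r)$ space. A query for $(u,t)$ then reads this list for $(u,P)$, and for each probed $q$ fetches $\rho_q$ directly and $\beta^{\mathrm{reach}}_P(X_q,t)$ through the per-piece identifier dictionary in $O(1)$; the binary search thus runs in $O(\log n)$. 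Every subroutine is deterministic, so no derandomization is needed.

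The step I expect to be the main obstacle is the pattern bookkeeping: the $O(\sqrt r)$ patterns attached to a pair $(u,P)$ must never be stored explicitly --- their total explicit size across all pairs is $\Theta(n^2)$ --- so everything has to be funnelled through the short identifiers, and one must check that the number of distinct patterns \emph{per piece} is bounded by $\dirBs^\beta$ via~\Cref{l:bottleneck-vc} (rather than by any source-dependent chain count), which is exactly what keeps both the dictionary of $\beta^{\mathrm{reach}}_P$ arrays and the dynamic-strings workload within the $\Ot(n^{2-1/h})$ budget. The identity for $\gamma$ and the quasi-convexity that enables the binary search are routine.
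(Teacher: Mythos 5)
Your construction is correct, but it takes a genuinely different route from the paper's. The paper obtains the bottleneck oracle as a corollary of the decremental reachability data structure of~\Cref{thm:dec-reachability}: it deletes the edges in order of decreasing weight, observes (as in~\Cref{l:reach-patterns}) that $\beta_G(s,t)=w(e_k)$ for the smallest $k$ such that $s$ reaches $t$ in $G_k$, and answers queries by binary searching over deletion timestamps, using timestamped pointers to the per-pattern SSR structures to avoid a generic-persistence $\log\log n$ overhead. You instead build a static oracle from scratch: max-relaxing Dijkstras give the global distances $\beta_G(\cdot,b)$ and the per-pattern arrays $\beta^{\mathrm{reach}}_P(X,\cdot)$, the identity $\beta_G(u,t)=\min_{b\in\bnd P}\max(\beta_G(u,b),\beta_P(b,t))=\min_q\max(\rho_q,\beta^{\mathrm{reach}}_P(X_q,t))$ is verified by the standard last-boundary-vertex decomposition, and the unimodality of $q\mapsto\max(\rho_q,\beta^{\mathrm{reach}}_P(X_q,t))$ (increasing $\max$ non-increasing) supports the $O(\log n)$ binary search. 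Both arguments hinge on exactly the same combinatorial fact — the restriction of $\dirBs^\beta_{G}$ to $\bnd P$ has $O(r^{(h-1)/2})$ members by~\Cref{l:bottleneck-vc} and~\Cref{cor:restrict} — and both use dynamic strings to address patterns by short identifiers. Your version is more self-contained (no decremental SSR machinery, no persistence, trivially deterministic) and makes the query logic explicit; the paper's is shorter given that the decremental structure has already been built and analyzed. The bookkeeping details you flag (never materializing the $\Theta(\sqrt r)$-size patterns per pair, the $h\ge 3$ caveat for the $O(nr)$ intra-piece tables) are handled correctly.
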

\begin{proof}[Proof sketch.]
    Set up a data structure of~\Cref{thm:dec-reachability} on the (unweighted) graph $G$.
    Let $E=\{e_1,\ldots,e_m\}$ so that
    $w(e_1)\leq \ldots \leq w(e_m)$.
    Remove all edges of $G$ from the data structure one by one in the order
    of decreasing weights: $e_{m},e_{m-1},\ldots,e_1$.
    Let $G_m,G_{m-1},\ldots,G_0$ be the subsequent obtained versions of $G$.
    Analogously as in~\Cref{l:reach-patterns}, one
    can easily argue that if $k$ is the smallest index
    such that a path $s\to t$ exists in $G_k$,
    then $\beta_G(s,t)=w(e_k)$.
    Given a query $(s,t)$, one could thus
    binary search for the value $k$ and use
    that for computing the bottleneck distance.
    Since the data structure of~\Cref{thm:dec-reachability} has constant query time, this would imply an $O(\log{n}\cdot \log\log{n})$ query time bound if we made the data structure persistent in a black-box way, e.g., using~\cite{Dietz89a}.
    
    That being said, the $\log\log{n}$ persistence overhead can be avoided rather easily. It is enough to augment the decremental data structure as follows.
    First, for each pair $(P,s)$, store timestamped
    pointers to the data structures $\dssr_{P,R_{P,s}}$ for all the versions 
    of the set $R_{P,s}$. Second, for each data structure $\dssr_{P,X}$, store the timestamps of when individual vertices of $P$ stopped being reachable from $X$.
    Using this additional data, given a query $(s,t)$ such that $t\in V(P)$, one can perform the binary search on the list of pointers for $(P,s)$, so that decisions are made in $O(1)$ time based on the timestamps indicating when $t$ became unreachable in the respective data structures $\dssr_{P,R_{P,s}}$. By proceeding similarly with the intra-piece decremental SSR data structures, we can find intra-piece bottleneck distances in $O(1)$ time.

    The space usage is trivially bounded by the construction time, i.e., $\Ot(n^{2-1/h})$, but it is easy to see that the amount of data needed for performing the queries is actually $O(n^{2-1/h})$.
\end{proof}

\section{Approximate distance oracles}\label{sec:approx}

In this section, we show more efficient and conceptually simpler (compared to the exact oracles of Sections~\ref{sec:unweighted}~and~\ref{sec:weighted}) $(1+\eps)$-\emph{approximate} distance oracles for non-negatively weighted minor-free graphs with polylogarithmic query time. 
We consider two variants of the problem: (1) the bounded aspect ratio case when the edge weights come from the set $\{0\}\cup [1,W]$, and (2) the general case with arbitrary real non-negative weights.
We will assume that the accuracy parameter satisfies $\eps\in (0,1)$ and $1/\eps=\poly{n}$.

\subsection{Bounded aspect ratio}\label{sec:bounded-ratio}
Let $\rdiv$ be an $r$-division of $G$. Similarly as we did in~\Cref{sec:unweighted}, we first compute the
exact distances $\dist_G(u,v)$ for all pairs $(u,v)$
such that either (1) $(u,v)\in V\times\bnd{\rdiv}$,
or (2) $(u,v)\in V(P)$ for some $P\in \rdiv$. Recall that this can be done in $\Ot(n^2/\sqrt{r}+nr)$ time naively.
However, eventually, only the exact distances of the latter type will be actually stored by the data structure; this allows taking into account shortest paths that do not pass through $\bnd{P}$. 

Let $\ell$ be the maximum integer such that $(1+\eps)^\ell\leq 2nW$.
Note that $\ell=O\left(\log_{1+\eps}(nW)\right)=O(\log(nW)/\eps)$.

For each pair $(P,u)\in \rdiv\times V$, and each integer $j=0,\ldots,\ell$, let us define
\[X_{P,u,j}=\{b\in \bnd{P}:\dist_G(u,b)\leq (1+\eps)^j\}=\dirB_G\left(u,(1+\eps)^j\right)\cap \bnd{P}.\]
Moreover, define $X_{P,u,-1}:=\dirB_G(u,0)\cap \bnd{P}$.
By~\Cref{cor:vcdim}~and~\Cref{cor:restrict}, the number of distinct sets of the form $X_{P,u,j}$ for a fixed $P$ is $O\left(|\bnd{P}|^{h-1}\right)=O\left(r^{(h-1)/2}\right)$. The data structure
stores, for each such $X\subseteq\bnd{P}$, the distances $\dist_P(X,v)$ for all $v\in V(P)$.
Note that these distances can be computed in $\Ot(r)$ time and stored in $O(r)$ space per subset
$X$. As a result, through all pieces $P$ and sets $X$, the space cost incurred is $O(nr^{(h-1)/2})$.

Additionally, for each $(P,u)\in \rdiv\times V$, the data structure stores the $\ell+2=O\left(\log(nW)/\eps\right)$ pointers to the sets $X_{P,u,-1},X_{P,u,0},\ldots,X_{P,u,\ell}$ as defined above, so that the distances
$\dist_P(X_{P,u,j},v)$ can be accessed in constant time. Note that this costs $O((n^2/r)\cdot \log(nW)/\eps)$ space.

Summing up, the data structure uses
$O((n^2/r)\log(nW)/\eps+nr^{(h-1)/2})$ space.

\paragraph{Answering queries.} Recall that the (exact) answers for queries involving two vertices of a single piece are stored in the data structure explicitly. Let us thus focus on queries $(u,v)$ such
that $v\in V(P)$ and $u\notin V(P)$ for some $P\in\rdiv$. To answer such approximate distance queries we simply output the estimate:\[d'(u,v):=\min\left(\dist_P(X_{P,u,-1},v),\min_{j=0}^\ell\left\{(1+\eps)^j+\dist_P(X_{P,u,j},v)\right\}\right).\]
The above
can be evaluated in $O(\ell)=O(\log(nW)/\eps)$ time.
We now prove this query procedure correct.
\begin{lemma}\label{l:apx-correct}
    Let $u,v$ be such that $v\in V(P)$ and $u\notin V(P)$ for some piece $P\in\rdiv$. Then, we have:
    \[\dist_G(u,v)\leq \dist'(u,v)\leq (1+\eps)\cdot\dist_G(u,v).\]
\end{lemma}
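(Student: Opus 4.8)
The plan is to prove the two inequalities separately. The lower bound $\dist_G(u,v)\le\dist'(u,v)$ will follow immediately from the triangle inequality applied term by term, while the upper bound $\dist'(u,v)\le(1+\eps)\dist_G(u,v)$ will follow from the standard decomposition of a shortest $u\to v$ path at its last vertex of $\bnd P$, combined with the observation that the geometric grid $(1+\eps)^0,(1+\eps)^1,\dots$ approximates the relevant ``prefix distance'' $\dist_G(u,b^*)$ up to a factor $1+\eps$.

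For the lower bound I would show that \emph{every} term in the minimum defining $\dist'(u,v)$ is at least $\dist_G(u,v)$. Fix $j\in\{-1,0,\dots,\ell\}$ and any $b\in X_{P,u,j}$. By definition $\dist_G(u,b)\le(1+\eps)^j$ (reading $(1+\eps)^{-1}$ as $0$, which is exactly the threshold defining $X_{P,u,-1}$), and $\dist_P(b,v)\ge\dist_G(b,v)$ since $P$ is a subgraph of $G$; hence $(1+\eps)^j+\dist_P(b,v)\ge\dist_G(u,b)+\dist_G(b,v)\ge\dist_G(u,v)$. Taking the minimum over $b\in X_{P,u,j}$ gives $(1+\eps)^j+\dist_P(X_{P,u,j},v)\ge\dist_G(u,v)$, which also covers the degenerate case $X_{P,u,j}=\emptyset$ (the term being $+\infty$). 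The same bound holds for the $j=-1$ term, so the lower bound is done.

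For the upper bound I would first dispose of the case $\dist_G(u,v)=\infty$: then no vertex reachable from $u$ can reach $v$ inside $P$ (else $u$ would reach $v$ in $G$), so every term equals $+\infty$ and the claim is vacuous. Assuming $\dist_G(u,v)<\infty$, fix a shortest $u\to v$ path $\pi$. Since $u\notin V(P)$, $v\in V(P)$, and a vertex of $V(P)$ can only be entered from outside $V(P)$ through a boundary vertex of $P$ (because $P$ is an edge-induced piece, so a non-boundary vertex of $P$ lies in no other piece), $\pi$ meets $\bnd P$; let $b^*$ be its last vertex of $\bnd P$. The same observation shows the suffix of $\pi$ from $b^*$ to $v$ stays inside $V(P)$, hence is a $b^*\to v$ path in $P$, so $\dist_G(u,v)=\dist_G(u,b^*)+\dist_P(b^*,v)$ (this is the same splitting used in Lemma~\ref{lem:piece-ball-union} and Section~\ref{sec:unweighted-oracle}). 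If $\dist_G(u,b^*)=0$, then $b^*\in X_{P,u,-1}$ and $\dist'(u,v)\le\dist_P(X_{P,u,-1},v)\le\dist_P(b^*,v)=\dist_G(u,v)$. Otherwise let $j\ge0$ be minimal with $(1+\eps)^j\ge\dist_G(u,b^*)$; since $1\le\dist_G(u,b^*)\le\dist_G(u,v)<nW$ (a shortest path has at most $n-1$ edges, each of weight at most $W$) and $\eps<1$, minimality gives $(1+\eps)^j<(1+\eps)\dist_G(u,b^*)\le(1+\eps)(n-1)W<2nW$, so $0\le j\le\ell$ and the pointer to $X_{P,u,j}$ is stored. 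As $\dist_G(u,b^*)\le(1+\eps)^j$ we have $b^*\in X_{P,u,j}$, hence $\dist_P(X_{P,u,j},v)\le\dist_P(b^*,v)$ and
\[\dist'(u,v)\le(1+\eps)^j+\dist_P(b^*,v)<(1+\eps)\dist_G(u,b^*)+(1+\eps)\dist_P(b^*,v)=(1+\eps)\dist_G(u,v).\]

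I do not expect a genuine obstacle here; the only mildly delicate points are (i) checking that the chosen index $j$ stays in the range $\{0,\dots,\ell\}$, which is where the bounded aspect ratio and $\eps<1$ are used, and (ii) the boundary-splitting fact that the last vertex of $\bnd P$ on a shortest path cuts it into a prefix in $G$ and a suffix entirely inside $P$ — but that is exactly the argument already invoked elsewhere in the paper, so I would simply reuse it rather than reprove it.
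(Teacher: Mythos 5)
Your proof is correct and takes essentially the same route as the paper's: the lower bound by showing every term of the minimum is at least $\dist_G(u,v)$, and the upper bound by splitting a shortest $u\to v$ path at its last boundary vertex $b^*$ and taking the minimal grid index $j$ with $(1+\eps)^j\geq\dist_G(u,b^*)$. The only cosmetic differences are that you argue the lower bound term by term via the triangle inequality (where the paper uses a max/min manipulation over $X_{P,u,j}$) and that you explicitly verify $j\leq\ell$, a point the paper leaves implicit via $\dist_G(u,b^*)<nW$.
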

\begin{proof}
    Since every $u\to v$ path has to contain some last boundary vertex of $P$, we have:
    \begin{align*}
    \dist'(u,v)&\geq \min_{j=-1}^\ell\left\{\max_{b\in X_{P,u,j}}\{\dist_G(u,b)\}+\min_{b\in X_{P,u,j}}\{\dist_P(b,v)\}\right\}\\
    &\geq \min_{j=-1}^\ell\min_{b\in X_{P,u,j}}\{\dist_G(u,b)+\dist_P(b,v)\}\\
    &\geq \min_{b\in \bnd{P}}\{\dist_G(u,b)+\dist_P(b,v)\}=\dist_G(u,v).
    \end{align*}
    This proves the lower bound on $\dist'(u,v)$ and
    also the upper bound if $\dist_G(u,v)=\infty$.

    Suppose $\dist_G(u,v)<\infty$ and let $\pi$ be the shortest $u\to v$ path in $G$.
    Once again, $\pi$ contains some latest boundary vertex $b^*\in \bnd{P}$, and $\dist_G(u,b^*)<nW$.
    As a result, we have $b^*\in X_{P,u,i}$ for some smallest $i$. Note that by the minimality of $i$,
    if $i\geq 0$, then
    we have $(1+\eps)^i\leq (1+\eps)\cdot \dist_G(u,b^*)$. Hence, we obtain:
    \[ d'(u,v)\leq (1+\eps)^i+\dist_P(X_{P,u,i},v)\leq (1+\eps)\cdot \dist_G(u,b^*)+\dist_P(b^*,v)\leq(1+\eps)\cdot \dist_G(u,v).\]
    The special case when the minimal $i$ equals $-1$,
    i.e. $\dist_G(u,b^*)=0$, is similar.
\end{proof}

\paragraph{Construction time and an optimization.} If one constructs all the sets $X_{P,u,i}$ explicitly and naively,
the construction time is $\Ot((n^2/\sqrt{r})\log(W)/\eps+nr^{(h-1)/2})$. One can remove the $\log(W)/\eps$ dependence
from the first term by using e.g., the dynamic strings data structure (\Cref{t:dyn-string}) for representing these
sets for each piece $P$. Indeed,
note that we have $X_{P,u,0}\subseteq X_{P,u,1}\subseteq \ldots \subseteq X_{P,u,\ell}$, 
so the identifiers of all the \emph{distinct} out of these sets can be obtained by performing $|\bnd{P}|=O(\sqrt{r})$ dynamic strings operations.
The dynamic strings data structure allows establishing the needed pointers for $(P,u)$ in $\Ot(\sqrt{r}+\ell)$ time as well. 

We can make further use of the idea of pruning the repeated sets $X_{P,u,i}$  equal to the preceding set $X_{P,u,i-1}$. Indeed, if one sets
\[I_{P,u}=\{j=0,\ldots,\ell:X_{P,u,j}\neq X_{P,u,j-1}\},\]
then the proof of~\Cref{l:apx-correct} goes through
even with a slightly relaxed estimate formula \[d'(u,v):=\min\left(\dist_P(X_{P,u,-1},v),\min_{j\in I_{P,u}}\left\{(1+\eps)^j+\dist_P(X_{P,u,j},v)\right\}\right).\]
Moreover, $|I_{P,u}|=\min(\ell+1,|\bnd{P}|)$,
so the pointers for the unique sets $X_{P,u,i}$ can be found in $\Ot\left(\min(\sqrt{r},\ell)\right)$ time. This optimization reduces the construction
time to $\Ot(n^2/\sqrt{r}+nr^{(h-1)/2})$
and at the same time the space usage to $O\left(\min\left((n^2/r)\log(nW)/\eps,n^2/\sqrt{r}\right)+nr^{(h-1)/2}\right)$.
By setting $r=n^{2/(h+1)}$, we obtain the following for reasonably small values of $W$.
\begin{theorem}
Let $G$ be a non-negatively weighted $K_h$-minor-free directed graph whose positive weights come from the interval $[1,W]$. Let $\eps\in [1/\poly(n),1]$.
There exists an $O(n^{2-2/(h+1)}\log(nW)/\eps)$-space data structure supporting arbitrary-pair $(1+\eps)$-approximate distance queries in $G$ in $O(\log(nW)/\eps)$ time. The data structure can be constructed in $\Ot\left(n^{2-1/(h+1)}\right)$ time.
\end{theorem}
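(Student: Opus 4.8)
The plan is to instantiate the data structure developed in this subsection with the parameter choice $r=n^{2/(h+1)}$ and check that the three claimed bounds are exactly what the (parameter-agnostic) analysis above produces. Concretely, I would first compute an $r$-division $\rdiv$ of $G$ for this $r$ (which takes $n^{1+o(1)}$ time), then precompute the exact distances $\dist_G(u,v)$ for all $(u,v)\in V\times\bnd{\rdiv}$ and all intra-piece pairs, keeping only the latter in the final data structure. Next, for each piece $P$ and each source $u\notin V(P)$ I would form the nested family $X_{P,u,-1}\subseteq X_{P,u,0}\subseteq\cdots\subseteq X_{P,u,\ell}$ with $\ell=O(\log(nW)/\eps)$; since each $X_{P,u,j}$ is a ball of $\dirBs_G$ restricted to $\bnd{P}$, \Cref{cor:vcdim} together with \Cref{cor:restrict} bounds the number of distinct such sets per piece by $O(|\bnd{P}|^{h-1})=O(r^{(h-1)/2})$, and for each distinct $X$ I would store the $O(r)$ distances $\dist_P(X,v)$, $v\in V(P)$. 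The estimate $d'(u,v)$ and its correctness within a $(1+\eps)$ factor then follow verbatim from \Cref{l:apx-correct}, and each query costs $O(\ell)=O(\log(nW)/\eps)$ time.

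For the resource bounds I would simply substitute $r=n^{2/(h+1)}$ into the quantities already obtained. The per-piece local data $\dist_P(X,\cdot)$ sums to $O(nr^{(h-1)/2})=O(n^{1+(h-1)/(h+1)})=O(n^{2-2/(h+1)})$ over all pieces; the pointers from each $(P,u)$ to its distinct sets $X_{P,u,j}$ cost $O((n^2/r)\log(nW)/\eps)=O(n^{2-2/(h+1)}\log(nW)/\eps)$ space, which dominates (and absorbs the $O(n^{2-2/(h+1)})$ local-data term, since $\log(nW)/\eps\geq 1$). For construction, computing the global distances naively costs $\Ot(n^2/\sqrt{r})=\Ot(n^{2-1/(h+1)})$, while the local data costs $\Ot(nr^{(h-1)/2})=\Ot(n^{2-2/(h+1)})$, so the total is $\Ot(n^{2-1/(h+1)})$, as claimed. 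The query time is unaffected by the choice of $r$.

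The only point that needs care — and the reason the dynamic strings machinery of \Cref{t:dyn-string} is invoked rather than a naive construction — is avoiding an extra $\log(W)/\eps$ factor in the $\Ot(n^2/\sqrt{r})$ construction term. I would handle this exactly as described above: for each piece $P$, feed the nested sets $X_{P,u,j}$ into a dynamic strings structure so that the distinct ones (and the pointers for $(P,u)$) are produced with $O(|\bnd{P}|)=O(\sqrt{r})$ string operations, and prune the indices $j$ with $X_{P,u,j}=X_{P,u,j-1}$, replacing the sum over $j=0,\ldots,\ell$ in $d'$ by a sum over the set $I_{P,u}$ of size $\min(\ell+1,|\bnd{P}|)$; the proof of \Cref{l:apx-correct} goes through unchanged with this relaxed formula. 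With these observations the theorem is immediate: there is no substantive new obstacle beyond the bookkeeping, precisely because the analysis of this subsection was deliberately carried out for a general $r$.
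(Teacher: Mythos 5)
Your proposal matches the paper's proof essentially verbatim: the same $r$-division with $r=n^{2/(h+1)}$, the same nested sets $X_{P,u,j}$ counted via \Cref{cor:vcdim} and \Cref{cor:restrict}, correctness via \Cref{l:apx-correct}, and the same dynamic-strings plus $I_{P,u}$-pruning trick to keep the construction time at $\Ot(n^2/\sqrt{r})$. The substitutions of $r$ into the space and time expressions are all correct, so this is the paper's argument.
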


\subsection{The unbounded case}
Note that the space of the (optimized) bounded aspect ratio data structure could also be bounded by $O\left(n^2/\sqrt{r}+nr^{(h-1)/2}\right)$ regardless of the value of $W$. The preprocessing could be bounded using the same expression, modulo the polylogarithmic factors. In this section, we show how to optimize the query time -- within this space and preprocessing time budget -- to $O(\log(n)/\eps)$, i.e., independent of $W$.

Note that by dividing all the edge weights by $c\cdot \eps$, where $c$ is the minimum positive weight, we can assume that all edge weights are either zero or at least $1/\eps\geq 1$. When returning an estimate, we will scale the estimate back by $c\cdot \eps$.

Suppose at query time (for $u\notin V(P)$ and $v\in V(P)$), one is given a crude estimate interval $[x,y]$ satisfying $\dist_G(u,v)\in [x,y]$, where $x\geq 1/\eps$. Let $p\geq 0$ be the largest integer satisfying $(1+\eps)^{p}\leq \eps\cdot x$,
and let $q$ be the smallest integer such that $(1+\eps)^{q}\geq y$. Then, if
we further relax the estimate formula to the following:
\[d'(u,v):=\min\left(\dist_P(X_{P,u,-1},v),\min_{j\in I_{P,u}\cap [p,q]}\left\{(1+\eps)^j+\dist_P(X_{P,u,j},v)\right\}\right),\]
$\dist'(u,v)$ remains a valid $(1+\eps)$-approximate estimate of $\dist_G(u,v)$. Indeed, since we minimize over a smaller subset of indices $j$ than in~\Cref{l:apx-correct}, 
the lower bound $d'(u,v)\geq \dist_G(u,v)$ is preserved.
For the upper bound, let us recall the argument from~\Cref{l:apx-correct} for the case $\dist_G(u,v)>0$ (the zero-distance case is handled exactly as before). Let $b^*\in \bnd{P}$ be the latest vertex of $\bnd{P}$ on some shortest $u\to v$ path in $G$. Then, there exists some smallest $i$ such that $b^*\in X_{P,u,i}$, and $(1+\eps)^i\leq (1+\eps)\cdot \dist_G(u,b^*)$. If $i\in [p,q]$, 
then the argument from~\Cref{l:apx-correct} goes through. Note that $i>q$ cannot happen,
since then we would have $\dist_G(u,b^*)>(1+\eps)^{q}\geq y\geq \dist_G(u,v)$.
On the other hand, if $i<p$, then $b^*\in X_{P,u,p}$ and
\[
d'(u,v)\leq (1+\eps)^p+\dist_P(X_{P,u,p},v)\leq \eps\cdot x+\dist_P(b^*,v)\leq \eps\cdot \dist_G(u,v)+\dist_G(u,v)=(1+\eps)\dist_G(u,v).
\]
Note that computing the relaxed estimate $d'(u,v)$ can be performed
in $O(\log{n}+\log(y/(\eps x))/\eps)=O(\log(ny/x)/\eps)$ time since
$|I_{P,u}\cap [p,q]|=O(\log_{1+\eps}(y/(\eps x)))$
and locating the boundary elements of $I_{P,u}\cap [p,q]$
can be performed via binary search in $O(\log{|I_{P,u}|})=O(\log{r})$ time given the interval $[x,y]$.

Observe that if we could efficiently find an estimate interval $[x,y]$ such that $x\geq 1/\eps$ and $y/x=\poly{n}$, then the query procedure would run in $O(\log(n)/\eps)$ time.
The following observation shows that using bottleneck distances to produce such rough estimates works well.
\begin{observation}
For any $u,v\in V$, we have $\beta_G(u,v)\leq \dist_G(u,v)< n\cdot \beta_G(u,v)$.
\end{observation}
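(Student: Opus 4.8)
The plan is to prove the two inequalities separately, each via a one-line argument about a well-chosen $u\to v$ path, handling the degenerate cases (no $u\to v$ path, or $\dist_G(u,v)=0$) by hand.

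For $\beta_G(u,v)\le \dist_G(u,v)$: I would take a shortest $u\to v$ path $\pi$ in $G$ (if no such path exists, both sides equal $+\infty$ and there is nothing to prove). Since all edge weights are non-negative, the bottleneck length of $\pi$ — its maximum edge weight — is at most the sum of the edge weights on $\pi$, which is exactly $\dist_G(u,v)$; and $\beta_G(u,v)$, being the minimum bottleneck length over all $u\to v$ paths, is at most the bottleneck length of $\pi$.

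For $\dist_G(u,v)<n\cdot \beta_G(u,v)$: I would take a $u\to v$ path $\pi'$ whose maximum edge weight equals $\beta_G(u,v)$, and then pass to a simple $u\to v$ path using a subset of the edges of $\pi'$ (this cannot increase the maximum edge weight). A simple path has at most $n-1$ edges, each of weight at most $\beta_G(u,v)$, so its total weight — an upper bound on $\dist_G(u,v)$ — is at most $(n-1)\cdot \beta_G(u,v)$. When $\beta_G(u,v)>0$ this yields $\dist_G(u,v)\le (n-1)\beta_G(u,v)<n\beta_G(u,v)$, as claimed.

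The lone subtlety is the case $\beta_G(u,v)=0$: non-negativity of the weights forces a bottleneck-$0$ path to consist entirely of zero-weight edges, so $\dist_G(u,v)=0$ as well; hence $\beta_G(u,v)=0 \iff \dist_G(u,v)=0$, and the strict form of the bound is only needed (and, in the application, only invoked) when $\dist_G(u,v)>0$, in which case $\beta_G(u,v)>0$ holds by this equivalence. I do not expect any genuine obstacle here — the whole argument is elementary once one fixes the extremal paths in the two directions.
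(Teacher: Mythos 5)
Your proposal is correct and follows essentially the same argument as the paper: bound the bottleneck length of a shortest path by its total weight for the first inequality, and bound the total weight of a simple bottleneck-optimal path by $(n-1)\cdot\beta_G(u,v)$ for the second. You are in fact slightly more careful than the paper, which silently ignores the degenerate case $\beta_G(u,v)=0$ where the strict inequality as literally stated fails; your observation that $\beta_G(u,v)=0\iff\dist_G(u,v)=0$ under non-negative weights, and that the surrounding application only invokes the strict bound when $\dist_G(u,v)>0$, is exactly the right way to dispose of it.
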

\begin{proof}
This is trivial if $\dist_G(u,v)=\infty$.
Otherwise, if $Q$ is a shortest $u\to v$ path in $G$, then clearly
all edges on $Q$ are at most $\dist_G(u,v)$, so indeed $\beta_G(u,v)\leq \dist_G(u,v)$.
If $Q'$ is the shortest $u\to v$ simple path in terms of the bottleneck distance, then all edges
on $Q'$ have weight at most $\beta_G(u,v)$. So the standard weight of $Q'$ is at most
$(n-1)\cdot \beta_G(u,v)$ which implies the desired upper bound on $\dist_G(u,v)$.
\end{proof}
The above observation implies that the estimate interval $[x,y]$ such that $y/x\leq n$
and $\dist_G(u,v)\in [x,y]$ can be found using a bottleneck distance oracle.
In particular, if $x=0$ then $y=0$ and we can conclude that $\dist_G(u,v)=0$. Otherwise, we have $x\geq 1/\eps$ as desired.
Recall from~Sections~\ref{sec:decr-total-update}~and~\ref{sec:bottleneck} that a bottleneck distance oracle with
$O(n^2/\sqrt{r}+nr^{(h-1)/2})$
space and $O(\log{n})$ query time can be constructed in $\Ot(n^2/\sqrt{r}+nr^{(h-1)/2})$ time.
Note that these space and construction time bounds match those of~\Cref{sec:bounded-ratio} for the case of unbounded~$W$. We have thus proved the following theorem.

\begin{theorem}
Let $G$ be a $K_h$-minor-free directed graph with non-negative real edge weights. Let $\eps\in [1/\poly(n),1]$.
There exists an $O(n^{2-1/h})$-space data structure supporting arbitrary-pair $(1+\eps)$-approximate distance queries in $G$ in $O(\log(n)/\eps)$ time. The data structure can be constructed in $\Ot(n^{2-1/h})$ time.
\end{theorem}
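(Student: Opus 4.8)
The plan is to reuse the bounded-aspect-ratio construction of~\Cref{sec:bounded-ratio} essentially verbatim, noting that after the pruning optimization (restricting to the index set $I_{P,u}$) its space and, up to $\polylog$ factors, preprocessing cost are both $O\!\left(n^2/\sqrt{r}+nr^{(h-1)/2}\right)$ \emph{independently of $W$}, so that only the \emph{query} time $O(\ell)$ still depends on the aspect ratio. The goal is therefore to shrink, at query time, the range of scales $j$ over which we minimize --- from all of $\{-1,0,\ldots,\ell\}$ down to a window of polylogarithmic length --- using a cheap coarse estimate of $\dist_G(u,v)$ to locate that window. First I would normalize the weights: dividing every edge weight by $c\eps$, where $c$ is the smallest positive weight, makes every positive weight at least $1/\eps\ge 1$ (the scaling is harmless; estimates are scaled back by $c\eps$ at the end).

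Next I would verify that the estimate formula of~\Cref{sec:bounded-ratio}, relaxed to minimize only over $j\in I_{P,u}\cap[p,q]$ --- with $p$ the largest integer satisfying $(1+\eps)^p\le\eps x$ and $q$ the smallest with $(1+\eps)^q\ge y$, for any interval $[x,y]\ni\dist_G(u,v)$ having $x\ge 1/\eps$ --- remains a valid $(1+\eps)$-approximation. The lower bound $\dist'(u,v)\ge\dist_G(u,v)$ is immediate, since we minimize over a subset of the indices used in~\Cref{l:apx-correct}. For the upper bound I would mimic the proof of~\Cref{l:apx-correct}: let $b^*\in\bnd{P}$ be the last boundary vertex of $P$ on a shortest $u\to v$ path and $i$ the smallest index with $b^*\in X_{P,u,i}$, so $(1+\eps)^i\le(1+\eps)\dist_G(u,b^*)$; if $i\in[p,q]$ the original argument applies, $i>q$ is impossible because it would force $\dist_G(u,b^*)>(1+\eps)^q\ge y\ge\dist_G(u,v)$, and if $i<p$ then $b^*\in X_{P,u,p}$ with $(1+\eps)^p\le\eps x\le\eps\,\dist_G(u,v)$, so $\dist'(u,v)\le(1+\eps)^p+\dist_P(X_{P,u,p},v)\le(1+\eps)\dist_G(u,v)$ (the degenerate case $\dist_G(u,v)=0$ is handled via the index $-1$ exactly as before). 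This query runs in $O(\log r+|I_{P,u}\cap[p,q]|)=O(\log(ny/x)/\eps)$ time, the $\log r$ coming from a binary search for the endpoints of $I_{P,u}\cap[p,q]$.

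It then remains to produce, at query time, a crude interval $[x,y]$ with $\dist_G(u,v)\in[x,y]$, $y/x\le n$, and $x\ge 1/\eps$ whenever $\dist_G(u,v)>0$; then the query time becomes $O(\log(n)/\eps)$. Here I would use the elementary sandwich $\beta_G(u,v)\le\dist_G(u,v)<n\cdot\beta_G(u,v)$ (a bottleneck-optimal simple $u\to v$ path has at most $n-1$ edges, each of weight at most $\beta_G(u,v)$), so $[x,y]=[\beta_G(u,v),\,n\,\beta_G(u,v)]$ works, with $\beta_G(u,v)=0$ forcing $\dist_G(u,v)=0$. The value $\beta_G(u,v)$ is obtained in $O(\log n)$ time from the bottleneck distance oracle of~\Cref{l:bottleneck-oracle}, whose space and construction cost are both $O\!\left(n^2/\sqrt{r}+nr^{(h-1)/2}\right)$ (up to $\polylog$), matching the main data structure. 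Finally, setting $r=n^{2/h}$ balances $n^2/\sqrt{r}$ against $nr^{(h-1)/2}$, yielding $O(n^{2-1/h})$ space, $\Ot(n^{2-1/h})$ preprocessing, and $O(\log(n)/\eps)$ query time.

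The only genuinely delicate point I expect is the bookkeeping that ties the window $[p,q]$ to the coarse oracle: one needs the polynomial \emph{ratio} bound $y/x\le n$ from the bottleneck sandwich (not merely some containing interval) together with the lower bound $x\ge 1/\eps$ guaranteed by the weight normalization, and one must check that the separately built bottleneck oracle stays within the resource budget of the approximate oracle --- all of which hold for $r=n^{2/h}$. Everything else is a direct reuse of the machinery developed in Sections~\ref{sec:bounded-ratio},~\ref{sec:decr-total-update},~and~\ref{sec:bottleneck}.
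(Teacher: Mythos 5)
Your proposal is correct and follows essentially the same route as the paper's own proof: normalize weights so positive weights are at least $1/\eps$, restrict the minimization to the window $I_{P,u}\cap[p,q]$ determined by a coarse interval $[x,y]$, verify the $(1+\eps)$-guarantee by the same three-case analysis on the smallest index $i$ with $b^*\in X_{P,u,i}$, and obtain $[x,y]$ from the bottleneck oracle via the sandwich $\beta_G(u,v)\le\dist_G(u,v)<n\cdot\beta_G(u,v)$. The resource accounting (including $r=n^{2/h}$) also matches the paper.
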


\clearpage

\bibliographystyle{alphaurl}
\bibliography{ref}

\pagebreak
\appendix

\section{Proof of the minor building lemma}
\label{ap:minor_building}

The proof in this section for graphs where shortest paths are unique is (implicitly) presented in \cite{LeW24}; we reproduce it here for completeness. 

\begin{restatable}[Minor building via unique shortest paths~\cite{LeW24}]{lemma}{minorbuildingunique}
\label{lem:minor_building_unique}
Let $G = (V, E)$ be a graph where all shortest paths are unique.
Suppose we have a set of vertices $v_1, ..., v_d$ such that for every $1\le i < j \le d$
there is a vertex $t_{ij}$ that satisfies the following:
\begin{enumerate}
    \item[(1)]  For all distinct $i,j,p,q\in [d]$, $\pi(t_{ij} , v_i)$ and $\pi(t_{pq} , v_p)$ are vertex disjoint.
    \item[(2)]  For all distinct $i,j,p\in [d]$, $V(\pi(t_{ij} , v_i)) \cap V(\pi(t_{jp} , v_j)) \subseteq \{t_{ij}\}$.
    \item[(3)] For all distinct $i, j \in [d]$, $\pi(t_{ij} , v_i)$ and $\pi(t_{ij} , v_j)$ are internally vertex disjoint. 
\end{enumerate}
\text{Then $G$ contains $K_d$ as a minor.} 
\end{restatable}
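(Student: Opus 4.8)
The plan is to read off a $K_d$-minor model directly from the given shortest paths. For an unordered pair $\{i,j\}$ write $Q_{ij}:=\pi(t_{ij},v_i)$ and $Q_{ji}:=\pi(t_{ij},v_j)$ for the two ``legs'' leaving $t_{ij}$; by hypothesis~(3), together with uniqueness of shortest paths (which also forces $v_i\notin Q_{ji}$, since otherwise $Q_{ij}$ would be a prefix of $Q_{ji}$ and the two legs would share more than $t_{ij}$), the legs meet only at $t_{ij}$, so the reverse of $Q_{ij}$ followed by $Q_{ji}$ is an honest $v_i$--$v_j$ path through $t_{ij}$. I would then define, for each $i\in[d]$, the branch set
\[
B_i\;:=\;\bigcup_{j>i}V(Q_{ij})\;\cup\;\bigcup_{j<i}\bigl(V(Q_{ij})\setminus\{t_{ij}\}\bigr),
\]
the union of all legs ending at $v_i$, with the tie-break that each branch vertex $t_{ij}$ is awarded to the endpoint of smaller index. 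Connectivity of $B_i$ is immediate: each $Q_{ij}$ is a path through $v_i$, and deleting its far endpoint $t_{ij}$ (which equals $v_i$ only in a degenerate case handled separately) keeps it connected, so $B_i$ is a union of connected sets all containing $v_i$.

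Next I would exhibit, for each pair $\{i,j\}$ with $i<j$, an edge of $G$ between $B_i$ and $B_j$: the vertex $t_{ij}$ lies in $B_i$ by the tie-break, the neighbour $a$ of $t_{ij}$ along $Q_{ji}$ lies in $B_j$ (it is on the leg $Q_{ji}$ to $v_j$ and $a\neq t_{ij}$), and the edge $t_{ij}a$ of $G$ joins the two branch sets. Hence $B_1,\dots,B_d$ form a valid $K_d$-minor model as soon as they are pairwise disjoint, which is the remaining and substantially harder task.

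Suppose $x\in B_i\cap B_p$ with $i\neq p$; then $x\in V(Q_{ij})$ for some $j\neq i$ and $x\in V(Q_{pq})$ for some $q\neq p$, and I would case on how $\{i,j\}$ and $\{p,q\}$ intersect. If $\{i,j\}\cap\{p,q\}=\varnothing$, hypothesis~(1) makes $Q_{ij}$ and $Q_{pq}$ vertex-disjoint, a contradiction. If $\{i,j\}=\{p,q\}$ (so $j=p$, $q=i$), then $x\in V(\pi(t_{ip},v_i))\cap V(\pi(t_{ip},v_p))=\{t_{ip}\}$ by~(3), but $t_{ip}$ was awarded to only one side, a contradiction. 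If the pairs share the single index $q=i$ or $j=p$, the matching instance of~(2) — the shared index playing the role of the middle index there — again forces $x=t_{ip}$, a contradiction. The one configuration not dispatched by~(1)--(2) is $j=q$ with $i,j,p$ pairwise distinct, i.e.\ $x$ lying on a leg to $v_i$ and a leg to $v_p$ whose index-pairs share the \emph{third} index $j$; I expect this to be the crux of the whole proof. Here I would push on~(3) and uniqueness — chaining $V(Q_{ij})\cap V(\pi(t_{jp},v_j))\subseteq\{t_{ij}\}$ from~(2) with $V(\pi(t_{jp},v_j))\cap V(Q_{pj})=\{t_{jp}\}$ from~(3) to try to pin $x$ down — and, should slack remain, fall back to assigning every vertex of $\bigcup_{a,b}V(Q_{ab})$ to the branch set of \emph{minimum} index that reaches it, which makes disjointness automatic and instead reduces the proof to re-checking connectivity and edge-survival under that assignment, via the same case analysis. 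Nailing this last case, together with the degeneracies ($t_{ij}=v_i$, or $t_{ij}$ coinciding with another branch vertex), is where essentially all the difficulty of the lemma lies; everything else is bookkeeping.
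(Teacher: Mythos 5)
Your approach differs from the paper's, and it has a genuine gap exactly where you suspect. The paper observes that $\Bunch(v_i):=\bigcup_{j}\pi(t_{ij},v_i)$ is a \emph{tree} (all these paths end at $v_i$ and shortest paths are unique), proves in \Cref{lem:end_points} that each $t_{ip}$ is a leaf of $\Bunch(v_i)$ or of $\Bunch(v_p)$, contracts every non-leaf of $\Bunch(v_i)$ into $v_i$, and then uses hypothesis~(3) to contract the surviving length-two paths through the $t_{ij}$'s. Your construction forgoes this tree structure and tries to read off explicit branch sets with a min-index tie-break on $t_{ij}$; the tree view together with \Cref{lem:end_points} is the structural device you are missing.

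The case you single out --- $x\in V(\pi(t_{ij},v_i))\cap V(\pi(t_{pj},v_p))$ with $i,j,p$ pairwise distinct, the two index-pairs sharing their \emph{non}-terminal index $j$ --- really is unprotected by hypotheses (1)--(3) as literally stated: (1) requires all four indices distinct; (2) only constrains a leg into $v_a$ against a leg into $v_b$ where $b$ also belongs to the first index-pair; (3) concerns a single pair. Neither of your repair ideas closes it: chaining (2) with (3) via $\pi(t_{jp},v_j)$ never touches $V(\pi(t_{ij},v_i))\cap V(\pi(t_{pj},v_p))$ itself, and the minimum-index reassignment merely relocates the difficulty to edge-survival between $B_i$ and $B_j$, which then needs the very disjointness you lack. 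It is worth knowing that the paper's proof of \Cref{lem:end_points} meets the same subcase ($j=q$) and cites~(2) there, but no literal instance of~(2) applies to $\pi(t_{ij},v_i)\cap\pi(t_{pj},v_p)$; the argument is really rescued by the fact that the $t_{ij}$'s constructed in the application satisfy the stronger disjointness --- the proof of \Cref{claim2} never uses $j\neq q$ --- so hypothesis~(1) should be read as ``for all $i\neq p$ and all $j,q\notin\{i,p\}$''. Under that reading your case analysis closes cleanly. Separately, you would still need to handle the degeneracy $t_{ij}=v_j$, which you flag but do not resolve: there $v_j=t_{ij}\in B_{\min(i,j)}\cap B_j$, so disjointness of your branch sets already fails before the crux is reached.
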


For convenience, we will define $t_{ji} = t_{ij}$ for $j>i$.
We define $\Bunch(v_i)$ to be a collection of edges consisting of the undirected versions of the edges of $\bigcup_j \pi(t_{ij}, v_i)$. 
As all shortest paths are unique, $\Bunch(v_i)$ is a tree.
We say $t_{ij}$ is an \emph{endpoint} of $\Bunch(v_i)$ if $t_{ij}$ is a leaf in the tree of $\Bunch(v_i)$.

We first prove the following lemma.
\begin{lemma} \label{lem:end_points}
    If conditions (1), and (2) of \Cref{lem:minor_building_unique} hold, then $t_{ip}$ is an endpoint of $\Bunch(v_i)$, $\Bunch(v_p)$, or both.
\end{lemma}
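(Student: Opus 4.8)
The plan is to argue by contradiction: assume the auxiliary vertex $t:=t_{ip}$ is an internal (non‑leaf) vertex of both $\Bunch(v_i)$ and $\Bunch(v_p)$. First I would record the structure of a bunch. Since all shortest paths are unique, as soon as two of the paths $\pi(t_{ik},v_i)$ share a vertex they coincide from that vertex onward, so $\Bunch(v_i)$ is in fact an arborescence oriented toward $v_i$ (every vertex other than $v_i$ has a single out‑arc), whose leaves are exactly its in‑degree‑zero vertices and hence lie among $\{t_{ik}:k\ne i\}$. Consequently, $t$ being a non‑leaf of $\Bunch(v_i)$ means that the set $J_i\subseteq[d]\setminus\{i,p\}$ of indices $k$ for which $t$ is an internal vertex of $\pi(t_{ik},v_i)$ is nonempty; define $J_p\subseteq[d]\setminus\{i,p\}$ analogously for $\Bunch(v_p)$, also nonempty. (If $t$ equals $v_i$ or $v_p$, then $t$ is a sink of the corresponding arborescence and a direct check shows it is a leaf, so I would set this degenerate case aside.)

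The principal case is then disposed of by condition (1). Suppose one can choose $k\in J_i$ and $k'\in J_p$ with $k\ne k'$. Since also $k\ne p$ and $k'\ne i$, the indices $i,k,p,k'$ are pairwise distinct, so condition (1) forces $V(\pi(t_{ik},v_i))\cap V(\pi(t_{pk'},v_p))=\emptyset$ -- contradicting the fact that $t$ lies on both paths. Hence no such pair exists, which forces $|J_i|=|J_p|=1$ and $J_i=J_p$; write $J_i=J_p=\{j\}$ with $j\notin\{i,p\}$.

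It remains to treat the case $J_i=J_p=\{j\}$, and I expect this to be the crux of the proof. Now $t$ is internal to both $\pi(t_{ij},v_i)$ and $\pi(t_{jp},v_p)$ (with $t\notin\{t_{ij},t_{jp}\}$), and, because these are the only paths of their respective bunches through $t$, each of $\Bunch(v_i),\Bunch(v_p)$ is a ``$v$‑side'' tree with a pendant path leading up to $t$ attached, whose bottom vertex is $t_{ij}$, resp.\ $t_{jp}$. I would first invoke condition (2), with the triples $(i,j,p)$ and $(p,j,i)$, to show that $t$ -- and in fact the whole interior of either pendant path -- avoids $\pi(t_{ij},v_j)$ and $\pi(t_{jp},v_j)$, so that $t\notin V(\Bunch(v_j))$; and I would use condition (3) on the pairs $(\pi(t_{ij},v_i),\pi(t_{ij},v_j))$ and $(\pi(t_{jp},v_p),\pi(t_{jp},v_j))$ to force $\pi(t_{ij},v_j)$ (resp.\ $\pi(t_{jp},v_j)$) to leave $t_{ij}$ (resp.\ $t_{jp}$) along an edge distinct from the one entering the pendant path. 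Tracing the resulting configuration of pairwise‑disjoint paths around $t$ -- and using that $\pi(t_{ij},v_j)$ and $\pi(t_{jp},v_j)$ must merge somewhere inside $\Bunch(v_j)$ -- should yield a contradiction, completing the argument that $t$ is a leaf of $\Bunch(v_i)$ or of $\Bunch(v_p)$. The difficulty here, and where I expect to spend the most effort, is precisely that condition (1) no longer applies (only the three indices $i,j,p$ appear), so the contradiction must be squeezed out of the finer vertex‑disjointness guaranteed by condition (2) -- and, it seems, condition (3) of \Cref{lem:minor_building_unique}, which also holds in this setting -- together with the arborescence structure of the bunches.
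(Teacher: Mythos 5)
Your reduction to two cases is sound, and the first case matches the paper's argument: if $t_{ip}$ is interior to both bunches then some $\pi(t_{ik},v_i)$ and some $\pi(t_{pk'},v_p)$ pass properly through $t_{ip}$, and whenever one can take $k\neq k'$ the indices $i,k,p,k'$ are pairwise distinct and condition (1) kills the configuration. The problem is the remaining case $J_i=J_p=\{j\}$, which you correctly identify as the crux but do not prove: ``tracing the resulting configuration \dots should yield a contradiction'' is a hope, not an argument, and the entire content of the lemma sits exactly there. (For comparison, the paper disposes of this case in one line by asserting that $t_{ip}\in V(\pi(t_{ij},v_i))\cap V(\pi(t_{jp},v_p))$ ``contradicts condition (2)''.)

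Moreover, the contradiction you intend to extract from conditions (2) and (3) does not exist, so this route cannot be completed. Condition (2) only constrains pairs of the form $(\pi(t_{ab},v_a),\pi(t_{bc},v_b))$, where the second path targets the vertex $v_b$ of the shared subscript; the pair you must separate is $(\pi(t_{ij},v_i),\pi(t_{jp},v_p))$, with targets $v_i$ and $v_p$, and no instance of (2) addresses it. Concretely, for $d=3$ one can build a weighted graph in which $\pi(t_{13},v_1)$ and $\pi(t_{23},v_2)$ each pass through $t_{12}$ and meet only there (four otherwise disjoint pendant paths out of $t_{12}$ toward $t_{13}$, $t_{23}$, $v_1$, $v_2$), while $\pi(t_{13},v_3)$ and $\pi(t_{23},v_3)$ run off disjointly to $v_3$; with generic weights all shortest paths are unique, conditions (1) (vacuous for $d=3$), (2) and even (3) hold, yet $t_{12}$ is interior to both $\Bunch(v_1)$ and $\Bunch(v_2)$. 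What actually closes this case in the paper's application is that the proof of \Cref{claim2} never uses $j\neq q$: the same four-term inequality shows $\pi(t_{ij},v_i)$ and $\pi(t_{pq},v_p)$ are disjoint whenever $i\neq p$, $j\neq p$ and $q\neq i$, in particular when $j=q$. So the missing case must be handled by this strengthened form of condition (1) taken as a hypothesis, not by conditions (2)--(3). A further reason to drop condition (3) from your argument: \Cref{lem:end_points} is later invoked for the perturbed graph $\Ghat$, where the paper explicitly notes condition (3) is unavailable.
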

\begin{proof}
Suppose to the contrary that $t_{ip}$ is neither an endpoint of $\Bunch(v_i)$ nor $\Bunch(v_p)$.
Then there exists an endpoint $t_{ij}$ such that the path $\pi(t_{ij} , v_i)$ passes through $t_{ip}$.
Similarly, there must exist an endpoint $t_{pq}$ such that the path $\pi(t_{pq} , v_p)$ passes through $t_{ip}$.
This means that $V(\pi(t_{ij}, v_i)) \cap  V(\pi(t_{pq}, v_p)) \supseteq \{t_{ip}\}$.
If $j\neq q$ this contradicts condition (1), and otherwise when $j=q$ this contradicts (2).
\end{proof}

\begin{figure}
    \centering
    \begin{subfigure}{0.45\textwidth}
    \includegraphics[page=1]{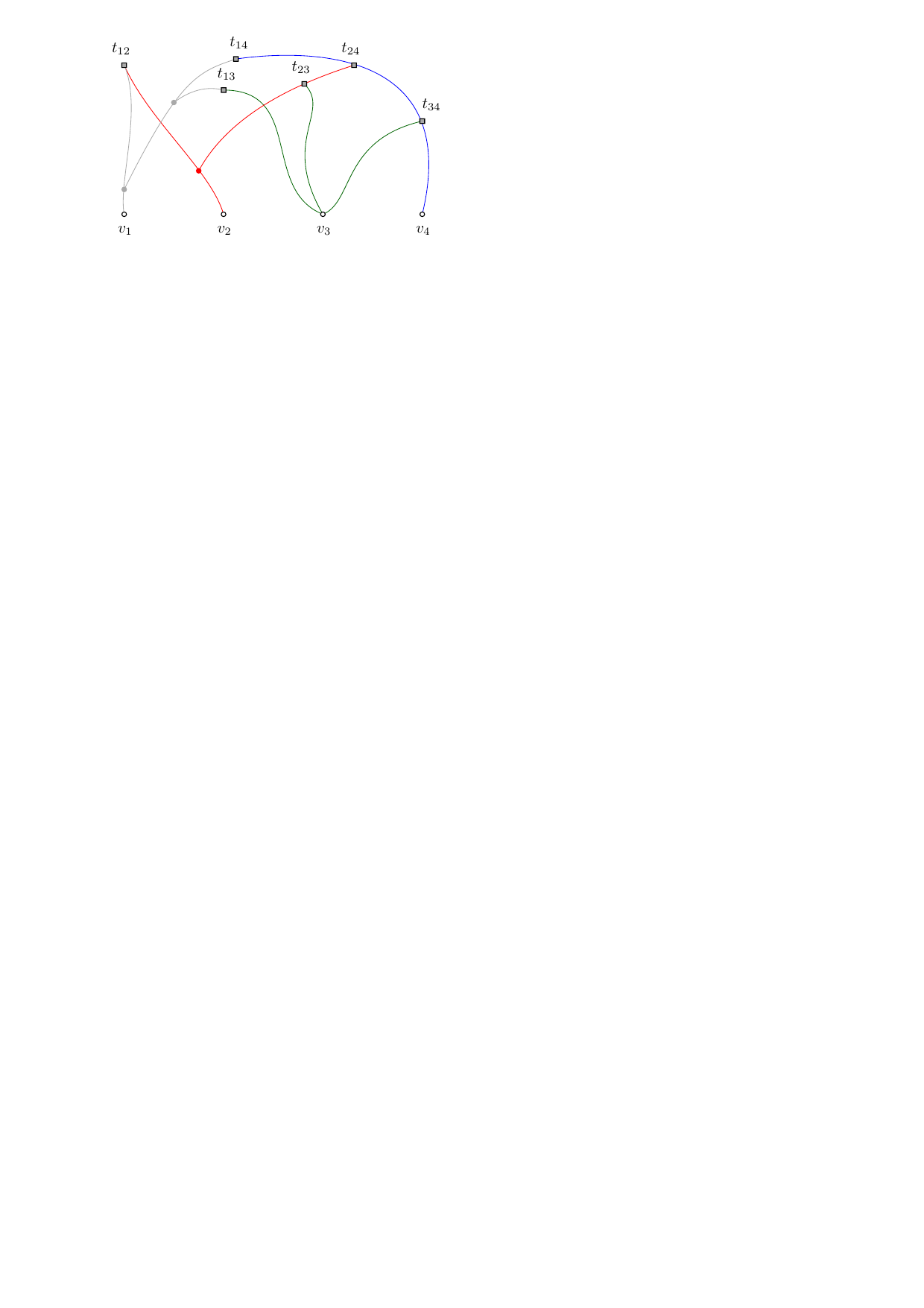}
    \caption{Example $H$ with bunches in color.}
    \label{fig:minor_building_proof1}
    \end{subfigure}
    \qquad
    \begin{subfigure}{0.45\textwidth}
    \includegraphics[page=2]{graphics/minor_building_proof.pdf}
    \caption{$H'$ obtained after contraction of $H$.}
    \label{fig:minor_building_proof2}
    \end{subfigure}
    \caption{Illustrations for proof of \Cref{lem:minor_building_unique}.}
\end{figure}

Now we present the proof of \Cref{lem:minor_building_unique}.
\begin{proof}[Proof of \Cref{lem:minor_building_unique}]
Let $H$ denote the graph induced by $\bigcup_i \Bunch(v_i)$ (see \Cref{fig:minor_building_proof1}). We show that $H$ contains $K_d$ as a minor. 

In $H$, for every $v_i$, contract all vertices that are not endpoints of $\Bunch(v_i)$ to $v_i$.
Call this resulting graph $H'$ (see \Cref{fig:minor_building_proof2}).
By \Cref{lem:end_points}, for every $i<j$, we have that either $v_i$ was not contracted with $t_{ij}$, or $v_j$ was not contracted with $t_{ij}$, or both.
Thus in $H'$, between every $v_i$ and $v_j$, we either have a direct edge if $t_{ij}$ was contracted, or a path of two edges with $t_{ij}$ as the intermediate vertex (by condition (3)).
Contracting all such paths of length $2$ in $H'$ results in $K_d$.
\end{proof}

Now we are ready to present the proof of \Cref{lem:minor_building} that we restate here for convenience.

\minorbuilding*

Notice there are two major differences between \Cref{lem:minor_building} and \Cref{lem:minor_building_unique}. The first is that \Cref{lem:minor_building} does not have the condition (3) of \Cref{lem:minor_building_unique}. 
The second is that we don't assume the uniqueness of shortest paths, and require that (1) and (2) hold for all shortest paths. Our proof will deal with both of these differences, handling the first with the Isolation Lemma and the second by careful contractions.
\begin{proof}[Proof of \Cref{lem:minor_building}]
Let $\Ghat$ be the graph obtained from $G$ by adding a tiny perturbation of the weights that guarantees all shortest paths are unique, and all shortest paths in $\Ghat$ are shortest paths in $G$. The existence of $\Ghat$ is given by the Isolation Lemma of \cite{ValiantV86}. 
In $\Ghat$, the \Cref{lem:end_points} holds for $\Ghat$ if we define $\Bunch(v_i)$ to be the undirected versions of the edges of $\bigcup_j \pi_{\Ghat}(t_{ij}, v_i)$. However, as we pointed out earlier, we no longer have the condition (3) of \Cref{lem:minor_building_unique}, 
so the paths $\pi_{\Ghat}(t_{ij},v_i)$ and $\pi_{\Ghat}(t_{ij},v_j)$ might not be vertex disjoint.
Fortunately, as these paths are shortest paths and the shortest paths in $\Ghat$ are unique, only a prefix on the path 
$\pi_{\Ghat}(t_{ij}, v_i)$
can be shared with the path
$\pi_{\Ghat}(t_{ij}, v_j)$.

This suggests a simple modification of the proof of \Cref{lem:minor_building_unique} where we looked for a minor in the graph $H$ induced by $\bigcup_{i}\Bunch(v_i)$.
For each $i,j\in [d]$, we first 
contract the shared prefix of 
$\pi_{\Ghat}(t_{ij}, v_i)$
and $\pi_{\Ghat}(t_{ij}, v_j)$ into $t_{ij}$. Could we have contracted some vertex $t_{jp}$ in the process for some $p\in [d]-\{i, j\}$?  The answer is no, because by condition (2)
$V(\pi_{\Ghat}(t_{ij}, v_i)) \cap
V(\pi_{\Ghat}(t_{jp}, v_j)) \subseteq \{t_{ij}\}$. 

Thus we can proceed by contracting non-endpoints of whatever remains of $\Bunch(v_i)$ into $v_i$ as before, and get a graph $H'$ as in \Cref{fig:minor_building_proof2}. Contracting paths of length $2$ results in $K_d$.
\end{proof}

\noindent
\textbf{Remark.} It is possible to completely avoid the isolation lemma by carefully choosing specific shortest paths $\pi(t_{ij}, v_i)$ in $G$ to enforce that $\Bunch(v_i)$ is a tree and only prefixes of paths are shared between $\pi(t_{ij}, v_i)$ and $\pi(t_{ij}, v_j)$. However, the isolation lemma conveniently gives a canonical set of shortest paths, allowing us to avoid the need of explicitly describing the construction.
\end{document}